\def\sgn{\mathop{\rm sgn}\nolimits}
\newcommand{\e}[1]{\begin{equation}#1\end{equation}}
\newcommand{\enn}[1]{\begin{equation*}#1\end{equation*}}
\newcommand{\tab}[2]{\begin{array}{#1}#2\end{array}}
\newcommand{\sysnn}[2]{\enn{\left\{\tab{#1}{#2} \right. }}
\newcommand{\K}{\mathcal K^{\theta,\nu}}
\newcommand{\Dx}{\mathcal D_x^\theta}
\newcommand{\Dz}{\mathcal D_z^\theta}
\newtheorem{lem}{Lemma}[section]
\newtheorem{thm}{Theorem}[section]
\newtheorem{prop}{Proposition}[section]
\newdefinition{rmk}{Remark}
\numberwithin{equation}{section}
\begin{document}

\title{Hybrid resonance of 
 Maxwell's equations
in slab geometry\footnote{The 
collaboration leading to this article was started during a visit of Ricardo Weder to INRIA  Paris-Rocquencourt. Ricardo Weder thanks Patrick Joly for his kind hospitality. 
This research  was partially supported  by Consejo Nacional de Ciencia y
Tecnolog\'{\i}a (CONACYT) under project CB2008-99100-F. 
Lise-Marie Imbert-G\'erard 
and Bruno Despr\'es acknowledge 
 the support
of ANR under contract ANR-12-BS01-0006-01. Moreover 
this work was carried out within the framework of the European Fusion Development Agreement and 
the French Research Federation for Fusion Studies. 
It is supported by the European Communities under the contract of Association between Euratom and CEA. 
The views and opinions expressed herein do not necessarily reflect those of the European Commission.
}
}

\author[ljll]{Bruno Despr\'es}
\ead{despres@ann.jussieu.fr}

\author[ljll]{Lise-Marie Imbert-G\'erard\corref{cor1}}
\ead{imbert@ann.jussieu.fr}

\author[rw]{Ricardo Weder}
\ead{weder@unam.mx}

\cortext[cor1]{Corresponding author}

\address[ljll]{Laboratory Jacques Louis Lions, University
Pierre et Marie Curie, 
Bo\^{\i}te courrier 187, 
75252 Paris Cedex 05, France. }

\address[rw]{Departamento de F\'{\i}sica  Matem\'atica, 
Instituto de Investigaciones en Matem\'aticas 
Aplicadas y en Sistemas,
Universidad Nacional Aut\'onoma de M\'exico, Apartado Postal
 20-726, DF 01000, M\'exico. }

\begin{abstract}
Hybrid resonance is a physical
mechanism for the heating of a magnetic plasma.
In our context   hybrid resonance is
a solution of the 
time harmonic Maxwell's equations with smooth coefficients,
where the dielectric tensor is a non diagonal hermitian matrix.
The main  part of this work is dedicated to the  
construction  and analysis of  
a mathematical solution 
of the  hybrid resonance with the limit absorption principle. 
We prove that the limit solution is singular: it  consists
of a  Dirac mass at the origin plus a principal value and a smooth
square integrable function. The formula  obtained for the plasma
heating
is directly related to the singularity.
\end{abstract}
\begin{keyword}
Maxwell equations, anisotropic  dielectric tensor, hybrid resonance,
resonant heating, limit absorption principle.
\end{keyword}

\maketitle

\section{Introduction}

It is known 
in  plasma physics 
that Maxwell's equation in the context
of a strong background magnetic field
may develop  singular solutions even for 
smooth coefficients. This is related to what is called
the  hybrid
resonance \cite{chen,freidberg,branbila} for which we know
no mathematical analysis. Hybrid resonance shows
up in reflectometry experiments \cite{heuraux,gusakov} and heating devices
in fusion plasma \cite{dumont}. 
The energy deposit is resonant and  may exceed by far
the energy exchange which occurs in Landau damping \cite{freidberg,villani}.
The starting point of the analysis  is
from the linearization of Vlasov-Maxwell's equations
of a non homogeneous plasma
around bulk magnetic field $\mathbf B_0\neq 0$.
It yields the non stationary Maxwell's equations with
a linear current 
\begin{equation} \label{eq:cp5}
\left\{
\begin{array}{ll}
-\frac1{c^2}\partial_t \mathbf{E}+
\nabla \wedge \mathbf{B }=\mu_0 \mathbf{J}, &
\mathbf{J}= -e N_e \mathbf{u}_e, \\
\partial_t \mathbf{B}+
\nabla \wedge \mathbf{E }=0, \\
m_e  \partial_t \mathbf{u}_e=
-e\left(\mathbf{E}+ \mathbf{u}_e \wedge \mathbf{B}_0   \right)-
m_e \nu  \mathbf{u}_e. 
\end{array}
\right.
\end{equation}
The electric field is $\mathbf E$ and the magnetic field is 
$\mathbf B$.
The modulus of 
the background magnetic field
$|\mathbf B_0|$ and its direction $\mathbf b_0=\frac{\mathbf B_0}{|\mathbf B_0|} $ will be  assumed  constant in space 
for simplicity in our work.
The absolute value
of the charge of electrons is $e$, the mass of electrons is
$m_e$,
the velocity of light is $c=\sqrt{\frac1{\varepsilon_0 \mu_0}} $ where 
the permittivity of vacuum is $\varepsilon_0$ 
and  the permeability
of vacuum is $\mu_0$.
The third  equation 
 corresponds to moving electrons with velocity $ \mathbf{u}_e$
where the  electronic density $N_e$
is a given function of the space variable.
One implicitly
assumes  an ion bath, which is the reason of the 
friction between the electrons and the ions
with collision frequency $\nu$.
 Much more material about such models can be found in classical
physical textbooks \cite{freidberg,branbila}.
The loss of energy in domain $\Omega$
can easily be computed in the time domain
starting from
(\ref{eq:cp5}).  One obtains
$$
\frac{d}{dt}
\int_\Omega \left(
\frac{\varepsilon_0 \left| \mathbf{E}\right|^2}{2}
+
\frac{ \left| \mathbf{B}\right|^2}{2\mu_0}
+ \frac{m_e N_e  \left| \mathbf{u}_e\right|^2}{2}
\right)=
-\int_\Omega {\nu m_e N_e    \left| \mathbf{u}_e\right|^2
}+
\mbox{ boundary terms}.
$$
Therefore
$
{\cal Q}(\nu)=\int_\Omega {\nu m_e N_e    \left| \mathbf{u}_e\right|^2
}
$
represents the total loss of energy of the electromagnetic field
plus  the electrons in function of the collision frequency $\nu$. Since the 
 energy loss  is necessarily equal
to what is gained by the ions, it will be referred
to as the heating. We will show that in certain conditions
characteristic of the hybrid resonance in frequency domain,
the heating does not vanish for vanishing collision friction.
So a simple 
characterization of   resonant heating can be written as:  ${\cal Q}(0^+)> 0$.
This  apparent  paradox  is the subject of this work.

As we will prove,   the mathematical solution of the time frequency
formulation is  not square integrable. So that, 
hybrid resonance is  a non standard
phenomenon  in the context
of the  mathematical theory of Maxwell's equations for which
we refer to \cite{dautray:lions,cessenat,monk, weder}. 
The situation can be compared with the mathematical theory of 
metametarials. In  \cite{weder2,weder3}  the electric permittivity and 
magnetic permeability 
tensors are  degenerate -i.e. they have zero eigenvalues- in surfaces, but they remain  positive definite. In this case, the solutions are singular, but the problem remains coercive. See also \cite{chen:lipton}. 
In \cite{bonnet1,bonnet2,bordeaux} 
the coefficient changes in a discontinuous way from being positive to negative. In this situation coerciveness is lost, but as the absolute value of the coefficient is bounded below by a positive constant,  the solutions are regular. 
In our case we have both difficulties at the same time.  As  the
 coefficient $\alpha$ (see below) goes from being positive to negative  in 
a continuous way, its absolute value is zero at a point, and, in consequence, 
our problem is not coercive and there are singular solutions.


\subsection{Maxwell's equations  in frequency domain }


We introduce  the notations  needed to 
detail the physics of the problem and to 
formulate our main result.
Writing (\ref{eq:cp5}) in   the frequency domain, that is  
$\partial_t  = -i\omega$, yields
\begin{equation}\label{eq:cp1}
\left\{
\begin{array}{ll}
\frac1{c^2}i\omega  \mathbf{E}+
\nabla \wedge \mathbf{B }=-\mu_0 e  N_e \mathbf{u}_e, \\
-i\omega  \mathbf{B}+
\nabla \wedge \mathbf{E }=0, \\
-im_e  \omega \mathbf{u}_e=
-e\left(\mathbf{E}+ \mathbf{u}_e \wedge \mathbf{B}_0   \right)-
m_e \nu  \mathbf{u}_e. 
\end{array}
\right.
\end{equation}
One 
 computes the velocity using the  third equation 
\begin{equation} \label{eq:cp6}
\widetilde{\omega} \mathbf{u}_e+
\omega_c i  \mathbf{u}_e \wedge \mathbf{b}_0  
= -
\frac{e}{m_e}i\mathbf{E}
\end{equation}
where the cyclotron frequency is $\omega_c= \frac{e|\mathbf{B}_0|}{m_e}$,
$\mathbf{b}_0=\frac{\mathbf{B}_0}{|\mathbf{B}_0|}$ is the normalized
magnetic field and
$
\widetilde{\omega}=\omega +i \nu$ is the equivalent a priori
complex pulsation. 
This is a linear equation. Assuming that $\mathbf{b}_0=(0,0,1)$ one gets 
\begin{equation} \label{eq:cp12}
 \mathbf{u}_e=-\frac{e}{m_e}i
\left(
\begin{array}{ccc}
\frac{\widetilde{\omega}}{\widetilde{\omega}^2-\omega_c^2  } &
-i \frac{\omega_c}{\widetilde{\omega}^2-\omega_c^2  } 
 & 0\\
i \frac{\omega_c}{\widetilde{\omega}^2-\omega_c^2  } 
& \frac{\widetilde{\omega}}{\widetilde{\omega}^2-\omega_c^2  } & 0\\
0 & 0 & \frac1{\widetilde{\omega}}
\end{array}
\right)
\mathbf{E}.
\end{equation}
It is then easy to eliminate $\mathbf{u}_e$ from the first equation
of the system (\ref{eq:cp1})
and to obtain the time harmonic Maxwell's equation
\e{\label{eqEinit}
\nabla \wedge  \nabla \wedge
 \mathbf E - \left( \frac{\omega}{c} \right)^2 
\underline{\underline{\varepsilon}}(\nu)
\mathbf  E = 0,
}
where  $\omega$ is the frequency,
$c$ the velocity of light.
and  the dielectric tensor  is the one of the cold plasma approximation 
\cite{freidberg,chen}
\begin{equation} \label{eq:cp13}
\underline{\underline{\varepsilon}}(\nu)=
\begin{pmatrix}
1-\frac{\widetilde{\omega}\omega_p^2}{
\omega
\left(\widetilde\omega^2-\omega_c^2\right)}&i \frac{\omega_c \omega_p^2}{\omega \left(\widetilde\omega^2-\omega_c^2\right)}&0 \\
-i \frac{\omega_c \omega_p^2}{\omega \left(\widetilde\omega^2-\omega_c^2\right)}
&1-\frac{\widetilde{\omega} \omega_p^2}{
\omega\left( \widetilde\omega^2-\omega_c^2\right) }&0 \\
0&0& 1 - \frac{\omega_p^2}{\omega \widetilde{\omega}}
\end{pmatrix}
.
\end{equation}
The parameters of the dielectric tensor are
the cyclotron frequency 
$
\omega_c= \frac{e   |\mathbf B_0|}{m_e} 
$
 and 
the plasma frequency
$
\omega_p=\sqrt{\frac{e^2 N_e}{\varepsilon_0m_e}  }
$
 which depends on the electronic density $N_e$. 
We consider in this work $\omega\neq \omega_c$,
that is the frequency is away from the cyclotron frequency,
so that the dielectric tensor is a smooth bounded matrix in our work.
Considering (\ref{eq:cp6}) the heating 
is 
$$
{\cal Q}(\nu)=
\int_\Omega {\nu m_e N_e    \left| \mathbf{u}_e\right|^2
}=
- \mbox{ Re}
\left(
\int_\Omega  e N_e
\left(\mathbf{E}, \mathbf{u}_e   \right)
\right).
$$ 
One can eliminate the electron velocity 
in function of the electric field using 
(\ref{eq:cp12}) rewritten as
$\mathbf u_e  = -\frac{e}{m_e}i \frac{\omega  }{\omega_p^2} \left(
\mathbf I-\underline{\underline{\varepsilon}}(\nu) 
\right)\mathbf E$.
 Therefore
a third formula is  
\begin{equation} \label{eq:Qphys}
{\cal Q}(\nu)= 
{\omega \varepsilon_0 }
\mbox{ Im}\left(
\int_\Omega  
\left(\mathbf{E}, \underline{\underline{\varepsilon}}(\nu) \mathbf{E}   \right)
\right)
\end{equation}
where $ \underline{\underline{\varepsilon}}(\nu)$ is the dielectric tensor
(\ref{eq:cp13}).

A discussion of the heating in the limit
of small collision frequency 
 establishes
 the physical basis of the limit absorption
principle that will be used in this work. 
 Indeed physical values in fusion plasmas are such that
the collision frequency is much smaller than the frequency
($\nu << \omega$) which means that some simplifications
can be done in the dielectric tensor, as in \cite{freidberg} page 197.
The limit tensor for $\nu=0$  is
\e{ \label{eq:bd200}
\underline{\underline{\varepsilon}}(0)=
\begin{pmatrix}
1-\frac{\omega_p^2}{\omega^2-\omega_c^2}&i \frac{\omega_c \omega_p^2}{\omega \left(\omega^2-\omega_c^2\right)}&0 \\
-i \frac{\omega_c \omega_p^2}{\omega \left(\omega^2-\omega_c^2\right)}&1-\frac{\omega_p^2}{\omega^2-\omega_c^2}&0 \\
0&0& 1 - \frac{\omega_p^2}{\omega^2}
\end{pmatrix}
.
}
We notice that $\underline{\underline{\varepsilon}}(0)
=\underline{\underline{\varepsilon}}(0)^*$ is an hermitian matrix, so 
$\underline{\underline{\varepsilon}}(0)$
cannot be used alone to obtain a consistent 
evaluation of the heating.
Linearization of the dielectric tensor yields 
$\underline{\underline{\varepsilon}}(\nu)=
\underline{\underline{\varepsilon}}(0)+
\nu \underline{\underline{\varepsilon}}'(0)+O(\nu^2)$ with
$
\underline{\underline{\varepsilon}}'(0)=i
\left(
\begin{array}{ccc}
\lambda_1 &-i \lambda_2 & 0 \\
i \lambda_2 &  \lambda_1 &  0\\
0 & 0 & \lambda_3 
\end{array}
\right)
$
where 
$
\lambda_1=\frac{\omega_p^2\left(\omega^2 + \omega_c^2  \right)  }{
\omega \left(\omega^2-\omega_c^2   \right)^2  }$,
$\lambda_2=\frac{2 \omega_c \omega_p^2   }{\left(\omega^2-\omega_c^2  
 \right)^2   }
$
and 
$\lambda_3
=\frac{\omega_p^2}{\omega^3}$. Since $\lambda_1^2\geq \lambda_2^2$, one gets that
$-i \underline{\underline{\varepsilon}}'(0)$ is a symetric non 
negative 
 matrix. This correction term
is the one that generates
the heating in (\ref{eq:Qphys}).
 In the sequel we will consider the
 simplified linear approximation 
\begin{equation} \label{eq:espsimp}
\underline{\underline{\varepsilon}}(\nu)=
\underline{\underline{\varepsilon}}(0)+
i\nu \mathbf I,
\end{equation}
 yielding the physical basis of the limit absorption principle.


 \subsection{X-mode equations  in slab geometry}




The hybrid resonance concerns  more specifically 
the $2\times 2$ upper-left
block in (\ref{eq:bd200}),
which corresponds to the transverse electric (TE)
mode, $E=(E_x,E_y,0)$, and $E_x,E_y$, independent of  $z$.
In the limit case $\nu=0$, one gets the system
\e{\label{sysinit}
\left\{\tab{rrrr}{
W &+\partial_y E_x &- \partial_x E_y & =0,
\\  \partial_y W &-\alpha  E_x &-
i\delta 
 E_y&  =0,
\\ -\partial_x W &+ i\delta
E_x 
&-\alpha 
 E_y &=0,
}\right. 
}
where  $W$ and the magnetic field $B_z$ are proportional.
 The coefficients are 
\begin{equation} \label{eq:ad}
\alpha= \frac{\omega^2}{c^2} \left(
1-\frac{\omega_p^2}{\omega^2-\omega_c^2}
\right)
\qquad
\delta=
 \frac{\omega^2}{c^2} \times 
\frac{\omega_c \omega_p^2}{\omega \left(\omega^2-\omega_c^2\right)}.
\end{equation}
Simplified coefficients in slab geometry  will be defined below. 

\begin{figure}[ht]
\begin{center} 
\begin{tabular}{ccc}
\includegraphics[width=8.5cm,angle=0]{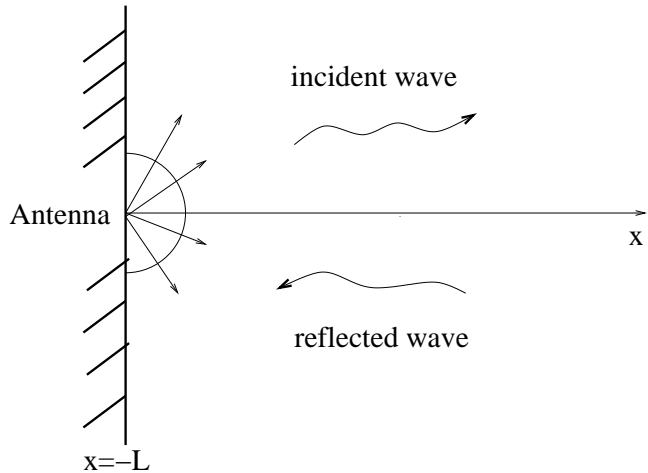} 
\end{tabular}
\end{center}
\caption{X-mode in slab geometry: the domain.
In a real physical device an antenna is on the wall on the
 left and sends an incident  electromagnetic wave
through a medium which is assumed infinite for simplicity.
The incident wave generates a reflected wave.
We will characterize the antenna by the knowledge of the 
non homogeneous boundary condition (\ref{eq:bc}).
The medium is filled with a plasma with dielectric
tensor given by (\ref{eq:bd200}).}
\label{figure:0}
\end{figure}
\begin{figure}[ht]
\begin{center} 
\begin{tabular}{ccc}
\includegraphics[width=8.5cm,angle=0]{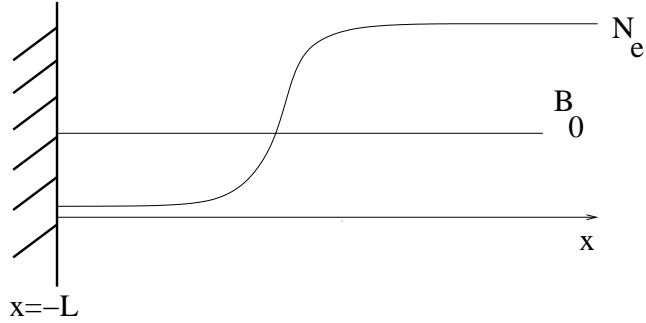} 
\end{tabular}
\end{center}
\caption{X-mode equations in slab geometry: the physical 
parameters. The electronic density $x\mapsto N_e(x)$ is low at the boundary,
and increases towards a plateau.
The background magnetic field $B_0$ is taken as constant for simplicity.}
\label{figure:1}
\end{figure}

In the plasma community this system is referred to as the X-mode
equations, where the letter X  stands for eXtraordinary
mode or eXtraordinary waves. We suspect the reason is the non
 standard behavior of the solutions of this system.
The case where $\omega=  \omega_c$, i.e., when the frequency of the incident wave, $\omega$, is equal to  the cyclotron frequency, $\omega_c$,   will not
be considered in this work. That is we consider that
$\omega\neq \omega_c$. If $\omega<\omega_c$ it is called a low
hybrid resonance. The other case 
$\omega>\omega_c$
is denoted as the upper hybrid resonance.
On the other hand we will assume that
the diagonal coefficient $\alpha$ is smooth and vanishes at $x=0$.
This configuration corresponds to  the  hybrid resonance.

To be more specific we  consider the  simplified 2D domain 
\enn{
\Omega=\left\{(x,y)\in \mathbb R^2, \quad
-L\leq x, \quad
y\in \mathbb{R}, \quad L >0\right\}.
}

Boundary conditions for the Maxwell's equations 
can be of usual types, that is metallic condition
$ n\wedge E =0$,  non homogeneous 
  absorbing boundary condition like
$
curl E + i \lambda n\wedge E =g$ on some parts of the boundary
or even natural absorbing boundary
condition at infinity. %
Concerning  the X-mode equations \eqref{sysinit} 
we consider  a non homogeneous 
 boundary condition 
\begin{equation} \label{eq:bc}
 W+ i\lambda n_x E_y=g \mbox{ on the left boundary }x=-L,
\qquad \lambda>0,
\end{equation}
which models a given source, typically a radiating antenna.
In real Tokamaks this antenna is used to heat  or 
to probe the plasma. 
Such devices are actually being studied
for the purposes of reflectometry and heating of 
magnetic fusion plasmas in the context of the international ITER project:
the ITER project is about the design of  new Tokamak
with enhanced 
fusion capabilities \cite{iter}.

\subsection{Coefficients in slab geometry}

We  consider slab geometry.
That is all   coefficients $\alpha$ and $\delta$ are functions only of the
variable $x$:
$
\partial_y \alpha=\partial_y \delta=0$. 
The main physical hypothesis is that
the extra-diagonal part of the dielectric tensor
is dominant at a finite number of  points, that is 
$$
\alpha(x_i)=0, \; \alpha'(x_i)\neq 0 \mbox{ and }\delta(x_i)\neq 0, \quad 
x_i\in \mathbb R, \; i=1, \dots, N.
$$
To fix the notations we add
other mathematical assumptions which are
reasonable in  the physical context
of idealized reflectometry or heating devices.
We suppose that $N=1$ and $x_1=0$.
 We   will use 
\begin{equation} \label{eq:bd1}
\delta \in {\cal C}^1[-L, \infty[, \qquad
\delta(0) \neq 0 ,
\end{equation} 
\begin{equation} \tag{H1}\label{H1}
\alpha \in {\cal C}^2[-L, \infty[, \qquad
\alpha(0)=0, \quad \alpha'(0)<0. 
\end{equation} 
Moreover 
\begin{equation}\tag{H2} \label{H2}
 \alpha_-\leq \alpha(x) \leq \alpha_+, \; \forall x\in [-L,\infty[,
\qquad
\mbox{ and }\;\;
0<r\leq 
\left| \frac{\alpha(x)}x
\right|
, \; \forall x\in [-L,H]
\end{equation}
where $H>0$.
\begin{figure}[ht]
\begin{center} 
\begin{tabular}{ccc}
\includegraphics[width=8.5cm,angle=0]{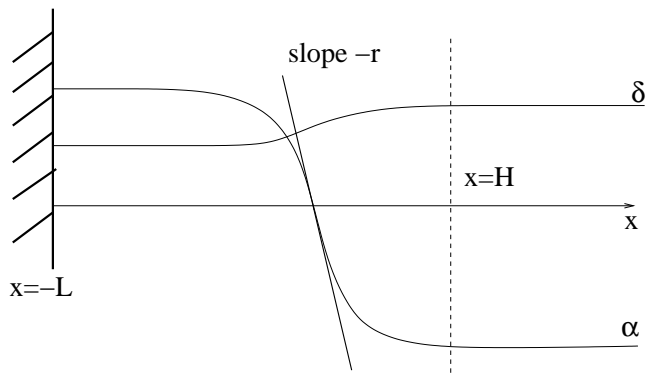} 
\end{tabular}
\end{center}
\caption{X-mode equations in slab geometry:  
parameters of the dielectric tensor deduced from the value
of the physical parameters described in figure \ref{figure:1}, assuming
that $\omega>\omega_c$.
The coefficient $\alpha$ 
decreases from positive to negative values.
It crosses the axis with a slope bounded from below by
$r$.
The coefficient $\delta$ is positive and bounded.
Since the electromagnetic wave is strongly absorbed for $x\geq H$,
we simplify 
by taking all  coefficients constant
for $x \geq H$ because it does not change the physics of the problem. 
}
\label{figure:2}
\end{figure}
We will also assume that the coefficients
are constant at large scale: there exists   $\delta_\infty$ and
$\alpha_\infty$  so that 
\begin{equation} \tag{H3}\label{H3}
\delta(x)=\delta_\infty \mbox{ and }\alpha(x)=\alpha_\infty\qquad 
H\leq x < \infty.
\end{equation}
Therefore $\alpha,\delta \in L^\infty(-L,\infty)$.
We also  assume the problem is coercive at infinity,
\begin{equation} \tag{H4}\label{H4}
\alpha_\infty^2-\delta^2_\infty>0.
\end{equation}
An additional condition is defined by
\e{\tag{H5}\label{H5}
4\|\delta\|_\infty^2 H < r.
}
It expresses the fact that the length of the transition 
zone between $x=0$ and $x=H$ is small with respect to the
other parameters of the problem.
One can refer to Figures \ref{figure:1} and \ref{figure:2}
for a graphical representation. This hypothesis is physically very
reasonable. 
 It is known in the physical community that  this problem
may  be highly singular
at the origin.
With these hypotheses, one can consider as well other coefficients 
 are now normalized $\omega=c=\varepsilon_0=\mu_0=1$.
As explained previously in (\ref{eq:espsimp}),
the solutions in the context of the limit absorption principle 
correspond to adding a complex part to the diagonal coefficient
$\alpha$, that is $\alpha$ is replaced by $\alpha+i \nu$.

\subsection{Main result}

Our main result can be summarized as follows. 
Following the 
 convention introduced  in $(7.1.3)$ from \cite{horm},
we denote by $\widehat{g}$ the Fourier transform of $g$,
$$
\widehat{g}(\theta):=  \int_{\mathbb R}\,g(y)\, e^{-i\theta y}\, dy.
$$
 We need the uniform  transversality assumption  \eqref{H6} 
which is a generalization of assumption  \eqref{H5}. See Section \ref{sec:main}.

\begin{thm} \label{th:main}
Assuming (\ref{H1}-\ref{H6}) and  $ g \in L^2(\mathbb R)$ with $ \widehat{g}$ of compact support, 
there exists a    solution  
of (\ref{sysinit}) with boundary condition (\ref{eq:bc}) that goes to zero at infinity.
This solution is in the sense of distributions and is constructed
with the limit absorption principle by taking the limit $\nu=0^+$ 
in (\ref{eq:fu10}).

A representation formula is 
\begin{equation} \label{eq:fu11}
\left(
\begin{array}{c}
E_x^+ \\
E_y^+ \\
W^+
\end{array}
\right)
(x,y)=\frac1{2\pi}
\int_{\mathbb{R}}
\frac{ \widehat{g}(\theta)}{   \tau^{\theta,+}}
\left(
\begin{array}{c}
P.V. \frac1{\alpha(x)} + \frac{i \pi}{\alpha'(0)} \delta_D
+ u_2^{\theta,+} \\
v_2^{\theta,+} \\
 w_2^{\theta,+}
\end{array}
\right)
 e^{i\theta y  }d\theta.
\end{equation} 
This formula 
depends on a certain transfer coefficient
$ \tau^{\theta,+}$ defined in (\ref{eq:fu2b}), and on three $L^2$ functions
$(u_2^{\theta,+},v_2^{\theta,+},w_2^{\theta,+})$ defined in 
Theorem  \ref{thm5.1}.
 Unless the source term  $g$  is identically zero,  the electric field $E_x$
does not belong to $L^1_{ {\rm loc}}\left((-L,\infty)\times \mathbb R\right)$.
 The other
 components  are always
more regular: in particular
  $E_y^+,W^+\in L^2\left((-L,\infty)\times  \mathbb R\right)$.

The value of the resonant 
heating is
\begin{equation} \label{eq:limheat}
{\cal Q}^+ =
\frac1{2}
\int_{\mathbb{R}}  \frac{  \left|
\widehat{g}(\theta) \right|^2 }{ \left|
\alpha'(0)
\right|
 \left|  \tau^{\theta,+} \right|^2 }
d\theta>0.
\end{equation}
\end{thm}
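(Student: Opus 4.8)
The plan is to exploit the slab geometry by taking the Fourier transform in $y$, which reduces the PDE system \eqref{sysinit} to a one-parameter family of boundary value problems for ordinary differential equations in $x$, indexed by the dual variable $\theta$. With $\partial_y\mapsto i\theta$ and the limit-absorption regularization $\alpha\mapsto\alpha+i\nu$ dictated by \eqref{eq:espsimp}, each reduced problem carries the source $\widehat g(\theta)$ through the boundary condition \eqref{eq:bc} at $x=-L$, together with the decay condition at $x=+\infty$ (well posed because \eqref{H4} makes the system coercive for $x\ge H$). For $\nu>0$ the coefficient $\alpha+i\nu$ never vanishes, so the reduced system is nonsingular and admits a unique decaying solution $(\widehat{E_x}^{\theta,\nu},\widehat{E_y}^{\theta,\nu},\widehat W^{\theta,\nu})$. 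I would isolate the only singular mechanism, namely division by $\alpha+i\nu$, and write the solution as a scalar amplitude $\widehat g(\theta)/\tau^{\theta,\nu}$ multiplying the resonant profile $1/(\alpha(x)+i\nu)$, plus a remainder $(u_2^{\theta,\nu},v_2^{\theta,\nu},w_2^{\theta,\nu})$ bounded in $L^2(dx)$ uniformly in $\nu$; here $\tau^{\theta,\nu}$ is the transfer coefficient measuring how the boundary datum couples to the resonance.

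Next I would pass to the limit $\nu\to0^+$. The remainders converge in $L^2$ to $(u_2^{\theta,+},v_2^{\theta,+},w_2^{\theta,+})$ and $\tau^{\theta,\nu}\to\tau^{\theta,+}$, while the resonant profile is handled by the Sokhotski--Plemelj formula: since $\alpha$ has a simple zero at the origin with $\alpha'(0)<0$,
\[
\frac{1}{\alpha(x)+i\nu}\ \xrightarrow[\nu\to0^+]{}\ \mathrm{P.V.}\,\frac1{\alpha(x)}-i\pi\,\delta\!\big(\alpha(x)\big)=\mathrm{P.V.}\,\frac1{\alpha(x)}+\frac{i\pi}{\alpha'(0)}\,\delta_D,
\]
using $\delta(\alpha(x))=\delta_D/|\alpha'(0)|$ and $|\alpha'(0)|=-\alpha'(0)$. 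Inserting this into the reduced solution and applying the inverse Fourier transform in $y$ yields exactly \eqref{eq:fu11}; the compact support of $\widehat g$ confines the $\theta$-integral to a fixed bounded set, which is what lets me commute the limit with the integral by dominated convergence, the transversality assumption (generalizing \eqref{H5}) guaranteeing $\tau^{\theta,+}\ne0$ and hence integrable bounds. The regularity statement then reads off the formula: $E_y^+$ and $W^+$ inherit the $L^2$ bounds of $v_2^{\theta,+},w_2^{\theta,+}$, whereas $E_x^+$ carries the Dirac layer $\tfrac{i\pi}{\alpha'(0)}\delta_D$ supported on $\{x=0\}$, singular with respect to Lebesgue measure, so $E_x^+\notin L^1_{\mathrm{loc}}$ unless $\widehat g\equiv0$.

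For the heating I would start from the regularized identity \eqref{eq:Qphys} with the simplified tensor \eqref{eq:espsimp}: since $\underline{\underline\varepsilon}(0)$ is hermitian, $(\mathbf E,\underline{\underline\varepsilon}(0)\mathbf E)$ is pointwise real, so only the $i\nu\mathbf I$ term survives in the imaginary part, giving in the normalized variables ${\cal Q}(\nu)=\nu\int(|E_x^\nu|^2+|E_y^\nu|^2)$. By Plancherel in $y$ this equals $\tfrac{\nu}{2\pi}\int_{\mathbb R}\int_{\mathbb R}\big(|\widehat{E_x}^{\theta,\nu}|^2+|\widehat{E_y}^{\theta,\nu}|^2\big)\,dx\,d\theta$. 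The factor $\nu$ kills every uniformly $L^2$ contribution --- the $E_y$ part, the remainder, and the cross terms --- while the resonant part concentrates near $x=0$, where $\nu\int\frac{dx}{|\alpha(x)+i\nu|^2}\sim\nu\int\frac{dx}{(\alpha'(0))^2x^2+\nu^2}\to\frac{\pi}{|\alpha'(0)|}$. Hence each $\theta$-slice contributes $\frac{\pi}{|\alpha'(0)|}\,\frac{|\widehat g(\theta)|^2}{|\tau^{\theta,+}|^2}$, and after the $\tfrac1{2\pi}$ factor one obtains precisely \eqref{eq:limheat}; positivity is immediate because the integrand is nonnegative and strictly positive on $\operatorname{supp}\widehat g$ whenever $g\not\equiv0$.

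The main obstacle, as I see it, is not the algebra but the uniformity in $\nu$. One must prove that the decomposition into a scalar amplitude times the resonant profile plus an $L^2$ remainder is legitimate with bounds independent of $\nu$, that $\tau^{\theta,\nu}$ stays bounded away from zero (exactly the role of the transversality hypothesis), and that the localization of $\nu\int|E_x^\nu|^2$ at $x=0$ is genuine, i.e. the coupling between the singular profile and the remainder contributes nothing in the limit. Establishing these uniform a priori estimates for the reduced ODE system --- controlling the solution across the turning point $x=0$ and matching it to the coercive exterior region $x\ge H$ --- is the technical heart on which both the distributional limit \eqref{eq:fu11} and the sharp heating formula \eqref{eq:limheat} rest.
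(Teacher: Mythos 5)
Your proposal follows essentially the same route as the paper: Fourier reduction in $y$, the decomposition of the decaying mode into the resonant profile $1/(\alpha+i\nu)$ times the amplitude $\widehat g(\theta)/\tau^{\theta,\nu}$ plus a uniformly $L^2$-bounded remainder, the Sokhotski--Plemelj limit giving $\mathrm{P.V.}\,\frac1\alpha+\frac{i\pi}{\alpha'(0)}\delta_D$, and the heating obtained from the concentration of $\nu\int|\alpha+i\nu|^{-2}\,dx$ at the origin. You also correctly locate the technical heart in the uniform-in-$\nu$ estimates and the non-vanishing of $\tau^{\theta,+}$, which is exactly what the paper's Sections 5--7 establish.
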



\begin{rmk}
An essential consequence of this analysis is
 the resonant  heating ${\cal Q}^+$ which
is directly related to the singularity $
P.V. \frac1{\alpha(x)} + \frac{i \pi}{\alpha'(0)} \delta_D
$ of the mathematical solution $E_x$.
The singularity is not an artifact
of the model. It is on the contrary a direct way to measure
the amount of heating provided to the ions by the electromagnetic wave.
Concerning $E_y$ and $B_z$ which are integrable, a logarithmic
divergence is still present in the solution as
seen in the solution (\ref{eq:ww}) of the Budden problem, or also
in (\ref{eq:vvtilde}-\ref{eq:wwtilde}) for example.
\end{rmk}

\begin{rmk}
The hypothesis $\delta(0)\neq 0$ is technically important
in our work. It is used  two times:
in the solution of the Budden problem (\ref{eq:whit})
 and in the normalization (\ref{eq:delta0})
of the singular
solution since we divide by $\delta(0)$.
\end{rmk}

To our knowledge this is the first time that such
 formulas are written where all terms are explicitly given.
A similar but much less precise formula can be found
in \cite{chen} derived by means of analogies, see also \cite{piliya}.
The  formulas (\ref{eq:fu11}-\ref{eq:limheat})
 have  been 
confirmed by numerical simulations \cite{imb:these}
where additional information may be found about the 
the case where (\ref{H5},\ref{H6})
are not satisfied. It must be mentioned that the numerical tests
show a fast pointwise convergence
of the numerical
solution to the exact one, except at the origin of course. 
Moreover our numerical tests
show that a large part  of the incoming energy of the wave may be  
absorbed by the heating, around 90\% in some cases.
This is for example the case for the Fourier mode
$\theta=0$
with $L=2$: the physical coefficients in
(\ref{eq:ad}) are $c=\omega=2$ and $\omega_c=1$,
so that $\alpha=1-2\delta$. We consider the profiles
$$
\alpha(x)=\left\{
\begin{array}{lc}
1 ,& -L\leq x \leq -1, \\
-x ,& -1 \leq x \leq 3, \\
-3 ,& 3\leq x<\infty,
\end{array}
\right.
\mbox{ and }
\delta(x)=\left\{
\begin{array}{lc}
0 ,& -L\leq x \leq -1, \\
\frac{x+1}2 ,& -1 \leq x \leq 3, \\
2 ,& 3\leq x<\infty,
\end{array}
\right.
$$
which satisfy additionally $|\alpha_\infty|>|\delta_\infty|$
and the fact that the electronic density is increasing from the left to the
 right.
For the calculation of the heating we use equation 
(\ref{eq:dieze}) with $M=-L$ , $N=\infty$, and 
 we observe that for
normal incidence $W_2^{0,0}=\frac{d}{dx}V_2^{0,0}$ and that for
$-L\leq x \leq -1$, $V_2^{0,0}$
is a
linear combination of an incoming plane wave and a reflected plane wave.
Furthermore, we compute numerically the singular solution  
$\mathbf U_2^{0,\nu}$
taking $\nu=10^{-3}$
as a small regularization parameter.
The efficiency of the heating is defined as the ratio
of the heating $\cal Q$ over the incoming energy.
In this case our calculations show an efficieny of around $95\%$.
Another calculation in  oblique incidence $\theta=\cos \frac\pi4$
shows an efficiency still around $76,7\%$.
These values indicate a high efficiency.

The method of the proof is based on an original singular 
integral equation  attached to the Fourier solution. Introduced
in  the seminal work of Hilbert \cite{hilbert}
and Picard \cite{picard},
this type of integral equation is referred to as 
integral equation of the third kind, by comparison with
the more classical equations of the first and second kind.
Some references about this type of equations
may be found in 
\cite{bart:warnock,shulaia} for mathematical analysis,
and \cite{vankampen,case,frye} for relation with theory of particles
or plasma physics. Our  results 
are therefore reminiscent of those 
of Bart and Warnock \cite{bart:warnock}, even if our kernel
does not satisfy exactly their hypothesis since it is less regular:
that is  the solution is the sum of a Dirac mass plus a principal
value (plus a regular part).
In their work it is stressed that non uniqueness is the rule for such
equations.
 In our case,  we are able to  obtain uniqueness
by means of the limit absorption principle which is a physically based
selection principle.
One originality of this work is the analysis of the 
properties of this singular equation for which we found no equivalent
in the classical literature \cite{ab:ste,bateman,bouche}.
The result  will be obtained 
with the limit absorption  principle
combined with a specific original integral representation
of the solution.
The loss of regularity of the electric field 
 is counter intuitive with respect to the standard theory
of existence and uniqueness for solutions
of time harmonic Maxwell's equations  \cite{dautray:lions,cessenat,monk, weder}.
 The essential part of the proof consists in showing
that the Fourier transform  $\widehat{E}_x$ may be composed 
of three contributions:
a Dirac mass  at $x=0$; a non integrable function proportional
to $\frac1{\alpha(x)}$, 
that is interpreted  as a distribution in the sense of principal value; and a regular part. 
 The condition $ \alpha'(0) <0$ guarantees
that the coefficient in front of the Dirac mass is finite. Moreover, the condition (\ref{H5})
 simplifies some parts of the mathematical analysis.
The solution is a priori non unique since
the limit absorption  principle generates two solutions
depending on the sign
of the regularization.
The heating of the plasma (\ref{eq:limheat})
is directly related to the singular part of the solution.

\subsection{Organization}

This work is organized as follows.
Section \ref{sec:2} is devoted to basic considerations.
In the next section we introduce a regularization parameter, and we propose a specific
integral representation of the solution.
After that we recall the Plemelj-Privalov theorem and explain 
why it cannot be used directly for our problem.
Section \ref{sec:5} is where we prove the properties
of the solutions 
of the regularized equations.
In particular, we show that one basis function
has a fundamental singularity.
Next, in Section \ref{sec:6} we define the limit spaces.
The main theorem is finally proved in section \ref{sec:main}.

\section{Basic considerations} \label{sec:2}

In this section we rederive the phase velocity, compute
the analytic solutions of the simplified
Budden problem and introduce the limit absorption principle.

\subsection{Phase velocity}

Recall that the phase velocity measures the velocity of individual
Fourier modes.

\subsubsection{Constant coefficients}

Let us consider first that $\alpha$ and $\delta$ are constant
at least locally.
A plane wave
$(E_x,E_y)=Re^{i (k_1 x +k_2 y)  }$, $R\in \mathbb {C}^2$, 
is solution of X-mode equations (\ref{sysinit})
if and only if
$$
\left[\left(
\begin{array}{cc}
k_2^2& -k_1\,k_2 \\
-k_1\,k_2 & k_1^2 
\end{array}
\right) -\frac{\omega^2}{c^2}
\left(
\begin{array}{cc}
\alpha & i \delta \\
-i \delta & \alpha 
\end{array}
\right)\right]
 R=0, \qquad k=(k_1,k_2)\in \mathbb{R}^2.
$$
We assume that $c=1$ for simplicity.
We set $k=|k| d$ with $d=(\cos \theta, \sin \theta)$ 
the direction of the wave.
The phase velocity
$
v_\varphi=\frac{\omega}{|k|}
$
is solution of the eigenvalue problem
$$
\left(   
\begin{array}{cc}
\sin  \theta^2 - v_\varphi^2 \alpha & - \cos \theta \sin \theta -
i  v_\varphi^2 \delta \\
- \cos \theta \sin \theta +
i  v_\varphi^2 \delta &
\cos \theta^2 - v_\varphi^2 \alpha 
\end{array}
\right)
R=0.
$$
The determinant of the matrix is
$
D=v_\varphi^4\left(\alpha^2-\delta^2   \right)
- v_\varphi^2\alpha$. 
Setting $ D=0$ we obtain  the phase velocity:
$
v_\varphi^2=\frac{\alpha  }{\alpha^2-\delta^2  }$. 

\subsubsection{Non constant coefficients}

Let us assume for example that $\alpha=-x$
and that $\delta=1$ which is locally
compatible
with the general assumptions
of Figure \ref{figure:2}.
We  plot in Figure \ref{fig1} the phase velocity as a function of the 
horizontal space coordinate. 
When the phase velocity is real  we are in a propagating region, and when the phase velocity  is pure imaginary  we are in a non-propagating region. 
One distinguishes
two cutoffs where the local phase velocity is infinite
$$
\mbox{Cutoff}: \quad \alpha(x)=\pm \delta(x)
$$
and one resonance where the phase velocity
is null
$$
\mbox{Resonance}: \alpha(x)=0.
$$
This structure is characteristic of the  hybrid resonance.

\begin{figure}[ht!]
\begin{center} 
\includegraphics[width=10cm,angle=0]{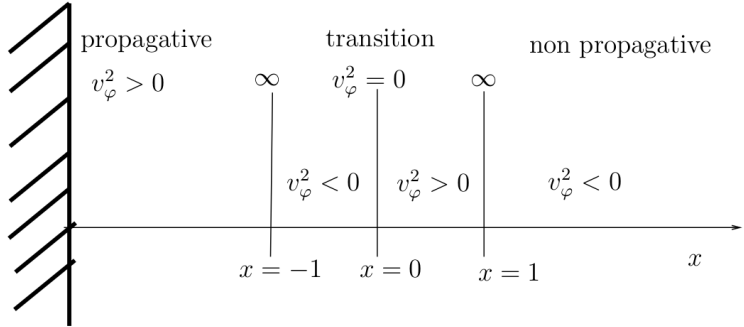} 
\end{center}
\caption{Sign of the square of the phase velocity
$v_\varphi^2=\frac{x}{1-x^2}$, for $\alpha=-x$ and $\delta =1$.} 
\label{fig1}
\end{figure}
\begin{rmk}{\rm
In what follows we always take $ \omega=c=\varepsilon_0=\mu_0=1$.}
\end{rmk}
\subsection{The Budden problem} 
In the case where the solution is independent of $y$, what for the plane waves corresponds to normal  incidence, that is $\theta=0$,
 the system (\ref{sysinit})
  is called
the Budden problem \cite{chen}
$$
\left\{\tab{lr}{
W-  E_y'  &  =0, \\
  -\alpha  E_x -i\delta  E_y & =0, \\
 - W'  + i\delta  E_x -\alpha E_y & =0.
}\right.
$$
After elimination of $E_x$ and $W$ we obtain that,
$$
-E_y''+\left(\frac{\delta^2}{\alpha}    - \alpha \right)E_y=0
$$
This equation  can be solved  analytically in some cases which helps
a lot to understand the singularity of the general problem.
%
Let us consider that $\alpha=-x$ and  $\delta$ is solution of 
$
\frac{\delta^2}{x}    -x=-\frac14+\frac1x$. 
The positive solution is
$
\delta(x)=\sqrt{x^2-\frac{x}4+1    }>0$. 
The y-component of the electric field is solution
of
\begin{equation} \label{eq:whit}
E_y''+
\left(-\frac14+\frac1x   \right)E_y=0.
\end{equation}
This equation is of Whittaker type  \cite{ab:ste,bateman}.
It is a particular case of the confluent hypergeometric
equation, and can also be rewritten under the Kummer form.
The general theory shows that the first fundamental  solution is regular
$$ 
v(x)=e^{-\frac{x}2}x
$$
Indeed
$v'(x)=e^{-\frac{x}2}\left(1-\frac{x}2   \right)$ and 
$v''(x)=e^{-\frac{x}2}\left(-1+\frac{x}4   \right)$, 
so that
$
v''+
\left(-\frac14+\frac1x   \right)v=0$. 
Let us consider a second solution $w$ with linear independence with respect
to the first one. The linear independence can be characterized
by the normalized Wronskian relation
$
v(x)w'(x)-v'(x)w(x)=1$. 
Seeking for a representation $w=vz$, one gets that
$$
v^2 z'=1 \Rightarrow z= \int \frac{dx}{v^2}\Rightarrow 
w=v \int \frac{dx}{v^2}= x \,e^{-x/2}\, \int \frac{e^x}{x^2}.
$$
Moreover, from formulas $8.212$ of \cite{GR},
$$
 \int \frac{e^x}{x^2}= - \frac{e^x}{x}+  \int \frac{e^x}{x}=  - \frac{e^x}{x}+ E_i(x),
$$
where $E_i(x)$ is the Exponential-integral function. It follows that
$
w(x)=- e^{x/2}+ x\, e^{-x/2}\, E_i(x)$. 
Furthermore from formulas $8.214$ of \cite{GR}
$$
E_i(x)=C+ \ln|x|+\sum_{j=1}^\infty \frac{x^j}{j\cdot j!}.
$$
It follows that,
\begin{equation} \label{eq:ww}
w(x)=-1 +Cx+x \, \ln|x|+O(|x|), \quad |x| \rightarrow 0.
\end{equation}
We notice that the second function $w$ is bounded, but
non regular at origin. It shows the subtleties
associated with the singular  Whittaker equation  (\ref{eq:whit}).
Nevertheless we note that
 the general form of the $y$ component of the electric field
of the Budden problem is bounded
$$
E_y=a v + b w \Rightarrow  E_y \in L^\infty(]-\epsilon, \epsilon[).
$$
The $x$ component of the electric field is  more singular.
It is a linear combination of
two functions, the first one which is regular and bounded
$$
E_x^v(x)=i\frac{\sqrt{x^2-\frac{x}4+1    }  }{x}v(x)=
i  e^{-\frac{x}2} \sqrt{x^2-\frac{x}4+1    },
$$
and the second one 
which is singular at origin since $w(0)=-1$
$$
E_x^w(x)=i\frac{\sqrt{x^2-\frac{x}4+1    }  }{x}w(x).
$$
The general form of the $x$ component
of the electric field is a linear combination
of these two functions.
Since $E_x^w\not \in L^2(]-\epsilon, \epsilon[)$, we notice
that the electric field is not a square integrable
function in general. 

\subsection{Limit absorption principle}

We will develop a regularized approach
to give a rigorous meaning to the solution at all incidences.
This regularized approach is based on the limit absorption principle.
One considers a parameter $\nu\neq 0$ (the precise sign will be justified
later) and the regularized 
problem 
with unknown 
 $(E_x^\nu,E_y^\nu,W^\nu)$ 
\e{\label{sysmu}
\left\{
\begin{array}{cccr}
W^\nu & +  \partial_y E_x^\nu  & - \partial_x E_y^\nu  & =0,
\\  
\partial_y W^\nu  & -(\alpha(x)+i\nu) E_x^\nu &
-i\delta (x) E_y^\nu & =0,
\\ -\partial_x W^\nu &
+ i\delta (x) E_x^\nu &
-(\alpha(x)+i\nu) E_y^\nu & =0.
\end{array}
\right.
}
 The  regularization parameter $\nu$ can be interpreted as a small collision frequency. 

A further simplification consists in Fourier reduction.
Since the coefficients do not depend on the $y$ variable, one can perform 
the usual  one dimension reduction.
The system that will be studied in this article is obtained by applying the Fourier transform 
to the regularized system \eqref{sysmu}. Denoting the unknowns $(U,V,W)$ it yields
\begin{equation}\label{sys0:hatmu}
\left\{
\begin{array}{cccr}
  W & 
+  i\theta U & - V'& =0, \\
  i\theta W & -(\alpha (x)+i\nu) U 
&-i\delta (x) V & =0,\\ 
  - W' &
+ i\delta (x) U &
-(\alpha(x)+i\nu) V & =0.
\end{array}
\right.
\end{equation} 
Here the notation 
 $'$ denotes the derivative with respect to the $x$ variable.

\section{A general integral representation} \label{sec:3}

We begin by some notations.
Let us denote by $(A_\nu,B_\nu)$ the 
two fundamental solutions of the modified equation
\begin{equation} \label{eq:bd5}
-u''-(\alpha(x)+i\nu)u=0,
\end{equation}
with the usual normalization
\begin{equation} \label{eq:bd20}
A_\nu(0)=1, \quad
A_\nu'(0)=0 \qquad 
\mbox{ and }
B_\nu(0)=0, \quad B_\nu'(0)=1.
\end{equation}
Various usual continuity estimates of $A_\nu$ and $B_\nu$
can be  derived: we refer for example 
for  the appendix of \cite{imb2}.
Let us denote $\Dz $ the operator $i \theta\partial_z  -i
\delta(z)$ applied to any function $h$, that is
\begin{equation} \label{eq:bd6}
\Dz h = i \theta\partial_z  h - i \delta(z) h.
\end{equation}
Let us define the kernel
\begin{equation}\label{eq:bd9}
k^\nu(x,z)=B_\nu(z)A_\nu(x)-B_\nu(x)A_\nu(z).
\end{equation}
Next we define
\begin{equation} \label{eq:bd15}
K_1^{\theta,\nu}(x,z; G) =
\left\{
\begin{array}{lll}
\displaystyle \frac{\Dx \Dz k^{\nu}(x,z)}{\alpha(x)+i\nu}, & \mbox{ for }
G\leq  z \leq x & \mbox{ or }x\leq z\leq G ,\\
0 ,&   \mbox{ in all other cases}.
\end{array}
\right.
\end{equation}
Let us define the kernel sequence by
\e{ \label{eq:bd8}
 K_{n+1}^{\theta,\nu}(x,z ; G)  =
 \int_{G}^x \frac{\Dx \Dz k^{\nu}(x,t)}{\alpha(x)+i\nu}
 K_{n}^{\theta,\nu}(t,z; G) dt.
}
The sum is 
\begin{equation}\label{eq:ResKer}
 \K (x,z; G) = \sum_{n=0}^\infty K_{n+1}^{\theta,\nu}(x,z; G).
\end{equation}
The integration domain is
 centered on $G$, that is
\begin{equation} \label{eq:bd79}
\mbox{supp}\left(K_1^{\theta,\nu}(\cdot,\cdot; G)\right)
\subset \left\{
(x,y)\in \mathbb{R}^2;\quad 
G\leq  z \leq x  \mbox{ or }x\leq z\leq G
\right\}\equiv {\cal D}_G,
\end{equation}
which yields as well:
$
\mbox{supp}\left(\K(\cdot,\cdot; G)\right)
\subset {\cal D}_G$. 

\begin{prop} \label{prop:bd1}
Any   triplet 
$(U ,V , W )$
solution of  the regularized system \eqref{sys0:hatmu} admits the
following integral representation.

\begin{itemize}
\item One first chooses an arbitrary
 reference point  $G\in [-L,\infty[$.
\item The $x$ component of the 
electric field is solution of the integral equation
 \e{ \label{eq:bdeq:bd13}
U (x) - \int_{G}^x \frac{
\Dx \Dz k^{\nu}(x,z)}{\alpha(x)+i\nu} U (z) dz = \frac{F^{\theta,\nu}(x)}{\alpha(x)+i\nu},
}
where the right hand side is
\begin{equation} \label{eq:bdeq:bd14}
{F^{\theta,\nu}(x)}=
a_G \Dx  A_\nu(x)+b_G \Dx  B_\nu(x) 
\end{equation}
and the kernel is given in (\ref{eq:bd6}-\ref{eq:bd9}).
The solution of this integral equation is naturally
 provided by the resolvent integral formula
\begin{equation} \label{eq:bd10}
U (x)
=
\frac{F^{\theta,\nu}(x)}{\alpha(x)+i\nu} +
 \int_{G}^x \K(x,z; G) 
\frac{F^{\theta,\nu}_G(z)}{\alpha(z)+i\nu} dz
\end{equation}
where the resolvent kernel is constructed in (\ref{eq:ResKer}).

\item The $y$ component of the electric field is recovered  as
\begin{equation} \label{eq:bd11} 
V (x) =
a_G A_\nu(x)+b_G
B_\nu(x) 
 +  \int_{G}^x \Dz k^\nu(x,z)
U (z) 
dz ,
\end{equation}
and the vorticity is recovered as 
\begin{equation} \label{eq:bd12} 
W (x) =
a_G A_\nu'(x)+b_G 
B_\nu'(x) 
 +  \int_{G}^x \partial_x \Dz k^\nu(x,z)
U (z) 
dz .
\end{equation}

\item
The two complex numbers $(a_G,b_G)$ 
solve the linear system
\begin{equation} \label{eq:bd51}
\left\{
\begin{array}{ll}
a_G   A_\nu(G)+ b_G B_\nu(G)=  V(G), \\
a_G   A_\nu'(G)+ b_G B_\nu'(G)= W(G).
\end{array}
\right.
\end{equation}
\end{itemize}
\end{prop}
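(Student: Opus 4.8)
The plan is to convert the first-order system \eqref{sys0:hatmu} into a single second-order scalar equation for $V$ sourced by $U$, together with one algebraic constraint tying $U$ to $V$, and then to invert that equation with the causal Green's function built from the pair $(\Am,\Bm)$. Concretely, I would first eliminate $W$: the first equation of \eqref{sys0:hatmu} gives $W=V'-i\theta U$, and differentiating it and inserting the third equation produces the inhomogeneous equation
\[
V''+(\alpha(x)+i\nu)V=i\theta U'+i\delta(x)U .
\]
Combining instead the first and second equations yields the purely algebraic identity
\[
\Dx V=\big(\alpha(x)+i\nu-\theta^2\big)\,U ,
\]
obtained by expanding $\Dx V=i\theta V'-i\delta V$ and substituting $i\theta W-i\delta V=(\alpha+i\nu)U$. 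These two relations together are equivalent to the system.

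Next I would represent $V$. The kernel $k^\nu$ of \eqref{eq:bd9} is exactly the causal Green's function of $-u''-(\alpha+i\nu)u$: the normalization \eqref{eq:bd20} forces the Wronskian $\Am\Bm'-\Am'\Bm\equiv 1$, so that $k^\nu(x,x)=0$ and $\partial_x k^\nu(x,z)|_{z=x}=-1$. A particular solution of the ODE above with vanishing Cauchy data at the reference point $G$ is therefore $\int_G^x k^\nu(x,z)\,(i\theta U'+i\delta U)(z)\,dz$ (up to sign), and an integration by parts in $z$ transfers the derivative from $U$ onto the kernel, converting $i\theta\partial_z-i\delta$ into $\Dz$ acting on $k^\nu$. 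The boundary term at $z=x$ drops out because $k^\nu(x,x)=0$, while the term at $z=G$ is proportional to $k^\nu(x,G)$, a linear combination of $\Am(x)$ and $\Bm(x)$, which I absorb into the homogeneous part $a_G\Am+b_G\Bm$; this gives \eqref{eq:bd11}. Differentiating \eqref{eq:bd11} and using $W=V'-i\theta U$ then produces \eqref{eq:bd12}, the point being that $\Dz k^\nu(x,z)|_{z=x}=i\theta$ (again the Wronskian identity), so that the endpoint contribution cancels exactly against $-i\theta U$.

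To close the system I would apply $\Dx$ to \eqref{eq:bd11}. The $i\theta\partial_x$ part of $\Dx$ hits the variable upper limit and, using $\Dz k^\nu(x,z)|_{z=x}=i\theta$, contributes a term $-\theta^2 U(x)$ besides the integral $\int_G^x \Dx\Dz k^\nu\,U\,dz$, while the homogeneous part reproduces $F^{\theta,\nu}=a_G\Dx\Am+b_G\Dx\Bm$ as in \eqref{eq:bdeq:bd14}. Comparing with the algebraic identity $\Dx V=(\alpha+i\nu-\theta^2)U$, the two $\theta^2U$ terms cancel and, after dividing by $\alpha+i\nu$, one obtains the Volterra equation \eqref{eq:bdeq:bd13}. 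I expect this cancellation of the $\theta^2$ contributions to be the only genuinely non-routine step; everything else is careful bookkeeping of the boundary terms generated by the variable endpoint.

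Finally, since $\nu\neq0$ we have $|\alpha(x)+i\nu|\ge\nu>0$, so the kernel $K_1^{\theta,\nu}$ of \eqref{eq:bd15} is bounded on the triangle ${\cal D}_G$ of \eqref{eq:bd79} and the iterated kernels $K_n^{\theta,\nu}$ obey the standard Volterra bound in $|x-G|^n/n!$; hence the Neumann series \eqref{eq:ResKer} converges and inverts the Volterra operator, yielding the resolvent formula \eqref{eq:bd10}. To fix the remaining constants I would evaluate \eqref{eq:bd11} and \eqref{eq:bd12} at $x=G$, where both integrals vanish, which gives the $2\times2$ system \eqref{eq:bd51} for $(a_G,b_G)$; its matrix has determinant equal to the Wronskian $\equiv1$ and is therefore invertible, so $(a_G,b_G)$ is uniquely determined by the Cauchy data $(V(G),W(G))$.
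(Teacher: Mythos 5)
Your proof is correct and follows essentially the same route as the paper's: eliminate $W$ to get the second-order equation for $V$ sourced by $U$, invert it with the kernel $k^\nu$, integrate by parts using $k^\nu(x,x)=0$, and close via the algebraic relation $\Dx V=(\alpha+i\nu-\theta^2)U$ together with $\Dz k^\nu(x,x)=i\theta$ to cancel the $\theta^2$ terms. The only additions are your explicit Volterra/Neumann-series convergence bound and the observation that the Wronskian makes the $2\times 2$ system \eqref{eq:bd51} invertible, both of which the paper leaves implicit.
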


\begin{proof}
Eliminating $W$ from the first and third 
equations of (\ref{sys0:hatmu}) 
gives
 \enn{
-V ''-(\alpha+i\nu)V = f
\qquad  \mbox{ with }
f=-i\theta U'-i\delta U.
}
 Since the Wronskian is constant,  it follows from the normalization (\ref{eq:bd20}) that
$A_\nu B_\nu' -A_\nu'B_\nu=1$.
  Then,  from the variation of constants formula,
 \begin{equation} \label{eq:bdv}
V(x) = a_f A_\nu(x)+b_f B_\nu(x)+
 \int_{G}^x f(z) k^\nu(x,z)dz, 
\quad 
\forall x .
\end{equation}
where $a_f$ and $b_f$ are two integration constants.
Now we replace  $f$ by the corresponding function of $U$ and perform
 the integration by part
$$
\int_G^x U'(z)  k^\nu(x,z)dz=U(x) k^\nu(x,x) - U(G) k^\nu(x,G)
- \int_G^x U(z)  \partial_z k^\nu(x,z)dz.
$$
Since $ k^\nu(x,x)=0$ there is a simplification. Therefore (\ref{eq:bdv})
 yields 
\eqref{eq:bd11}
with  $a_G = a_f +i\theta U (G)B_\nu(G)$
 and $b_G = b_f -i\theta U (G)A_\nu(G)$.
Next we eliminate $W$ from the first and
second equations of  (\ref{sys0:hatmu}) and obtain
\begin{equation} \label{eq:bdv2}
-i\theta V'-\theta^2 U+ (\alpha+i\nu) U+i\delta V=0.
\end{equation}
The derivative of (\ref{eq:bd11}) yields
$$
V'(x) =
a_G A_\nu'(x)+b_G
B_\nu'(x) 
 +  \int_{G}^x \partial_x \Dz k^\nu(x,z)
U (z) dz + \Dz k^\nu(x,x)U(x).
$$
Since $ \Dz k^\nu(x,x)= i\theta \left(A_\nu B_\nu' -B_\nu A_\nu '    \right)=
i\theta$, one gets the identity
$$
V'(x) =
a_G A_\nu'(x)+b_G
B_\nu'(x) 
 +  \int_{G}^x \partial_x \Dz k^\nu(x,z)
U (z) dz + i\theta  U(x).
$$
Plugging this expression in (\ref{eq:bdv2}) and performing all simplifications
we obtain  the integral equation
 (\ref{eq:bdeq:bd13}).
Finally, we get the last  integral formula (\ref{eq:bd12})
from  $W=-i\theta U+V'$.
The linear system \eqref{eq:bd51} is obvious from (\ref{eq:bd11}-\ref{eq:bd12})
at $x=G$.
\end{proof}

Following  \cite{picard}, the equation 
 (\ref{eq:bdeq:bd13})
is an integral equation of the third kind in the case
$\nu=0$. In this case 
 the theory
is  rather incomplete regarding existence and uniqueness \cite{bart:warnock}. 
 However as long as $\nu \neq 0$, the solution
 based on these integral equations is uniquely defined.
Then, the question is to determine the behavior 
of these solutions when $\nu$ goes to $0$.
Moreover, different choices of $G$ will give different kind of 
information.
A strategy to study of the limit solution $\nu\rightarrow 0$
can be the following:  {\it  Choose  an optimal $G$, so that
a) the integration constants  $(a_G,b_G)$ are 
easy to determine, and  b) the resolvent kernel 
${\cal K}^{\theta,\nu}(\cdot, \cdot;G)$
admits a limit as $\nu\rightarrow 0$.}
Considering the form of the right hand side in (\ref{eq:bd10}), 
a convenient tool is the Plemelj-Privalov Theorem \cite{musk,privalov}.
Unfortunately, we will see that  a fundamental singularity  of the kernel 
${\cal K}^{\theta,\nu}(\cdot, \cdot;G)$  prevents any simple limit procedure. A more convenient
technique will be proposed in Section \ref{sec:5}.

\section{Singularity  of the kernels} \label{sec:4}

A fundamental tool 
in order to pass to the limit in singular integrals
is the Plemelj-Privalov theorem \cite{musk,privalov}. 
However, to apply this theorem to pass to the limit $\nu \rightarrow 0$ in   equation (\ref {eq:bd10}) it is necessary that the kernel $\K(x,z)$ be a H\"older continuous function of $z$ for each fixed $x$. Unfortunately, this regularity is not available in our case.
To illustrate this phenomenon, we  study only 
the first term of the series (\ref{eq:bd8})
that defines $\K$, namely
\begin{equation} \label{eq:23}
\K_1(x,z):=\frac{  \Dx \Dz k^\nu(x,z)}{\alpha(x)+i\nu}.
\end{equation}
We consider two cases.

\subsection{First case: $G\neq 0$}

In this case there exists $(0,z)\in \mathcal D_G$ with $z \neq 0$. In the limit case  $\nu =0$ one has that $\mathcal K^{\theta,0}_1(x,z)$ admits the local expansion:

$$
\mathcal K^{\theta,0}_1(x,z) \approx \frac{1}{x \,\,\alpha'(0)} \,  \mathcal D^\theta_x \, \mathcal D^{\theta}_z \,  k^0(x,z).
 $$
Therefore, $\mathcal K^{\theta,0}_1(x,z)$ blows up as $ x \rightarrow 0$. 

\subsection{Second case: $G=0$}

We turn to the case $G=0$. 
We begin with a preliminary result.

\begin{prop}
One has 
\begin{equation} \label{eq:bd16}
(\Dx \Dz k^\nu)(x,x)=0 \quad \forall x\in \mathbb{R}.
\end{equation} 
\end{prop}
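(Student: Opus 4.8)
The plan is to establish \eqref{eq:bd16} by a direct computation, exploiting the facts that $\Dx$ and $\Dz$ are each first-order and differentiate only in their own variable, and that the restriction to the diagonal $x=z$ triggers exact cancellations encoded by the Wronskian. First I would apply $\Dz = i\theta\partial_z - i\delta(z)$ to $k^\nu(x,z)=B_\nu(z)A_\nu(x)-B_\nu(x)A_\nu(z)$, treating $x$ as a parameter. Since $\partial_z$ hits only $B_\nu(z)$ and $A_\nu(z)$, the result is a combination of the four functions $A_\nu,A_\nu',B_\nu,B_\nu'$ evaluated separately at $x$ and at $z$. I would then apply $\Dx=i\theta\partial_x-i\delta(x)$; because $\partial_x$ now acts only on the $x$-arguments, no second derivatives of $A_\nu$ or $B_\nu$ are ever produced, and $\Dx\Dz k^\nu$ remains a bilinear expression in those four functions with coefficients built from $\theta$, $\delta(x)$ and $\delta(z)$.

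The key step is the evaluation at $x=z$. There the expression organizes into four elementary blocks. Two of them, $B_\nu'(z)A_\nu'(x)-B_\nu'(x)A_\nu'(z)$ and $B_\nu(z)A_\nu(x)-B_\nu(x)A_\nu(z)$, vanish by symmetry. For the remaining two I would invoke the Wronskian identity $A_\nu B_\nu'-A_\nu'B_\nu\equiv 1$, which holds because \eqref{eq:bd5} has no first-order term (so the Wronskian is constant) together with the normalization \eqref{eq:bd20}; this gives $B_\nu'(x)A_\nu(x)-B_\nu(x)A_\nu'(x)=1$ and $B_\nu(x)A_\nu'(x)-B_\nu'(x)A_\nu(x)=-1$. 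Setting moreover $\delta(z)=\delta(x)$ on the diagonal, the two surviving contributions reduce to $i\theta\,\bigl(i\delta(x)\bigr)$ and $-i\delta(x)\,(i\theta)$, which are exactly opposite. Their sum is zero, yielding \eqref{eq:bd16}.

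I do not anticipate a genuine obstacle: the argument is purely computational. The only point that demands care is bookkeeping, namely keeping the $x$- and $z$-dependence of $\delta$ and of the derivatives of $A_\nu,B_\nu$ strictly separate until the very last step, and applying the Wronskian relation with the correct signs in the two non-vanishing blocks. A useful sanity check is that the identity must hold for every $\theta$ and every $\nu$, which it does, since the two surviving terms cancel independently of their common factor.
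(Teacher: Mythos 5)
Your computation is correct and follows essentially the same route as the paper: expand $\Dx\Dz k^\nu$ into its four bilinear blocks, restrict to the diagonal, and observe that the $\theta^2$ and $\delta(x)\delta(z)$ blocks vanish by antisymmetry while the two cross terms cancel each other. The only cosmetic difference is that the paper does not need the Wronskian for the cross terms — it notes that $(\partial_x k^\nu)(x,x)+(\partial_z k^\nu)(x,x)$ telescopes to zero directly — whereas you evaluate each block as $\pm1$ via $A_\nu B_\nu'-A_\nu'B_\nu=1$ before cancelling; both are valid.
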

\begin{proof}
Indeed by construction
$$
(\Dx \Dz k^\nu)(x,x)=
-\delta(x) \delta (x) k^\nu(x,x)
$$
$$
+\theta\, \delta (x) \left( (\partial_x  k^\nu)(x,x) +(\partial_z  k^\nu)(x,x)
\right)
- \theta^2(\partial_x \partial_z k^\nu)(x,x).
$$
We notice that by definition
$k_\nu(x,x)=0$ for all $x$ so the first contribution vanishes
in $(\Dx \Dz k^\nu)(x,x)$. One also has that 
$$
 (\partial_x  k^\nu)(x,x) +(\partial_z  k^\nu)(x,x)
$$
$$
= B_\nu(x) A_\nu'(x)-B_\nu'(x) A_\nu(x)+ B'_\nu(x) A_\nu(x)-B_\nu(x)
 A_\nu'(x)=0,
 $$
so, the second contribution vanishes also.
Furthermore,
 $$
 (\partial_x \partial_z k^\nu)(x,x)= B_\nu'(x)\, A_\nu'(x)- B_\nu'(x)\,A_\nu'(x)=0.
 $$
 This completes the proof of equation (\ref{eq:bd16}).
\end{proof}

\begin{prop}\label{prop:limker}  
The limit kernel
$
 \frac{  \Dx \Dz k^{\nu=0}(x,z)}{\alpha(x)}
$
 belongs to  $L^\infty\left({\cal D}_0\right)$.
\end{prop}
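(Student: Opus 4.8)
The plan is to isolate the only place where the quotient can fail to be bounded, namely the zero of the denominator. By (\ref{H1}) we have $\alpha(0)=0$, and since $x_1=0$ is the unique zero of $\alpha$, the bounds (\ref{H2})--(\ref{H4}) show that $\alpha$ vanishes exactly to first order at the origin and stays away from $0$ elsewhere: $|\alpha(x)|\ge r|x|$ on $[-L,H]$ and $\alpha(x)=\alpha_\infty\neq 0$ for $x\ge H$. The numerator $N(x,z):=\Dx\Dz k^0(x,z)$ is smooth --- indeed $A_0,B_0\in{\cal C}^3$ because they solve (\ref{eq:bd5}) with $\alpha\in{\cal C}^2$, and $\delta\in{\cal C}^1$ by (\ref{eq:bd1}) --- hence locally bounded together with its first derivatives. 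Thus the whole problem reduces to showing that $N$ vanishes at the origin fast enough to absorb the simple zero of $\alpha$.

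The key input is the preceding proposition: by (\ref{eq:bd16}) one has $N(x,x)=0$ for every $x$, i.e. $N$ vanishes on the diagonal. The decisive feature of the choice $G=0$ is geometric: every $(x,z)\in{\cal D}_0$ has $z$ lying between $0$ and $x$, so that $|x-z|\le|x|$. Writing, by the fundamental theorem of calculus in the second variable, $N(x,z)=N(x,z)-N(x,x)=-\int_z^x(\partial_z N)(x,w)\,dw$ and bounding $\partial_z N$ by a constant $M$ on a fixed compact neighbourhood of the origin, we obtain $|N(x,z)|\le M|x-z|\le M|x|$ there.

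Combining the two estimates gives, for $(x,z)\in{\cal D}_0$ with $x\in[-L,H]$, the uniform bound $\left|\frac{N(x,z)}{\alpha(x)}\right|\le \frac{M|x|}{r|x|}=\frac{M}{r}$, which extends by continuity across $x=0$ where numerator and denominator vanish together. For $x$ bounded away from $0$ the denominator is bounded below and $N$ is locally bounded, so the quotient is bounded there as well; this yields the claimed membership in $L^\infty({\cal D}_0)$. I expect the only genuine obstacle to be precisely this matching at the origin, and it is exactly here that the choice $G=0$ is essential: for $G\neq0$ the point $z$ need not approach $0$ as $x\to0$, so that $N(0,z)\neq0$ is divided by $\alpha(x)\to0$, producing the blow-up recorded in the first case of this section.
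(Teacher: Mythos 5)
Your proof is correct and takes essentially the same route as the paper: both arguments rest on the diagonal vanishing (\ref{eq:bd16}), the geometric fact that $|x-z|\le |x|$ on ${\cal D}_0$, and the lower bound $|\alpha(x)|\ge r|x|$ from (\ref{H2}). The only difference is cosmetic --- the paper derives $|\Dx\Dz k^0(x,z)|\lesssim |x-z|$ from a first-order Taylor expansion at the origin (explicitly identifying the coefficient $\delta(0)^2+\theta\delta'(0)$, which it reuses later), whereas you integrate $\partial_z\bigl(\Dx\Dz k^0\bigr)$ along the segment from $z$ to $x$.
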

\begin{proof}
A first order Taylor expansion  of $\Dx \Dz k^\nu$
around 0 yields
$$
\Dx \Dz k^\nu(x,z)=
 \alpha_\nu x + \beta_\nu z +O(|x|^2+|z|^2).
$$
Notice that (\ref{eq:bd16}) implies $\beta_\nu=-\alpha_\nu$.
The coefficient $ \alpha_\nu$ is easily computed using 
$
(\Dx \Dz k^\nu)(x,0)=\Dx A_\nu(x) \Dz B_\nu(0)- \Dx B_\nu(x) \Dz A_\nu(0)
$
and the definition (\ref{eq:bd5}-\ref{eq:bd20}). One gets that
$
\Dx  A_\nu(x)=-i\delta(0)-i\delta'(0)x +\theta \nu x + O(x^2)
$
and
$
\Dx  B_\nu(x)=i\theta-i\delta(0)x +O(x^2)$.
So
$$
(\Dx \Dz k^\nu)(x,0)=
\left( -i\delta(0) -i\delta'(0)x +\theta \nu x +O(x^2) \right)i\theta
$$
$$
-\left(i\theta  -i\delta (0)x + O(x^2)  \right)  (-i\delta(0) )
$$
$$
=
\left( \delta(0)^2 +\theta \delta'(0)+ i
\theta^2\nu  \right)x + O(x^2).
$$
This coefficient $\alpha_\nu$ being constant, one obtains
that
\begin{equation} \label{eq:bd22}
\varphi_x(z):= \frac{  \Dx \Dz k^{\nu=0}(x,z)}{\alpha(x)}
=\frac{ \left( \delta(0)^2 +\theta \delta'(0)
\right) (x -z) +O(|x|^2+|z|^2) }{
\alpha(x)}.
\end{equation}
This expansion is valid for $(x,z)\in {\cal D}_0$
(the domain ${\cal D }_0$ is defined in (\ref{eq:bd79})):
in this case $|x-z|\leq |x|$ and $|z|\leq |x|$. Moreover, since $ \alpha(x)= x(\alpha'(0) + O(1))$ we obtain that
$
\left| \varphi_x(z)\right| \leq
\frac{  \left| \delta(0)^2 +\theta \delta'(0)\right|
 }{
| \alpha'(0)|}+O(|x|)$. 
Since there is no such difficulty
for $x$ away from 0, this inequality ends the proof of the proposition.
\end{proof}

\begin{rmk}\label{rk:kerb}
A similar property holds for $ \frac{  \Dx \Dz k^{\nu}(x,z)}{\alpha(x)+i\nu}$ which also belongs to  $L^\infty\left({\cal D}_0\right)$ for all $\theta$ and uniformly for $ \nu \in [-1,1] \setminus \{0\}$, that is
\begin{equation}\label{eq:ddfond}
\left\|\frac{  \Dx \Dz k^{\nu}(x,z)}{\alpha(x)+i\nu}\right\|_{L^\infty\left({\cal D}_0\right)} \leq  C^\theta, \quad \nu \in [-1,1] \setminus \{0\}.
\end{equation}
\end{rmk}
Such estimate is  sufficient to control some  $L^\infty$  bounds
of the series that defines  the iterated
kernel $\K(x,z;0)$:
\enn{\tab{rl}{
 \left| K_{n+1}^{\theta,\nu}(x,z ; 0)\right|  &
\displaystyle =
\left| \int_{0}^x \frac{\Dx \Dz k^{\nu}(x,t)}{\alpha(x)+i\nu}
 K_{n}^{\theta,\nu}(t,z; 0) dt\right|,
\\ &\displaystyle \leq C_\theta^{n+1}
\underbrace{ \int_{0<x_1<\dots<x_{n}<x} \prod_{1\leq i \leq n}\ dx_i}_{\frac{x^n}{n!}},
}}
so that
\enn{
\left|\K(x,z;0)\right|  =\left| \sum_{n=0}^\infty  K_{n+1}^{\theta,\nu}(x,z ; 0) \right| \leq C_\theta \left( e^{C_\theta H}-1. \right)
}
However,  $L^\infty$ bounds are
 not sufficient to show
that  $\K(x,z;0)$ is of 
  H\"older class in $z$ in the vicinity of  $x=0$ :
That is, one cannot pass to the limit using  the Plemelj-Privalov theorem for all values
of the parameters involved 
in (\ref{eq:bd8}, \ref{eq:bd10}).
This is why we will develop another approach to give
a meaning to the limit value.

\section{The space $ \mathbb{X}^{\theta,\nu}$ ($\nu\neq 0$)}
\label{sec:5}

The solutions of the integral equations
evidently belong to a vectorial space 
of dimension two: see also (\ref{H10}).
In a first stage we will  
 design a particular basis in this space, 
in a second stage we will study  
the  properties of the two basis functions. 
A careful analysis of this
 singularity will allow
to show that one basis function (more precisely the $x$ component
of electric field)
is the sum
of 
a singular part
 $\frac1{\alpha(x)+i\nu}$ plus a term which
is bounded in $L^p$ ($1\leq p<\infty$)
uniformly with respect to $\nu$.
It will be the central result of this part.

For the simplicity of notations,
we  restrict the parameter to  $0<\nu\leq 1$ without loss
of generality. The extension to negative $\nu$ will be considered
in section (\ref{sec:x-}).
 We   define the vectorial
space of all solutions of the X-mode equations 
\begin{equation} \label{eq:bd26}
\mathbb{X}^{\theta,\nu}=
\left\{
x\mapsto \left(
U(x), V(x), W(x)\right),
\; \mbox{ for all solutions of the system }
(\ref{sys0:hatmu})
\right\}.
\end{equation}
One may also use the notation:
$
\mathbf{U}^{\theta,\nu}=
( U^{\theta,\nu}, V^{\theta,\nu},W^{\theta,\nu})\in  \mathbb{X}^{\theta,\nu}
$. 
This section is devoted to the analysis of this space.

\begin{rmk}
The property that 
$
\mbox{dim }\mathbb{X}^{\theta,\nu}=2
$
is also evident considering the  right hand side
of the integral equation (\ref{eq:bdeq:bd13}).
\end{rmk}

By elimination  $  U^{\theta,\nu} $
in (\ref{sys0:hatmu}), one gets a
system of two coupled ordinary differential  equations
\begin{equation} \label{H10}
\frac{d}{dx}
\left(
\begin{array}{c}
 V^{\theta,\nu} \\
 W^{\theta,\nu}
\end{array}
\right)=
A^{\theta,\nu}(x)
\left(
\begin{array}{c}
 V^{\theta,\nu} \\
 W^{\theta,\nu}
\end{array}
\right)
\end{equation}
 with 
\begin{equation} \label{H10.2}
{ A^{\theta,\nu}(x) }=
{
\left(
\begin{array}{cc}
\frac{\theta \delta(x) }{\alpha(x)+i\nu} &
1-\frac{\theta^2}{\alpha(x)+i\nu}
\\
\frac{\delta(x)^2}{\alpha(x)+i\nu}-
\alpha(x)-i\nu
 &
-\frac{\theta \delta(x) }{\alpha(x)+i\nu} 
\end{array}
\right)}
.
\end{equation}
In the case $\nu\neq 0$ the matrix is non singular for all 
$x$, which gives a meaning to the regularized problem.
One notices  the matrix  is singular for $\nu=0$.

\begin{lem}\label{lem:indep}
Take two  solutions 
$\left(V^{\theta,\nu}, 
W^{\theta,\nu}   \right)$ and
$\left(\widetilde{V}^{\theta,\nu}, 
\widetilde{W}^{\theta,\nu}   \right)$ 
of (\ref{H10}). Define the   Wronskian
\begin{equation} \label{H14}
{\cal W}(x)=  V^{\theta,\nu}(x)\widetilde{W}^{\theta,\nu}(x)-
{W}^{\theta,\nu}(x) \widetilde{V}^{\theta,\nu}(x).
\end{equation} 
Then the Wronskian is constant:
$
{\cal W}(x)= {\cal W}(0)$ for all $ x$.
\end{lem}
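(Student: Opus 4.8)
The plan is to recognize that this is the standard Abel--Liouville identity for the Wronskian of a first-order linear planar system, and that the crucial structural fact making the Wronskian constant is that the matrix $\A(x)$ in \eqref{H10.2} is traceless. First I would differentiate the Wronskian \eqref{H14} directly, using the fact that both pairs $\left(V^{\theta,\nu},W^{\theta,\nu}\right)$ and $\left(\widetilde{V}^{\theta,\nu},\widetilde{W}^{\theta,\nu}\right)$ satisfy \eqref{H10}. Writing the entries of $\A(x)$ as $a_{ij}(x)$, so that $(V^{\theta,\nu})' = a_{11}V^{\theta,\nu}+a_{12}W^{\theta,\nu}$ and $(W^{\theta,\nu})' = a_{21}V^{\theta,\nu}+a_{22}W^{\theta,\nu}$ (and likewise for the tilded solution), I would substitute these into
\[
{\cal W}'(x) = (V^{\theta,\nu})'\widetilde{W}^{\theta,\nu}+V^{\theta,\nu}(\widetilde{W}^{\theta,\nu})'-(W^{\theta,\nu})'\widetilde{V}^{\theta,\nu}-W^{\theta,\nu}(\widetilde{V}^{\theta,\nu})'.
\]

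Upon collecting terms, the contributions involving the off-diagonal entries $a_{12}$ and $a_{21}$ cancel identically, and one is left with ${\cal W}'(x) = (a_{11}(x)+a_{22}(x))\,{\cal W}(x) = \mathrm{tr}\,\A(x)\cdot{\cal W}(x)$. The key observation is then that, reading off the diagonal of \eqref{H10.2}, the trace is
\[
\mathrm{tr}\,\A(x) = \frac{\theta\delta(x)}{\alpha(x)+i\nu}-\frac{\theta\delta(x)}{\alpha(x)+i\nu}=0
\]
for every $x$, so that ${\cal W}'(x)\equiv 0$ and hence ${\cal W}(x)={\cal W}(0)$ for all $x$. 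Since $\nu\neq 0$, the coefficients $a_{ij}$ are continuous (indeed the matrix is non-singular for all $x$, as noted after \eqref{H10.2}), so this differentiation is fully justified on $[-L,\infty)$.

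There is essentially no hard obstacle here: the entire content of the lemma is the vanishing of the trace of $\A(x)$, which is immediate by inspection. The only point requiring any care is the bookkeeping in the cancellation of the off-diagonal terms, but this is the familiar computation behind Liouville's formula and presents no genuine difficulty. One could alternatively phrase the argument by noting that ${\cal W}(x)$ is the determinant of the fundamental matrix built from the two solution columns, and invoke Liouville's formula directly; I would prefer the explicit differentiation above since it keeps the proof self-contained and makes the role of the traceless structure transparent.
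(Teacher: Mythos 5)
Your proof is correct and follows essentially the same route as the paper: the paper also writes the coefficient matrix in the traceless form $\begin{pmatrix} a & b \\ c & -a\end{pmatrix}$ and verifies by direct differentiation that all terms in ${\cal W}'$ cancel. Your framing via $\mathrm{tr}\,A^{\theta,\nu}=0$ and Liouville's formula is just a slightly more explicit statement of the same observation.
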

\begin{proof}
The system (\ref{H10}) main be rewritten as
$$
\frac{d}{dx}
\left(
\begin{array}{c}
V \\
 W
\end{array}
\right)=
\left(
\begin{array}{cc}
a & b
\\
c
 &
-a
\end{array}
\right)
\left(
\begin{array}{c}
V \\
W
\end{array}
\right).
$$
Therefore
$$
\frac{d}{dx}{\cal W}=
\frac{d}{dx}\left(V(x)\widetilde W(x) - W(x) \widetilde V(x)    \right)
$$
$$
=
\left(a V+ b W   \right) \widetilde W +
V \left(c \widetilde V - a \widetilde W\right)
-
\left(c  V - a  W\right) \widetilde V
-W
\left(a \widetilde V+ b \widetilde W   \right) =0
$$
since all terms cancel each other.
\end{proof}

\subsection{The first basis function}

Next we desire to particularize a convenient basis
in this space.
The first  basis function 
\begin{equation} \label{H17.2}
\mathbf{U}_1^{\theta,\nu}
=
\left( 
U_1^{\theta,\nu}, V_1^{\theta,\nu},W_1^{\theta,\nu}
\right)
\in \mathbb{X}^{\theta,\nu},
\quad 
U_1^{\theta,\nu}(0)=0
\end{equation}
 is the natural one which
is smooth at the origin. For that reason $G$ is chosen to be the origin in this subsection,
 so that the corresponding integral equation has a bounded
right-hand side and a bounded kernel.
It is naturally characterized by
\begin{equation} \label{H17}
V_1^{\theta,\nu}(0)=  i \theta ,
\qquad \mbox{ and }
 W_1^{\theta,\nu}(0)=  i \delta(0) \qquad (\neq 0).
\end{equation}

\begin{prop}\label{prop:bbf1}
The  basis function (\ref{H17.2})
 is uniformly bounded with respect to
$\nu$:
for any interval $\theta \in [\theta_-,\theta_+]$ and any $H\in 
]L, \infty[$, there exists a constant
independent of $\nu$
such that
\begin{equation} \label{H18}
\left\| U_1^{\theta,\nu}\right\|_{ L^\infty(-L, H) } 
+\left\|  V_1^{\theta,\nu}\right\|_{ L^\infty(-L, H) } 
+
\left\| W_1^{\theta,\nu}
\right\|_{ L^\infty(-L, H) } \leq C.
\end{equation}
\end{prop}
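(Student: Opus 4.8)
The plan is to read off everything from the integral representation of Proposition \ref{prop:bd1} with the reference point $G=0$, which is exactly the choice made to define this first basis function. First I would pin down the integration constants: evaluating the linear system \eqref{eq:bd51} at $G=0$ and using the normalization \eqref{eq:bd20} ($A_\nu(0)=1$, $A_\nu'(0)=0$, $B_\nu(0)=0$, $B_\nu'(0)=1$) together with the prescribed Cauchy data \eqref{H17} gives immediately $a_0=V_1^{\theta,\nu}(0)=i\theta$ and $b_0=W_1^{\theta,\nu}(0)=i\delta(0)$. The crucial point is that these constants do not depend on $\nu$. Substituting them into \eqref{eq:bdeq:bd14} and using $\Dx A_\nu(0)=-i\delta(0)$ and $\Dx B_\nu(0)=i\theta$ (from the definition of $\Dx$ and \eqref{eq:bd20}), one checks the algebraic cancellation $F^{\theta,\nu}(0)=a_0(-i\delta(0))+b_0(i\theta)=0$, which is precisely the identity that forces $U_1^{\theta,\nu}(0)=0$.

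The heart of the argument is then to control the right-hand side $F^{\theta,\nu}(x)/(\alpha(x)+i\nu)$ of \eqref{eq:bdeq:bd13} uniformly in $\nu$ near the resonance $x=0$. Since $A_\nu,B_\nu$ and their first two derivatives are bounded on $[-L,H]$ uniformly for $\nu\in(0,1]$ (the standard continuity estimates for \eqref{eq:bd5} recalled after \eqref{eq:bd20}, obtained by a Gronwall argument, with the second derivative controlled directly through the equation), the function $F^{\theta,\nu}$ is $C^1$ with a $\nu$-uniform bound on its derivative; combined with $F^{\theta,\nu}(0)=0$ this yields $|F^{\theta,\nu}(x)|\le C|x|$ uniformly in $\nu$. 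On the other hand hypothesis \eqref{H2} gives the linear lower bound $|\alpha(x)|\ge r|x|$ on $[-L,H]$, hence $|\alpha(x)+i\nu|\ge|\alpha(x)|\ge r|x|$, so the quotient is bounded by $C/r$ near the origin, while away from $x=0$ the denominator is bounded below and there is no difficulty. Thus $\|F^{\theta,\nu}/(\alpha+i\nu)\|_{L^\infty(-L,H)}\le C$ uniformly in $\nu$. Feeding this into the resolvent formula \eqref{eq:bd10} and using the $\nu$-uniform bound $|\mathcal K^{\theta,\nu}(x,z;0)|\le C_\theta(e^{C_\theta H}-1)$ established in Remark \ref{rk:kerb} and the series estimate following it, the integral term is bounded by a constant times $\max(L,H)$, giving a $\nu$-uniform bound on $\|U_1^{\theta,\nu}\|_{L^\infty(-L,H)}$.

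Finally I would recover the bounds on $V_1^{\theta,\nu}$ and $W_1^{\theta,\nu}$ directly from the representation formulas \eqref{eq:bd11} and \eqref{eq:bd12}: the boundary terms $a_0A_\nu+b_0B_\nu$ and $a_0A_\nu'+b_0B_\nu'$ are uniformly bounded by the same continuity estimates, the kernels $\Dz k^\nu$ and $\partial_x\Dz k^\nu$ are smooth and uniformly bounded on the compact domain, and the bound on $U_1^{\theta,\nu}$ just obtained controls the remaining integrals, yielding \eqref{H18}. The only genuine obstacle is the uniform control of the right-hand side at the origin; everything else is routine Gronwall-type propagation. It is worth stressing that this is the favorable case: the choice $G=0$ forces the numerator $F^{\theta,\nu}$ to vanish at $x=0$ and thereby cancels the apparent $1/(\alpha+i\nu)$ singularity, which is exactly what will fail for the second basis function and produce the Dirac-plus-principal-value behavior analyzed later.
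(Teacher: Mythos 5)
Your proposal is correct and follows essentially the same route as the paper: the key cancellation $F^{\theta,\nu}(0)=i\theta(-i\delta(0))+i\delta(0)(i\theta)=0$ combined with a Lipschitz bound on the numerator and the lower bound from \eqref{H2} on the denominator makes the right-hand side of \eqref{eq:bdeq:bd13} uniformly bounded, after which the uniform bound on the resolvent kernel and the integral representations \eqref{eq:bd11}--\eqref{eq:bd12} finish the argument. Your write-up is somewhat more explicit than the paper's (which simply writes the right-hand side as a difference quotient $(h^\nu(x)-h^\nu(0))/(\alpha(x)+i\nu)$ and declares it bounded), but the ideas are identical.
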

\begin{proof}
The right hand side  in the 
integral equation 
(\ref{eq:bdeq:bd13})
is 
$$
g^\nu(x)=\frac{h^\nu(x)}{\alpha(x)+i\nu}
\mbox{ with  }h^\nu(x)=
 i \theta 
\Dx  A_\nu(x)+
 i \delta(0) 
\Dx  B_\nu(x).
$$
With the choice (\ref{H17.2}) one has 
$
h^\nu(0)=
i \theta 
(-i\delta(0))+
i \delta(0) 
(i\theta)=0 $ for all $ \nu$. 
Therefore the right hand side of the integral equation, namely 
$$
g^\nu(x)=\frac{h^\nu(x)-h^\nu(0)}{\alpha(x)+i\nu},
$$
is bounded around $0$. As it is moreover bounded away from $0$, it is bounded in $ L^\infty(-L, H)$ uniformly with respect
to $\nu$.
The solution $U_1^{\theta,\nu}$ (\ref{eq:bd10}) is also bounded, since by the results of Subsection 4.2 the kernel 
$ \K(x,z,0)$ is also uniformly bounded.  
These bounds are uniform with respect to $\nu$. The integral representation
(\ref{eq:bd11})
of the $V_1^{\theta,\nu}$
yields that $V_1^{\theta,\nu}$ is also bounded.
It is similar concerning
the integral  representation
(\ref{eq:bd12})
of the $W_1^{\theta,\nu}$,
so  $W_1^{\theta,\nu}$ is also bounded.
\end{proof}

\subsection{Behavior at infinity}\label{subsec:infty}

Hypothesis \eqref{H3} allows to study a simplified model  with constant 
coefficients
 for $x\geq H$. In fact, it corresponds to a system as in (\ref{H10}) with constant coefficients, which matrix will be denoted $A_\infty^{\theta,\nu}$. 

\begin{prop}
The matrix $A_\infty^{\theta,\nu}$ has two distinct eigenvalues.
The first eigenvalue  $\lambda^{\theta,\nu}$ has a positive
real part.
The second eigenvalue is $-\lambda^{\theta,\nu}$.
\end{prop}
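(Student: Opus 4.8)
The plan is to exploit the trace-free structure of the constant-coefficient matrix $A_\infty^{\theta,\nu}$. Reading off \eqref{H10.2} with $\alpha(x)\equiv\alpha_\infty$ and $\delta(x)\equiv\delta_\infty$, its diagonal entries are $\pm\frac{\theta\delta_\infty}{\alpha_\infty+i\nu}$, so $\operatorname{tr}A_\infty^{\theta,\nu}=0$. Consequently the characteristic polynomial reduces to $\lambda^2+\det A_\infty^{\theta,\nu}=0$, and the two eigenvalues are automatically of the form $\pm\lambda^{\theta,\nu}$ with $(\lambda^{\theta,\nu})^2=-\det A_\infty^{\theta,\nu}$. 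This already delivers the third assertion (the eigenvalues are opposite) for free, and reduces the whole statement to locating the single complex number $(\lambda^{\theta,\nu})^2$ in the plane.

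First I would perform the short algebra. Writing $\beta=\alpha_\infty+i\nu$ and denoting the entries of $A_\infty^{\theta,\nu}$ by $a=\theta\delta_\infty/\beta$, $b=1-\theta^2/\beta$, $c=\delta_\infty^2/\beta-\beta$, one has $(\lambda^{\theta,\nu})^2=a^2+bc$. The term $a^2=\theta^2\delta_\infty^2/\beta^2$ is exactly cancelled by the corresponding term in
\begin{equation}
bc=\Bigl(1-\tfrac{\theta^2}{\beta}\Bigr)\Bigl(\tfrac{\delta_\infty^2}{\beta}-\beta\Bigr)
=\frac{\delta_\infty^2}{\beta}-\beta-\frac{\theta^2\delta_\infty^2}{\beta^2}+\theta^2,
\end{equation}
leaving the clean expression
\begin{equation}\label{eq:lamsqplan}
(\lambda^{\theta,\nu})^2=\theta^2+\frac{\delta_\infty^2}{\beta}-\beta.
\end{equation}

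The decisive step is then to read off the sign of the imaginary part of \eqref{eq:lamsqplan}. Using $\operatorname{Im}(1/\beta)=-\nu/(\alpha_\infty^2+\nu^2)$ one gets
\begin{equation}
\operatorname{Im}\bigl((\lambda^{\theta,\nu})^2\bigr)=-\nu\Bigl(1+\frac{\delta_\infty^2}{\alpha_\infty^2+\nu^2}\Bigr),
\end{equation}
which is strictly negative for $\nu>0$. Hence $(\lambda^{\theta,\nu})^2$ is nonzero, so $\lambda^{\theta,\nu}\neq-\lambda^{\theta,\nu}$ and the eigenvalues are distinct; and its argument lies strictly in $(-\pi,0)$. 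Writing $(\lambda^{\theta,\nu})^2=re^{i\phi}$ with $r>0$ and $\phi\in(-\pi,0)$, its square roots are $\pm\sqrt r\,e^{i\phi/2}$ with $\phi/2\in(-\pi/2,0)$, so $\cos(\phi/2)>0$ and exactly one root has strictly positive real part. Defining $\lambda^{\theta,\nu}$ to be that root completes the proof, the other eigenvalue $-\lambda^{\theta,\nu}$ then having strictly negative real part.

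I do not expect a genuine obstacle here: the entire content is elementary once the trace-free reduction to \eqref{eq:lamsqplan} is made, and the regularization $\nu>0$ is precisely what pushes $(\lambda^{\theta,\nu})^2$ off the closed negative real axis, forcing one root into the open right half-plane. The only point deserving care is that one should not try to extract the square root in closed form, but argue through the argument of $(\lambda^{\theta,\nu})^2$ as above; this is what yields $\operatorname{Re}\lambda^{\theta,\nu}>0$ with no case distinction on the signs of $\alpha_\infty$ or $\theta$.
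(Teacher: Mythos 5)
Your proof is correct, and the trace-free reduction together with the determinant computation agrees with the paper's: both reduce the statement to locating $(\lambda^{\theta,\nu})^2=-\det A_\infty^{\theta,\nu}=\theta^2+\frac{\delta_\infty^2}{\alpha_\infty+i\nu}-(\alpha_\infty+i\nu)$ off the closed negative real axis. The difference is which part of this complex number you inspect. You show its imaginary part equals $-\nu\bigl(1+\frac{\delta_\infty^2}{\alpha_\infty^2+\nu^2}\bigr)<0$, which uses only $\nu>0$ and no structural hypothesis on the coefficients. The paper instead shows $\operatorname{Re}\det A_\infty^{\theta,\nu}=\alpha_\infty\bigl(1-\frac{\delta_\infty^2}{\alpha_\infty^2+\nu^2}\bigr)-\theta^2<0$, which relies on the coercivity assumption \eqref{H4} (together with $\alpha_\infty<0$, forced by \eqref{H1}--\eqref{H3}). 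Your route buys independence from \eqref{H4}; the paper's buys uniformity in $\nu$ down to and including $\nu=0$, where your imaginary part vanishes and your argument gives nothing. That endpoint is used later: the vectors $R_\pm$ are observed to be well defined even for $\nu=0$, and the function $\mathbf{U}_3^{\theta,0}$ built from the decaying exponential at $\nu=0$ enters the transversality discussion, so the decay of $e^{-\lambda^{\theta,0}x}$ rests on $\operatorname{Re}\lambda^{\theta,0}>0$, which only the real-part argument (hence \eqref{H4}) delivers. Within the scope of the stated proposition, which lives in the section restricted to $\nu\neq 0$, your argument is complete; just be aware that the paper's choice is not an accident but is what survives the limit $\nu\to 0$.
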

\begin{proof}
The  eigenvalues
are solution to the characteristic equation
$$
\lambda^2-\mbox{tr}(A^{\theta,\nu}_\infty)\lambda
+\mbox{det}(A^{\theta,\nu}_\infty)=0
$$ 
where $\mbox{tr}(A^{\theta,\nu}_\infty)=0$ and
$
\mbox{det}(A^{\theta,\nu}_\infty)=
\alpha_\infty+i\nu-\theta^2-\frac{ \delta_\infty^2 }{ \alpha_\infty+i\nu }$. 
The real part is 
$$
\mbox{real}\left( \mbox{det}(A^{\theta,\nu}_\infty)  \right)=
\alpha_\infty-\theta^2-\frac{ \delta_\infty \alpha_\infty}{ \alpha_\infty^2+\nu^2 }
=
\alpha_\infty\left(
1-\frac{ \delta_\infty^2 }{ \alpha_\infty^2+\nu^2 }
\right)
-\theta^2
$$
and is therefore negative 
due to the coercivity assumption
(\ref{H4}).
So the usual square root  
$
\lambda^{\theta,\nu}=\sqrt{-  \mbox{det}(A^{\theta,\nu}_\infty)}
$
has a positive real part. 
The other one has a negative real part.
\end{proof}

As a consequence 
 any $\mathbf{U}\in\mathbb{X}^{\theta,\nu}$
is at large scale a linear combination of the exponential increasing
function and a exponential decreasing function
\begin{equation} \label{eq:bd43}
\mathbf{U}(x)=c_+ R_+e^{\lambda^{\theta,\nu}x  }+
c_- R_-e^{-\lambda^{\theta,\nu}x  } \qquad H\leq x 
\end{equation}
where $R_+\in \mathbb{C}^3$ and
 $R_-\in \mathbb{C}^3$
are constant vectors and $(c_+,c_-)\in \mathbb{C}^2$ are arbitrary
complex numbers.
Regarding the structure of the matrix and using the second
equation  of the system (\ref{sys0:hatmu}), one gets that
$R_+=(r_+^1,r_+^2,r_+^3)$ with 
$$
r_+^1=\frac{i\theta
r_+^3
-i\delta(H) r_+^2
  }{\alpha(H)+i\nu}, \;
r_+^2=1-\frac{\theta^2}{\alpha(H)+i\nu},
\;
r_+^3=\sqrt{ - \mbox{det}(A^{\theta,\nu}_\infty)}-
\frac{\theta \delta(H) }{\alpha(H)+i\nu}.
$$ 
The other vector 
$R_-=(r_-^1,r_-^2,r_-^3)$ is characterized by
$$
r_-^1=\frac{i\theta
r_-^3
-i\delta(H) r_-^2
  }{\alpha(H)+i\nu}, \;
r_-^2=1-\frac{\theta^2}{\alpha(H)+i\nu},
\;
r_-^3=- \sqrt{-  \mbox{det}(A^{\theta,\nu}_\infty)}-
\frac{\theta \delta(H) }{\alpha(H)+i\nu}.
$$ 
One notices that $R_+$ and $R_-$ are well defined 
for all  $\nu \in \mathbb{R}$, in particular  even for $\nu=0$.

\begin{prop}
The first basis function 
(\ref{H17.2})
is exponentially  growing at large scale ($\nu\neq 0$).
\end{prop}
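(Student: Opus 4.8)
The plan is to translate ``exponentially growing at large scale'' into the statement that, in the far-field decomposition (\ref{eq:bd43}), the coefficient $c_+$ in front of the growing mode $R_+e^{\lambda^{\theta,\nu}x}$ is nonzero. Since $\mathrm{Re}\,\lambda^{\theta,\nu}>0$ and the eigenvector $R_+$ is nonzero for $\nu\neq0$, the claim is exactly $c_+\neq0$. I would argue by contradiction: assume $c_+=0$, so that the first basis function $\mathbf U_1^{\theta,\nu}=(U_1^{\theta,\nu},V_1^{\theta,\nu},W_1^{\theta,\nu})$ together with its derivatives decays exponentially as $x\to+\infty$, and in particular belongs to $L^2(0,\infty)$.

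The core of the argument is a dissipation identity for the \emph{flux} $J(x):=\mathrm{Im}\big(V_1^{\theta,\nu}(x)\,\overline{W_1^{\theta,\nu}(x)}\big)$. Differentiating and using the first and third equations of (\ref{sys0:hatmu}) to replace $V'$ and $\overline{W'}$ gives
\begin{equation*}
\tfrac{d}{dx}\big(V\overline W\big)=|W|^2+i\theta\,U\overline W-i\delta\,V\overline U-(\alpha-i\nu)|V|^2 ,
\end{equation*}
whose imaginary part is $\theta\,\mathrm{Re}(U\overline W)-\delta\,\mathrm{Re}(V\overline U)+\nu|V|^2$. The hard part is to see that the two cross terms collapse into a nonnegative quantity: eliminating $U$ through the second equation of (\ref{sys0:hatmu}), namely $U=i(\theta W-\delta V)/(\alpha+i\nu)$, the cross terms equal $\mathrm{Re}\big(U\,\overline{(\theta W-\delta V)}\big)=|\theta W-\delta V|^2\,\mathrm{Re}\big(i/(\alpha+i\nu)\big)=\nu|U|^2$. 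This yields the clean identity $J'(x)=\nu\big(|U_1^{\theta,\nu}|^2+|V_1^{\theta,\nu}|^2\big)$, which is the local form of the heating balance and, for $\nu>0$, is sign-definite.

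Integrating this identity from $0$ to $+\infty$ closes the argument. The boundary term at infinity vanishes by the assumed decay, while at the origin the normalization (\ref{H17}) gives $J(0)=\mathrm{Im}\big((i\theta)\overline{(i\delta(0))}\big)=\mathrm{Im}(\theta\delta(0))=0$ since $\theta$ and $\delta(0)$ are real. Hence $0=\nu\int_0^\infty(|U_1^{\theta,\nu}|^2+|V_1^{\theta,\nu}|^2)\,dx$, and because $\nu>0$ the integrand must vanish, so $U_1^{\theta,\nu}\equiv V_1^{\theta,\nu}\equiv0$ on $(0,\infty)$; the first equation of (\ref{sys0:hatmu}) then forces $W_1^{\theta,\nu}\equiv0$ there as well. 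Since $(V_1^{\theta,\nu},W_1^{\theta,\nu})$ solves the linear first-order system (\ref{H10}), whose coefficient matrix is continuous for $\nu\neq0$, uniqueness propagates the zero solution back to $x=0$, contradicting $W_1^{\theta,\nu}(0)=i\delta(0)\neq0$ from (\ref{H17}). Therefore $c_+\neq0$ and the first basis function is exponentially growing. The only delicate point in this plan is the algebraic reduction of the flux derivative to the sign-definite expression $\nu(|U|^2+|V|^2)$; everything else is a standard limiting-flux and ODE-uniqueness argument.
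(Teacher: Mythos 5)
Your proof is correct and follows essentially the same route as the paper: your flux identity $J'=\nu\left(|U|^2+|V|^2\right)$ is precisely the differential form of the paper's integral identity (\ref{eq:bd45}), and both arguments use the normalization (\ref{H17}) to kill the boundary term at $x=0$ and conclude that the solution cannot decay at infinity. The only cosmetic difference is that you close with a contradiction plus ODE uniqueness, while the paper directly observes that $W_1(N)\overline{V_1}(N)$ cannot tend to zero.
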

\begin{proof}
For the sake of simplicity, denote $\mathbf{U}_1^{\theta,\nu}
=
\left( 
U_1, V_1,W_1
\right)$, dropping the $\theta$s and $\nu$s.
Then from system (\ref{sys0:hatmu}) one gets
$$
\left\{
\begin{array}{cccr}
W_1 & + i\theta U_1 &- V_1' &=0, \\
 i\theta  W_1 & -(\alpha +i\nu) U_1
 & -i\delta  V_1  &=0,  \\
- W_1'& + i\delta U_1 &-(\alpha+i\nu) V_1 &=0.
\end{array}
\right.
$$
Multiplying the second equation 
by $\overline{U_1}$ and the third one  by  $\overline{V_1}$, the sum writes 
$$
i\theta W_1 \overline{U_1 }-
W_1' \overline{V_1 }-
\left( \alpha  |U_1|^2+\alpha|V_1|^2  
+ i \delta V_1 \overline{U_1 }-
i \delta U_1 \overline{V_1 }
 \right)- i \nu \left(   |U_1|^2+|V_1|^2  \right)=0.
$$
On the other hand
an integration in the interval $]M,N[$  yields
$$
\int_M^N
\left( i\theta W_1 \overline{U_1 }-
W_1' \overline{V_1 } \right) dx
=
\int_M^N
\left( i\theta W_1 \overline{U_1 }+
W_1 \overline{V_1 }' \right) dx
- W_1(N)  \overline{V_1 }(N)
+  W_1(M)  \overline{V_1 }(M)
$$
$$
= \int_M^N
|W_1|^2   dx
- W_1(N)  \overline{V_1 }(N)
+  W_1(M)  \overline{V_1 }(M),
$$
where we used  the first equation. 
We obtain the identity,
\begin{equation} \label{eq:bd55}
\int_M^N
\left( |W_1|^2 -  \alpha  |U_1|^2-\alpha|V_1|^2  
- i \delta V_1 \overline{U_1 }+
i \delta U_1 \overline{V_1 }
   \right) dx
-i \nu 
\int_M^N  \left(   |U_1|^2+|V_1|^2  \right)dx
\end{equation}
$$
=W_1(N)  \overline{V_1 }(N)
-  W_1(M)  \overline{V_1 }(M).
$$
Splitting between the real and imaginary parts, one gets
the important relation
\begin{equation} \label{eq:bd45}
\nu \int_M^N  \left(   |U_1|^2+|V_1|^2  \right)dx=
\mbox{Im}\left(   W_1(M)  \overline{V_1 }(M) \right) -
\mbox{Im}\left(  
W_1(N)  \overline{V_1 }(N)
\right)
\end{equation}
which is true in fact
for any element in $\mathbb{X}^{\theta,\nu}$ and for any  $M<N$.

Let us take $M=0$: so $V_1(0)=\frac{\theta}{\delta(0)}W_1(0)$
and $\mbox{Im}\left(  
W_1(0)  \overline{V_1 }(0)
\right)=0$.
Therefore
$
\nu \int_0^N  \left(   |U_1|^2+|V_1|^2  \right)dx=
 -
\mbox{Im}\left(  
W_1(N)  \overline{V_1 }(N)
\right)$. 
It  shows that $W_1(N)  \overline{V_1 }(N)\not \rightarrow
 0$ for $N\rightarrow \infty$.
In other words the first basis function does not decrease
exponentially at infinity.
 Considering (\ref{eq:bd43}) it means that this function
is exponentially increasing at infinity.
\end{proof}

\subsection{The second basis function}

The second
basis function
$$
\mathbf{U}_2^{\theta,\nu}=
(U_2^{\theta,\nu},V_2^{\theta,\nu},W_2^{\theta,\nu})
\in \mathbb{X}^{\theta,\nu}
$$
is built with two requirements.
\begin{itemize}
\item  It is  exponentially decreasing at infinity:
there exists  $c_-\in \mathbb{C}$ such that 
\begin{equation} \label{eq:bd46}
\mathbf{U}_2^{\theta,\nu}(x)=c_-
R_-e^{-\lambda^{\theta,\nu}x  }, \qquad H\leq x ,
\end{equation}
\item 
Its value at the origin is  normalized with the requirement
\begin{equation} \label{eq:bd47}
i\nu U_2^{\theta,\nu}(0)=1.
\end{equation}
\end{itemize}
To ensure that these conditions can be satisfied, consider the third function 
\begin{equation} \label{eq:bd46.2}
\mathbf{U}_3^{\theta,\nu}=(U_3^{\theta,\nu},V_3^{\theta,\nu},W_3^{\theta,\nu})(x)=
R_-e^{-\lambda^{\theta,\nu}x  } \qquad H\leq x ,
\end{equation}
where $R_-$ and $\lambda_-$ are defined in Section \ref{subsec:infty}, smoothly extended so that  $
\mathbf{U}_3^{\theta,\nu} \in \mathbb{X}^{\theta,\nu}
$. The identity 
\begin{equation} \label{eq:dieze}
\nu \int_M^N  \left(   |U_3^{\theta,\nu}|^2+|V_3^{\theta,\nu}|^2  \right)dx
=
\mbox{Im}\left(   W_3^{\theta,\nu}(M)  \overline{V_3^{\theta,\nu} }(M) \right) -
\mbox{Im}\left(  
W_3^{\theta,\nu}(N)  \overline{V_3^{\theta,\nu} }(N)
\right)
\end{equation}
with $N\rightarrow \infty$ and $M=0$ shows that
$$
\nu \int_0^\infty  \left(   |U_3^{\theta,\nu}|^2+|V_3^{\theta,\nu}|^2  \right)dx=
\mbox{Im}\left(   W_3^{\theta,\nu}(0)  \overline{V_3^{\theta,\nu} }(0) \right) .
$$
However, from \eqref{sys0:hatmu}, $  V_3^{\theta,\nu}(0) 
=\frac{\theta}{\delta(0)}W_3^{\theta,\nu}(0) -\frac{\nu}{\delta(0)}U_3^{\theta,\nu}(0)$, 
so
one gets
\begin{equation} \label{eq:delta0}
\nu \int_0^\infty  \left(   |U_3^{\theta,\nu}|^2+|V_3^{\theta,\nu}|^2  \right)dx=
-\frac{\nu}{\delta(0)}
\mbox{Im}\left(   W_3^{\theta,\nu}(0)
  \overline{U_3^{\theta,\nu} }(0) \right).
\end{equation}
Since $\delta(0)\neq 0$ which is a major hypothesis
in our work, this shows that $U_3^{\theta,\nu}(0)\neq 0$.
This is why it is always possible to renormalize  with a parameter
\begin{equation} \label{eq:bd47.6}
\mathbf{U}_2^{\theta,\nu}
=c_-
\mathbf{U}_3^{\theta,\nu}, \qquad
c_-=\frac1{i \nu  U_3^{\theta,\nu}(0)}
\end{equation}
so as to 
enforce (\ref{eq:bd47}).

\begin{prop}
With the normalizations (\ref{H17}) and
(\ref{eq:bd46}-\ref{eq:bd47}),
the Wronskian relation  takes the form
\begin{equation} \label{eq:bd48}
V_1^{\theta,\nu}(x)W_2^{\theta,\nu}(x)-
W_1^{\theta,\nu}(x)V_2^{\theta,\nu}(x)=1 \qquad \forall x.
\end{equation}
\end{prop}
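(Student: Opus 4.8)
The plan is to exploit the constancy of the Wronskian established in Lemma \ref{lem:indep} and thereby reduce the identity to an elementary computation at a single point. Both pairs $(V_1^{\theta,\nu},W_1^{\theta,\nu})$ and $(V_2^{\theta,\nu},W_2^{\theta,\nu})$ solve the reduced system (\ref{H10}), so Lemma \ref{lem:indep}, applied to this pair, shows that
$$
\mathcal W(x) = V_1^{\theta,\nu}(x)W_2^{\theta,\nu}(x) - W_1^{\theta,\nu}(x)V_2^{\theta,\nu}(x)
$$
is independent of $x$. It therefore suffices to evaluate $\mathcal W$ at the origin, where all the relevant normalizations are explicit, and to check that the value is $1$.

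For the first basis function the values $V_1^{\theta,\nu}(0)=i\theta$ and $W_1^{\theta,\nu}(0)=i\delta(0)$ are given directly by (\ref{H17}). For the second basis function I would recover $V_2^{\theta,\nu}(0)$ from the fact that $\mathbf U_2^{\theta,\nu}$ solves the full system (\ref{sys0:hatmu}): its second equation evaluated at $x=0$ reads
$$
i\theta W_2^{\theta,\nu}(0) - (\alpha(0)+i\nu)U_2^{\theta,\nu}(0) - i\delta(0)V_2^{\theta,\nu}(0) = 0 .
$$
Using $\alpha(0)=0$ from (\ref{H1}) and the normalization $i\nu U_2^{\theta,\nu}(0)=1$ from (\ref{eq:bd47}), this collapses to $i\theta W_2^{\theta,\nu}(0) - 1 - i\delta(0)V_2^{\theta,\nu}(0)=0$, whence $V_2^{\theta,\nu}(0) = \frac{\theta}{\delta(0)}W_2^{\theta,\nu}(0) + \frac{i}{\delta(0)}$.

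Substituting these four boundary values into $\mathcal W(0)$, the terms proportional to $W_2^{\theta,\nu}(0)$ cancel ($i\theta W_2^{\theta,\nu}(0)$ against $-i\delta(0)\cdot\frac{\theta}{\delta(0)}W_2^{\theta,\nu}(0)$), and the remaining contribution is $-i\delta(0)\cdot\frac{i}{\delta(0)} = 1$. Hence $\mathcal W(x)\equiv 1$, which is the claim. The only genuinely substantive step — and the one place where the standing hypothesis $\delta(0)\neq 0$ quietly enters, making $V_2^{\theta,\nu}(0)$ well defined — is the use of the second equation of (\ref{sys0:hatmu}) at $x=0$ together with $\alpha(0)=0$; everything else is the constancy supplied by Lemma \ref{lem:indep} and routine bookkeeping. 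I do not expect any serious obstacle, since the cancellation is forced by the normalizations and the vanishing of $\alpha$ at the resonance point.
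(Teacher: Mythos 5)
Your proposal is correct and follows essentially the same route as the paper: constancy of the Wronskian from Lemma \ref{lem:indep}, evaluation at the origin using (\ref{H17}), and the second equation of (\ref{sys0:hatmu}) combined with $\alpha(0)=0$ and the normalization (\ref{eq:bd47}). The only cosmetic difference is that the paper reads off $i\theta W_2^{\theta,\nu}(0)-i\delta(0)V_2^{\theta,\nu}(0)=(\alpha(0)+i\nu)U_2^{\theta,\nu}(0)$ directly, whereas you first solve for $V_2^{\theta,\nu}(0)$ and then substitute, which introduces an unnecessary (though harmless) division by $\delta(0)$.
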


\begin{proof}
It is sufficient to compute it at the origin
$$
V_1^{\theta,\nu}(0)W_2^{\theta,\nu}(0)-
W_1^{\theta,\nu}(0)V_2^{\theta,\nu}(0)=
i\theta W_2^{\theta,\nu}(0) - i \delta(0)V_2^{\theta,\nu}(0)
$$
$$
=
(\alpha(0)+i\nu)U_2^{\theta,\nu}(0)=i\nu U_2^{\theta,\nu}(0)=1
$$
using (\ref{sys0:hatmu}) and  thanks to (\ref{eq:bd47}).
\end{proof}

\begin{rmk}
The value of the Wronskian  (\ref{eq:bd48}) is independent 
 of $\nu$. It will be  
of  major interest in the limit   regime $\nu\rightarrow 0$.
\end{rmk}

The non zero Wronskian shows (\ref{eq:bd48}) shows
that the two basis function are linearly independent.
So they span the whole space
$$
\mathbb{X}^{\theta,\nu}
=\mbox{Span}\left\{
\mathbf{U}_1^{\theta,\nu},\mathbf{U}_2^{\theta,\nu}
\right\}, \qquad \nu >0.
$$

\subsection{Passing to the limit $\nu\rightarrow 0$}

We now study 
the limit $\nu\rightarrow 0$.
An important result is that the first basis function
admits a limit which is defined as a continuous
function in ${\cal C}^0[-L,\infty[$ and is independent of the sign
of $\nu$.
On the other hand the second basis function admits
a limit which is singular at $x=0$. Moreover the 
limit is different for 
 $\nu\rightarrow 0^+$ and for 
$\nu\rightarrow 0^-$. 
The linear independence of these limits will be establish
with a transversality condition.

\subsubsection{The first basis function}

There is no difficulty for this case which is easily
treated passing to the limit in the integral equation
(\ref{eq:bd10}), choosing $G=0$.
The limit basis function is referred to as
$$
\mathbf{U}_1^{\theta}=
(U_1^{\theta},V_1^{\theta},W_1^{\theta})
$$
 $\mathbf{U}_1^{\theta}$ is and will be called the regular solution by analogy with the terminology in scattering on the half-line. 
It is defined as the solution of a limit version 
of \eqref{eq:bdeq:bd13}, the $V$ and $W$ component being defined by limit versions of \eqref{eq:bd11} and \eqref{eq:bd12}:
 \sysnn{l}{ 
\displaystyle U_1^\theta (x) - \int_{0}^x \bar K^\theta(x,z) U_1^\theta (z) dz = \bar F^{\theta}(x),
\\\displaystyle  V_1^\theta (x) = i\theta A(x)+i\delta(0)B(x) +  \int_{0}^x \Dz k(x,z)U_1^\theta (z)dz ,
\\\displaystyle  W_1^\theta (x) =i\theta A'(x)+ i\delta(0)B'(x) +  \int_{0}^x \partial_x \Dz k(x,z)U_1^\theta (z) dz ,
}
where
\enn{
\bar K^\theta(x,z) =
\left\{\tab{ll}{
\displaystyle
 \frac{\Dx \Dz k(x,z)}{\alpha(x)}  & \forall x\neq 0 \textrm{ and  }0 \leq z \leq x \textrm{ or } x\leq z \leq 0,
\\ 0&\textrm{ in all other cases},
}\right. 
 }
is the limit kernel described in Proposition \ref{prop:limker} and
\enn{
\bar F^{\theta}(x) =
\left\{\tab{ll}{
\displaystyle\frac{i\theta \Dx  A(x)+i\delta(0) \Dx  B(x) }{\alpha(x)} & \forall x\neq 0,
\\ 
\displaystyle
\frac{\left(i\theta \Dx  A+i\delta(0) \Dx  B\right)'(0)}{\alpha '(0)} & \textrm{otherwise}.
}\right.  }
The right hand side $\bar F^{\theta}$ together with the kernel $\bar K^\theta$ considered in the integration domain are continuous, because $ \Dx  A(0)=
-i\delta(0)$, $ \Dx  B(0)=i\theta$ and see Proposition \ref{prop:limker}.

A preliminary pointwise convergence will be used to obtain an $L^p$ convergence result.
\begin{lem}\label{lem:umu_u}
There is  pointwise convergence of the first component
$$
\left\|\left(U_1^{\theta,\nu}(x)-\frac{F^{\theta,\nu}(x)}{\alpha(x)+i\nu}\right)-\left(U_1^{\theta}-\bar F^\theta\right) (x)\right\|_{L^\infty(]-L,H[)}\rightarrow 0
$$
which yields $\left\|U_1^{\theta,\nu}-U_1^{\theta} \right\|_{L^\infty_{\rm loc} (]-L,0[\cup]0,H[)}\rightarrow 0$. 

As a result the other components satisfy
$$
\left\|V_1^{\theta,\nu}-V_1^{\theta} \right\|_{L^\infty(]-L,H[)}\rightarrow 0,
\mbox{ and }
 \left\|W_1^{\theta,\nu}-W_1^{\theta} \right\|_{L^\infty(]-L,H[)}\rightarrow 0.
$$
\end{lem}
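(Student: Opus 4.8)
The plan is to work throughout with the Volterra integral equation \eqref{eq:bdeq:bd13} at the reference point $G=0$, where for the first basis function the data are $g^\nu:=F^{\theta,\nu}/(\alpha+i\nu)$ with $F^{\theta,\nu}=i\theta\,\Dx A_\nu+i\delta(0)\,\Dx B_\nu$, and the kernel is $K_1^{\theta,\nu}(x,z)=\Dx\Dz k^\nu(x,z)/(\alpha(x)+i\nu)$. By Proposition~\ref{prop:bbf1} one has $F^{\theta,\nu}(0)=0$, so $g^\nu$ is bounded uniformly in $\nu$, and by Remark~\ref{rk:kerb} so is $K_1^{\theta,\nu}$ in $L^\infty({\cal D}_0)$. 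Setting $\eta^\nu:=g^\nu-\bar F^\theta$ and $\Xi^\nu:=(U_1^{\theta,\nu}-g^\nu)-(U_1^\theta-\bar F^\theta)$, I would subtract the regularized equation from its limit version and insert $U_1^{\theta,\nu}=U_1^\theta+\Xi^\nu+\eta^\nu$ to obtain the closed Volterra equation
\[
\Xi^\nu(x)=A^\nu(x)+B^\nu(x)+\int_0^x K_1^{\theta,\nu}(x,z)\,\Xi^\nu(z)\,dz ,
\]
with $A^\nu(x)=\int_0^x(K_1^{\theta,\nu}-\bar K^\theta)(x,z)\,U_1^\theta(z)\,dz$ and $B^\nu(x)=\int_0^x K_1^{\theta,\nu}(x,z)\,\eta^\nu(z)\,dz$. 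Since the kernel is bounded uniformly in $\nu$, a Gronwall estimate adapted to the Volterra structure gives $\|\Xi^\nu\|_{L^\infty(-L,H)}\le(\|A^\nu\|_\infty+\|B^\nu\|_\infty)\,e^{C_\theta(H+L)}$, so the first claim reduces to showing $\|A^\nu\|_\infty\to0$ and $\|B^\nu\|_\infty\to0$.

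The point requiring care---and the main obstacle---is that $g^\nu$ does \emph{not} converge to $\bar F^\theta$ in $L^\infty$: because $F^{\theta,\nu}(0)=0$ one has $g^\nu(0)=0$, whereas $\bar F^\theta(0)=(\delta(0)^2+\theta\delta'(0))/\alpha'(0)$ is generically nonzero, so $\eta^\nu$ carries an $O(1)$ layer at $z=0$ and continuous dependence on $\nu$ cannot be invoked directly. The observation that saves the argument is that this layer has width $O(\nu)$: $\eta^\nu$ is uniformly bounded and tends to $0$ for every $z\neq0$, hence $\eta^\nu\to0$ in $L^1(-L,H)$ by dominated convergence, and then $\|B^\nu\|_\infty\le\|K_1^{\theta,\nu}\|_{L^\infty}\,\|\eta^\nu\|_{L^1}\to0$. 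For $A^\nu$ I would use the decomposition
\[
K_1^{\theta,\nu}-\bar K^\theta=\frac{\Dx\Dz(k^\nu-k^0)}{\alpha(x)+i\nu}-\frac{i\nu}{\alpha(x)+i\nu}\,\bar K^\theta ,
\]
together with the fact that $\Dx\Dz k^\nu$ and $\Dx\Dz k^0$ both vanish on the diagonal (see \eqref{eq:bd16}), which with continuous dependence gives $\Dx\Dz(k^\nu-k^0)=O(\nu|x|)$ on ${\cal D}_0$. Using $|\alpha(x)+i\nu|\ge r|x|$ and the elementary inequality $\nu|x|/|\alpha(x)+i\nu|\le C\nu$, the first term is $O(\nu)$ uniformly; the second term integrates to $\frac{-i\nu}{\alpha(x)+i\nu}(U_1^\theta-\bar F^\theta)(x)$, also $O(\nu)$ since $(U_1^\theta-\bar F^\theta)(x)=O(|x|)$ near the origin. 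Hence $\|A^\nu\|_\infty=O(\nu)\to0$, which proves the first convergence of the statement.

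The remaining assertions follow from this. On any compact set away from the origin the denominators $\alpha+i\nu$ are bounded below, so $\eta^\nu\to0$ uniformly there and $U_1^{\theta,\nu}-U_1^\theta=\Xi^\nu+\eta^\nu\to0$ in $L^\infty_{\rm loc}(]-L,0[\cup]0,H[)$. For $V_1^{\theta,\nu}$ and $W_1^{\theta,\nu}$ I would use their representations \eqref{eq:bd11} and \eqref{eq:bd12}, whose kernels $\Dz k^\nu$ and $\partial_x\Dz k^\nu$ are smooth, bounded, and free of the $1/(\alpha+i\nu)$ singularity. The boundary contributions $i\theta A_\nu+i\delta(0)B_\nu$ and their derivatives converge uniformly by continuous dependence of $A_\nu,B_\nu$ on $\nu$; in the integrals the splitting $\Dz k^\nu\,U_1^{\theta,\nu}-\Dz k^0\,U_1^\theta=\Dz k^\nu(U_1^{\theta,\nu}-U_1^\theta)+(\Dz k^\nu-\Dz k^0)U_1^\theta$ lets one bound the first piece by $\|U_1^{\theta,\nu}-U_1^\theta\|_{L^1}\le\|\Xi^\nu\|_{L^1}+\|\eta^\nu\|_{L^1}\to0$ against the bounded kernel, and the second by the uniform convergence $\Dz k^\nu\to\Dz k^0$ against the bounded $U_1^\theta$. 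This yields the $L^\infty$ convergence of $V_1^{\theta,\nu}$ and $W_1^{\theta,\nu}$ and completes the proof.
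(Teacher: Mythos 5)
Your proof is correct, and it rests on the same foundations as the paper's --- the $G=0$ integral equation \eqref{eq:bdeq:bd13}, the uniform kernel bound \eqref{eq:ddfond}, and the observation that $F^{\theta,\nu}(0)=0$ keeps the right-hand side bounded but leaves an $O(1)$ boundary layer at the origin, so that only the corrected quantity $U_1^{\theta,\nu}-F^{\theta,\nu}/(\alpha+i\nu)$ can converge uniformly --- but it organizes the passage to the limit differently. The paper proceeds in two stages: it first proves pointwise convergence of the raw difference $U_1^{\theta,\nu}-U_1^{\theta}$ away from $x=0$, by bounding the resolvent and applying dominated convergence to the source term of the subtracted integral equation, and then feeds that pointwise convergence into a second dominated-convergence argument to obtain the uniform estimate. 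You instead close a Volterra equation directly for the corrected difference $\Xi^\nu$, so that the boundary layer enters only through $B^\nu$ --- killed by the $L^1$-smallness of $\eta^\nu$ against the bounded kernel --- and the kernel discrepancy enters only through $A^\nu$ --- killed by the quantitative bounds $\Dx\Dz(k^\nu-k^0)=O(\nu|x-z|)$ (both kernels vanish on the diagonal, cf.\ \eqref{eq:bd16}) and $\nu|x|/|\alpha(x)+i\nu|\le \nu/r$ --- after which Gronwall concludes. This is a single-pass argument with an explicit rate $O(\nu)+\|\eta^\nu\|_{L^1}$; its only additional ingredient is the Lipschitz (not merely continuous) dependence of $A_\nu,B_\nu$ on $\nu$ in $W^{2,\infty}(-L,H)$, which is standard for the ODE \eqref{eq:bd5} and is covered by the continuity estimates the paper cites. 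Your treatment of $V_1^{\theta,\nu}$ and $W_1^{\theta,\nu}$ via the $L^1$ convergence of $U_1^{\theta,\nu}-U_1^{\theta}$ is equivalent to the paper's dominated-convergence step.
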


\begin{proof}
{\bf Convergence away from zero}

From the integral equations satisfied by $U_1^{\theta,\nu}$ and $U_1^{\theta}$ one has for all $x\in(-L,\infty)$ and all $\nu \neq 0$ the following integral equation on $ U_1^{\theta,\nu} - U_1^{\theta}$:
\e{ \label{eq:umu-u}
\tab{l}{
\displaystyle \left(U_1^{\theta,\nu}- U_1^{\theta}\right) (x)-\int_0^x \frac{\Dx \Dz k^{\nu}(x,z)}{\alpha(x)+i\nu}\left( U_1^{\theta,\nu} - U_1^{\theta} \right) (z) dz
\\\displaystyle \phantom{U_1^{\theta,\nu} (x)-}
 =\underbrace{ \frac{F^{\theta,\nu}(x)}{\alpha(x)+i\nu}-\bar F^\theta(x)}_{T_1} +\int_0^x \left(\underbrace{ \frac{\Dx \Dz k^{\nu}(x,z)}{\alpha(x)+i\nu} - \bar K(x,z)}_{T_2} \right) U_1^{\theta}(z) dz  .
}}
Since the kernel of equation \eqref{eq:umu-u} is bounded, the resolvent kernel $\mathcal K^{\theta,\nu}$ is bounded, see Remark \ref{rk:kerb}.

Denote $\mathcal F_\nu$ the right hand side of equation \eqref{eq:umu-u}.
Since $F^{\theta,\nu}(0)=0$ and $\Dx \Dz k^{\nu}(0,0)=0$, then $\mathcal F_\nu$ is \underline{bounded} on $]-L,H[$.

The $T_1$ term converges pointwise to 0 at any $x\neq 0$ thanks to the definition of $\bar F^\theta$.
Since $T_2$ pointwise converges to $0$ and because it is bounded as indicated in Remark \ref{rk:kerb}, the dominated convergence theorem shows that the integral term in $\mathcal F_\nu$ pointwise converges to $0$ as long as $x\neq 0$ - note that it is obviously true for $x=0$. Thus $\mathcal F_\nu$ \underline{pointwise converges} to $0$ as long as $x\neq 0$.

As a result, the dominated convergence theorem shows that 
\enn{ \left| U_1^{\theta,\nu} (x)- U_1^{\theta} (x)\right| 
\leq 
\left| \mathcal F_\nu(x)\right| + \left\| \mathcal K^{\theta,\nu} (x,z)\right\|_{L^\infty(\mathcal D_0 \cap \{ x\in]-L,H[ \})} \int_0^x\left| \mathcal F_\nu(z)\right|dz}
pointwise converges to zero as long as $x\neq 0$ as well.

Note that at $x=0$, \eqref{eq:umu-u}  reads
$
U_1^{\theta,\nu} (0)- U_1^{\theta} (0)=\frac{F^{\theta,\nu}(0)}{i\nu}-\bar F^\theta(0) = -\bar F^\theta(0) $.
Then, if  $\bar F^\theta(0) = 
\delta'(0) \theta + \delta(0)^2\neq 0$
the pointwise convergence of $ U_1^{\theta,\nu} - U_1^{\theta}$ at $x=0$ 
does not hold. Indeed, the term $\bar F^\theta(0) $ does not depend on $\nu$. 
However, if $\delta'(0) \theta + \delta(0)^2=  0$ we have pointwise
convergence at $x=0$ since in this case 
$U_1^{\theta,\nu} (0)- U_1^{\theta} (0)=0$ for all $\nu$.

{\bf Convergence on $]-L,H[$}

Despite the last remark, a convergence in $L^\infty(]-L,H[)$ can be obtained subtracting the appropriate quantities to the first component and its limit. 
By (\ref{eq:umu-u})
$$
\displaystyle 
\left| 
\left( 
\left(U_1^{\theta,\nu}- \frac{F^{\theta,\nu}(x)}{\alpha(x)+i\nu}\right)
- 
\left( U_1^{\theta}\right) (x) -\bar F^\theta(x)
\right)
\right|
$$
$$
\leq 
\int_0^x \left| 
\frac{\Dx \Dz k^{\nu}(x,z)}{\alpha(x)+i\nu}\right|
\left|  U_1^{\theta,\nu} - U_1^{\theta} \right| (z) dz
 +\int_0^x \left|
\frac{\Dx \Dz k^{\nu}(x,z)}{\alpha(x)+i\nu} - \bar K(x,z)
 \right| \left|  U_1^{\theta}(z)\right|  dz 
$$
Then, by the dominated convergence theorem, 
 the function
$
\left(U_1^{\theta,\nu}-\frac{F^{\theta,\nu}}{\alpha+i\nu}\right)-\left(U_1^{\theta}-\bar F^\theta\right)
$
converges to zero in $L^\infty(]-L,H[)$.

The convergence of $V_1^{\theta,\nu}$ and $W_1^{\theta,\nu}$ then 
stems from the dominated convergence theorem again. Indeed, since 
 \sysnn{ll}{ 
\displaystyle  V_1^{\theta,\nu} (x)-V_1^\theta (x) &
= i\theta (A_\nu-A)(x)+i\delta(0)(B_\nu-B)(x) 
\\ &\displaystyle \phantom = +  \int_{0}^x \left(\Dz k^\nu(x,z)U_1^{\theta,\nu} (z)- \Dz k(x,z)U_1^\theta (z) \right)dz ,
\\\displaystyle  W_1^{\theta,\nu} (x)-W_1^\theta (x) &=i\theta (A_\nu-A)'(x)+
 i\delta(0)(B_\nu-B)'(x) 
\\ &\displaystyle \phantom = +  \int_{0}^x \left( \partial_x \Dz k^\nu(x,z)U_1^{\theta,\nu}-\partial_x \Dz k(x,z)U_1^\theta (z)\right) dz ,
}
the $L^\infty$ convergence of both terms $\Dz k^\nu(x,z)U_1^{\theta,\nu} (z)- \Dz k(x,z)U_1^\theta (z) $ and $ \partial_x \Dz k^\nu(x,z)U_1^{\theta,\nu}-\partial_x \Dz k(x,z)U_1^\theta (z)$ on $]-L,0[$ and $]0,H[$ ensures that 
the hypothesis of the dominated convergence theorem are satisfied. The convergence then holds on $]-L,H[$ since at $x=0$ it is guaranteed by the convergence of $A_\nu$ and $B_\nu$.
\end{proof}

\begin{prop}
The first basis functions satisfies 
\enn{\left\| \mathbf U_1^{\theta,\nu} -  \mathbf U_1^{\theta} \right\|_{L^p(-L,H)} \rightarrow 0,
\qquad 1\leq p < \infty. 
}
\end{prop}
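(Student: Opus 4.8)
The plan is to handle the three components of $\mathbf U_1^{\theta,\nu}-\mathbf U_1^\theta$ separately, taking the estimates of Lemma \ref{lem:umu_u} as the starting point and upgrading them to $L^p$ on the bounded interval $(-L,H)$. For the $V$ and $W$ components there is essentially nothing new: Lemma \ref{lem:umu_u} already provides convergence in $L^\infty(]-L,H[)$, and since the interval is bounded one has $\left\|f\right\|_{L^p(-L,H)}\le (H+L)^{1/p}\left\|f\right\|_{L^\infty(-L,H)}$, so $L^\infty$ convergence forces $L^p$ convergence for every finite $p$. I would dispose of those two components in a single line.

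The only genuine work concerns the first component, for which Lemma \ref{lem:umu_u} supplies only $L^\infty_{\rm loc}(]-L,0[\cup]0,H[)$ convergence: the pointwise limit fails at $x=0$ because $U_1^{\theta,\nu}(0)-U_1^\theta(0)=-\bar F^\theta(0)$ is independent of $\nu$. The idea is to isolate this defect by writing
$$
U_1^{\theta,\nu}-U_1^\theta
=\left[\left(U_1^{\theta,\nu}-\frac{F^{\theta,\nu}}{\alpha+i\nu}\right)-\left(U_1^{\theta}-\bar F^\theta\right)\right]
+\left[\frac{F^{\theta,\nu}}{\alpha+i\nu}-\bar F^\theta\right].
$$
By Lemma \ref{lem:umu_u} the first bracket already tends to $0$ in $L^\infty(]-L,H[)$, hence in $L^p(-L,H)$ for every finite $p$. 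Everything therefore reduces to proving that the singular term $\frac{F^{\theta,\nu}}{\alpha+i\nu}-\bar F^\theta$ tends to $0$ in $L^p(-L,H)$.

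For this I would argue by dominated convergence. Recall that for the first basis function $F^{\theta,\nu}=h^\nu$, with $h^\nu(x)=i\theta\,\Dx A_\nu(x)+i\delta(0)\,\Dx B_\nu(x)$, and that $h^\nu(0)=0$ for all $\nu$ (proof of Proposition \ref{prop:bbf1}). Thus $\frac{F^{\theta,\nu}(x)}{\alpha(x)+i\nu}=\frac{h^\nu(x)-h^\nu(0)}{\alpha(x)+i\nu}$ is uniformly bounded on $(-L,H)$: near the origin $|h^\nu(x)|\le C|x|$ uniformly in $\nu$ (since $h^\nu(0)=0$ and the first derivatives of $A_\nu,B_\nu$ are bounded uniformly in $\nu$), while $|\alpha(x)+i\nu|\ge|\alpha(x)|\ge r|x|$ by \eqref{H2}, so the quotient is bounded by $C/r$; away from the origin the denominator is bounded below independently of $\nu$. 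Since $\bar F^\theta\in L^\infty(-L,H)$ as well, the integrand $\bigl|\frac{F^{\theta,\nu}}{\alpha+i\nu}-\bar F^\theta\bigr|^p$ admits a fixed bound, integrable on the bounded interval. As $A_\nu,B_\nu$ and their derivatives converge to $A,B$ when $\nu\to0$, one has $h^\nu(x)\to h^0(x)$ and $\alpha(x)+i\nu\to\alpha(x)\neq0$ for every $x\neq0$, so the integrand converges pointwise to $0$ off the single point $x=0$. The dominated convergence theorem then gives $\left\|\frac{F^{\theta,\nu}}{\alpha+i\nu}-\bar F^\theta\right\|_{L^p(-L,H)}\to0$, and the proof is complete.

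The main obstacle, and indeed the whole reason for passing from $L^\infty$ to $L^p$, is the boundary layer of width $O(\nu)$ around $x=0$ on which $\frac{F^{\theta,\nu}}{\alpha+i\nu}$ sweeps from $0$ up to its outer value $\bar F^\theta(0)$. There uniform convergence is hopeless, but the contribution lives on a set whose measure vanishes with $\nu$, so after raising to the power $p$ and integrating it disappears. The uniform $L^\infty$ control coming from \eqref{H2} (equivalently the transversality $\alpha'(0)<0$, which forces $|\alpha(x)|\ge r|x|$ near the origin) is exactly what makes dominated convergence applicable, so this is where the hypothesis that $\alpha$ crosses zero with nonzero slope enters.
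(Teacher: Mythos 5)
Your proof is correct and follows essentially the same route as the paper's: the paper likewise combines the pointwise (a.e.) convergence from Lemma \ref{lem:umu_u} with the uniform $L^\infty$ bound of Proposition \ref{prop:bbf1} and the dominated convergence theorem to pass from $L^1$ to all finite $L^p$ on the bounded interval. Your explicit splitting of $U_1^{\theta,\nu}-U_1^\theta$ into the $L^\infty$-convergent bracket and the singular term $\frac{F^{\theta,\nu}}{\alpha+i\nu}-\bar F^\theta$ is just a more detailed organization of the same ingredients (and note that ``transversality'' in this paper refers to condition \eqref{eq:bd63}, not to $\alpha'(0)<0$).
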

\begin{proof}
The $L^1$ convergence is a consequence of the pointwise convergence obtained in Lemma \ref{lem:umu_u} thanks to the dominated convergence theorem.
Moreover Proposition \ref{prop:bbf1} yields an $L^\infty $ bound for $ \mathbf U_1^{\theta,\nu} -  \mathbf U_1^{\theta} $.
The result is thus straightforward.
\end{proof}

The next result establishes that 
$\mathbf{U}_1^\theta$ is still
exponentially increasing at infinity with a technical condition.

\begin{prop}
Assume hypothesis \eqref{H5}. 
Then 
$\mathbf{U}_1^{\theta=0}$ 
 increases exponentially at infinity. 
\end{prop}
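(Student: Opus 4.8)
The plan is to specialise to $\theta=0$, reduce the system to the scalar Budden equation, and run a contradiction argument built on a coercive energy identity that feeds on hypotheses \eqref{H2}, \eqref{H4} and \eqref{H5}. First I would reduce: at $\theta=0,\nu=0$ the second equation of \eqref{sys0:hatmu} gives $U_1^{0}=-\tfrac{i\delta}{\alpha}V_1^{0}$ away from the origin, and the first gives $W_1^{0}=(V_1^{0})'$, so $V:=V_1^{0}$ solves the Budden equation $-V''+qV=0$ with $q=\tfrac{\delta^2}{\alpha}-\alpha$. The normalization \eqref{H17} becomes $V(0)=0$ and $V'(0)=i\delta(0)\neq 0$. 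Since $W_1^{0}$ is continuous (a uniform limit of functions bounded by Proposition~\ref{prop:bbf1}), $V'$ is bounded near the origin, hence $V(x)=O(x)$; together with $q=O(1/x)$ this makes $q|V|^2=O(x)$ integrable at $x=0$, despite the singular potential.

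Next I would set up the dichotomy at infinity. By \eqref{eq:bd43}, for $x\geq H$ every solution reads $c_+R_+e^{\lambda x}+c_-R_-e^{-\lambda x}$ with $\operatorname{Re}\lambda>0$; for $\theta=0,\nu=0$ one has $\lambda=\sqrt{-\det A^{0,0}_\infty}$, which is \emph{real and positive} because $\alpha_\infty=\alpha(H)<0$ (forced by continuity and \eqref{H2}, since $\alpha$ vanishes only at $0$ and $\alpha'(0)<0$) so that $q_\infty=(\delta_\infty^2-\alpha_\infty^2)/\alpha_\infty>0$ by \eqref{H4}. Moreover the $V$-component of $R_+$ equals $1\neq 0$. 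Hence ``$\mathbf U_1^{0}$ does not grow exponentially'' is equivalent to $c_+=0$, i.e.\ to $V,V'$ decaying exponentially and belonging to $L^2(0,\infty)$. I assume this for contradiction.

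Then comes the heart of the argument, a coercivity estimate. Multiplying $-V''+qV=0$ by $\overline V$, integrating over $(0,N)$ and integrating by parts yields
\[
\int_0^N\!\big(|V'|^2+q|V|^2\big)\,dx=V'(N)\overline{V(N)}-V'(0)\overline{V(0)}.
\]
The boundary term at $0$ vanishes since $V(0)=0$, and the term at $N$ vanishes as $N\to\infty$ by the assumed decay, giving $\int_0^\infty(|V'|^2+q|V|^2)\,dx=0$. Now I bound $q$ from below: on $(0,H]$ hypothesis \eqref{H2} gives $\alpha<0$ and $|\alpha|\geq rx$, hence $q\geq -\|\delta\|_\infty^2/(rx)$, while $q=q_\infty>0$ on $[H,\infty)$. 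Using the Hardy-type bound $\int_0^H |V|^2/x\,dx\leq H\int_0^H|V'|^2\,dx$ (from $V(0)=0$ and Cauchy--Schwarz), I obtain
\[
\int_0^\infty\!\big(|V'|^2+q|V|^2\big)\,dx\ \geq\ \Big(1-\tfrac{\|\delta\|_\infty^2 H}{r}\Big)\int_0^H|V'|^2\,dx.
\]
By \eqref{H5} the prefactor exceeds $3/4$, and $\int_0^H|V'|^2>0$ because $V'(0)\neq 0$ and $V'$ is continuous; so the left-hand side is strictly positive, contradicting its vanishing. Therefore $c_+\neq 0$ and $\mathbf U_1^{\theta=0}$ grows exponentially.

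The main obstacle is not the coercivity bound, which is short, but the legitimacy of the reduction: one must confirm that the integral-equation limit $V_1^{0}$ is exactly the regular Budden solution and that the energy identity is valid \emph{across} the singular potential at $x=0$. This rests entirely on the continuity and boundedness of $(V_1^{0},W_1^{0})$ and on the fact that $V(0)=0$ simultaneously kills the boundary term and renders $q|V|^2$ integrable. The remaining point to check carefully is that the negative part of $q$ is confined to $(0,H]$ and is quantitatively dominated by $\|\delta\|_\infty^2/(rx)$, so that the Hardy inequality and \eqref{H5} can absorb it into $\int_0^H|V'|^2$; everything concerning the constant-coefficient tail for $x\geq H$ is then routine bookkeeping.
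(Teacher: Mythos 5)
Your proof is correct and follows essentially the same route as the paper: a contradiction argument based on the energy identity obtained by multiplying the (Budden-reduced) equation by $\overline{V}$ and integrating by parts, coercivity of the constant-coefficient tail via \eqref{H4}, and absorption of the negative part of the potential on $(0,H)$ by a Hardy-type inequality combined with \eqref{H2} and \eqref{H5}. The only real difference is your Hardy step, $\int_0^H |V|^2/x\,dx\le H\int_0^H|V'|^2\,dx$, which is sharper than the paper's use of the classical inequality $\int_0^H u^2/x^2\,dx\le 4\int_0^H |u'|^2\,dx$ and in fact only requires $\|\delta\|_\infty^2 H<r$ rather than the full strength of \eqref{H5}, confirming the paper's own remark that the constant $4$ in that hypothesis is probably not optimal.
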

\begin{rmk}
The constant 4 in the condition \eqref{H5} is probably non optimal.
\end{rmk}

\begin{proof}
We drop the super-index $\cdot ^{\theta=0}$ to simplify:
that is $(U_1,V_1,W_1)$ stands for $(U_1^{0},V_1^{0},W_1^{0})$.
Let us consider the identity (\ref{eq:bd55}) which holds true
at the limit $\nu=0$
$$
\int_0^N
\left( |W_1|^2 -  \alpha  |U_1|^2-\alpha|V_1|^2  
- i \delta V_1 \overline{U_1 }+
i \delta U_1 \overline{V_1 }
   \right) dx
$$
$$
=W_1(N)  \overline{V_1 }(N)
- W_1(0)  \overline{V_1 }(0), \quad 0<N<\infty.
$$
Since we consider the case $\theta=0$, 
$V_1(0)=0$.
Notice also that $W_1=V_1'$, so the relation is rewritten as
$$
\int_0^N
\left( |V_1'|^2 -  \alpha  |U_1|^2-\alpha|V_1|^2  
- i \delta V_1 \overline{U_1 }+
i \delta U_1 \overline{V_1 }
   \right) dx=W_1(N)  \overline{V_1 }(N).
$$
Let us proceed by contradiction:  we assume that the  function
 is exponentially decreasing at infinity.
It yields
$$
\int_0^\infty
\left( |V_1'|^2 -  \alpha  |U_1|^2-\alpha |V_1|^2  
- i \delta V_1 \overline{U_1 }+
i \delta U_1 \overline{V_1 }
   \right) dx=0.
$$ 
Notice that 
$-  \alpha  |U_1|^2-\alpha |V_1|^2  
- i \delta V_1 \overline{U_1 }+
i \delta U_1 \overline{V_1 }\geq 0$ for $x\geq H$ due to the coercivity property
(\ref{H4}).
Therefore it implies that
$$
\int_0^H
\left( |V_1'|^2 -  \alpha  |U_1|^2-\alpha |V_1|^2  
- i \delta V_1 \overline{U_1 }+
i \delta U_1 \overline{V_1 }
   \right) dx\leq 0.
$$
Next observe that $U_1=-i\frac{\delta}\alpha V_1$, so that
$
\int_0^H
\left( |V_1'|^2 +  \frac{\delta^2} \alpha  |V_1|^2-\alpha |V_1|^2  
   \right) dx\leq 0$. 
Since $V_1(0)=0$ and $\alpha(x)\approx \alpha'(0)x$ with
$\alpha'(0)<0$ (see hypothesis \ref{H1}), 
 it is convenient to notice the proximity with the 
famous Hardy inequality that we recall, 
\begin{equation} \label{eq:bd60}
\int_0^H \frac{u(x)^2}{x^2} < 4 \int_0^H u'(x)^2, \qquad
 u\in H^1(0,H), \quad  u(0)=0, \quad u\neq 0.
\end{equation} 
Since, thanks to hypothesis \eqref{H2},
$$
\int_0^H \frac{\delta^2}{|\alpha|}  |V_1|^2
=  \int_0^H \delta^2  x \frac{x}{|\alpha|}  \frac{|V_1|^2}{x^2}
\leq \frac{\left\|\delta\right\|_\infty^2 H  }{r}
\int_0^H  \frac{|V_1|^2}{x^2},
$$
it yields the inequality 
$$ 0\leq 
\left(1-4  \frac{\left\|\delta\right\|_\infty^2 H  }{r}  \right) \int_0^H
|V_1'|^2 dx \leq 
\int_0^H
\left( |V_1'|^2 +  \frac{\delta^2}\alpha  |V_1|^2-\alpha |V_1|^2  
   \right) dx\leq  0,
$$
 where we used
 \eqref{H5}. Therefore $V_1$ vanishes on the interval
$[0,H]$. So $U_1$  vanishes and  $W_1$ also vanishes on the interval
which is not compatible with $W_1(0)=i\delta(0)\neq 0$.

\end{proof}

\begin{prop} \label{prop:bd2}
There exists a maximal value 
$\theta_{\rm{thresh}}>0$
such  that: 
If
hypothesis \eqref{H5} 
 is satisfied 
and $|\theta| < \theta_{\rm{thresh}}$, then 
$\mathbf{U}_1^{\theta}$ 
 increases exponentially at infinity. 
\end{prop}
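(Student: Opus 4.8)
The plan is to characterize exponential growth through the far-field decomposition and then argue by continuity in $\theta$, taking the case $\theta=0$ already treated in the previous proposition as the base point. First I would record that, by \eqref{eq:bd43} specialized to $\nu=0$, on $[H,\infty[$ every element of $\mathbb{X}^{\theta,0}$ splits as $c_+ R_+ e^{\lambda^{\theta,0}x}+c_- R_- e^{-\lambda^{\theta,0}x}$, where $\lambda^{\theta,0}$ has positive real part (the determinant computation of the eigenvalue proposition still gives a strictly negative real part under \eqref{H4}) and where $R_\pm$ are well defined at $\nu=0$, as noted after \eqref{eq:bd46.2}. Thus ``$\mathbf{U}_1^{\theta}$ increases exponentially at infinity'' is exactly the condition $c_+^{\theta}\neq 0$, where $c_+^{\theta}$ is the coefficient of the growing mode in the decomposition of $\mathbf{U}_1^{\theta}$. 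Since $\lambda^{\theta,0}$ and $R_\pm$ are explicit continuous (indeed analytic) functions of $\theta$, and since $c_+^{\theta}$ is obtained by projecting the Cauchy data $(\mathbf{U}_1^{\theta}(H), (\mathbf{U}_1^{\theta})'(H))$ onto the $R_+$ direction, the map $\theta\mapsto c_+^{\theta}$ is continuous as soon as $\theta\mapsto \mathbf{U}_1^{\theta}(H)$ is.

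Next I would establish that $\theta\mapsto \mathbf{U}_1^{\theta}$ is continuous on $[-L,H]$, in particular at $x=H$. The regular solution $\mathbf{U}_1^{\theta}$ solves the limit integral equation with bounded kernel $\bar K^\theta$ and bounded right-hand side $\bar F^{\theta}$ written just before Lemma \ref{lem:umu_u}; both depend continuously on $\theta$, since their values at $x=0$ were computed explicitly (using $\Dx A(0)=-i\delta(0)$, $\Dx B(0)=i\theta$) and the kernel is controlled by Proposition \ref{prop:limker}. Repeating the resolvent/Neumann-series estimates behind Remark \ref{rk:kerb}, now tracking the $\theta$ dependence, yields that $\theta\mapsto U_1^{\theta}$ is continuous in $L^\infty(-L,H)$, and then \eqref{eq:bd11}--\eqref{eq:bd12} transfer this to $V_1^{\theta},W_1^{\theta}$, hence to the Cauchy data at $x=H$. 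This is the step I expect to be the main obstacle: one must control the integral equation uniformly as $\theta$ varies, exactly in the spirit of the $\nu$-uniform bounds of Proposition \ref{prop:bbf1} and Lemma \ref{lem:umu_u}, and one must check that the projection onto $R_+^\theta$ stays well defined, which holds because the two far-field directions remain transverse as long as $\lambda^{\theta,0}\neq 0$.

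Granting this continuity, I would conclude as follows. The previous proposition, which uses \eqref{H5}, gives $c_+^{0}\neq 0$. Hence the set $\{\theta: c_+^{\theta}\neq 0\}$ is open and contains $0$, so it contains a maximal symmetric interval $(-\theta_{\rm thresh},\theta_{\rm thresh})$ with $\theta_{\rm thresh}>0$, on which $\mathbf{U}_1^{\theta}$ increases exponentially. Declaring $\theta_{\rm thresh}$ to be the distance from $0$ to the zero set of the continuous function $\theta\mapsto c_+^{\theta}$ makes it the maximal value claimed, and this argument treats both signs of $\theta$ symmetrically.

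Finally, I would note that a quantitative lower bound on $\theta_{\rm thresh}$ (the natural link to the generalized transversality hypothesis \eqref{H6}) can alternatively be extracted by perturbing the energy identity \eqref{eq:bd55} at $\nu=0$: eliminating $U_1$ and $W_1$ through $(\alpha-\theta^2)U_1=i(\theta V_1'-\delta V_1)$ and $W_1=V_1'-i\theta U_1$ reduces the integral over $[0,H]$ to the $\theta=0$ quadratic form, which is controlled by the Hardy inequality \eqref{eq:bd60} together with \eqref{H5}, plus correction terms of size $O(|\theta|)$ in the $H^1$ energy of $V_1$. This route yields an explicit threshold but is sign-asymmetric in $\theta$ and less transparent than the continuity argument, which is why I would present the continuity argument as the principal proof and keep the energy estimate only as a means of making $\theta_{\rm thresh}$ quantitative.
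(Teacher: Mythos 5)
Your proposal is correct and follows essentially the same route as the paper: the paper's proof introduces the Wronskian $\sigma(\theta)=V_1^\theta(H)W_3^\theta(H)-W_1^\theta(H)V_3^\theta(H)$, which vanishes exactly when your growing-mode coefficient $c_+^\theta$ does, shows $\sigma(0)\neq 0$ via the previous proposition, and concludes by continuity of $\sigma$ in $\theta$. Your version merely fills in more detail on the continuity of $\theta\mapsto \mathbf{U}_1^\theta$ through the integral equation (which the paper asserts without proof) and adds an optional quantitative remark not present in the paper.
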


Let us denote by  $(U_3^\theta,V_3^\theta,W_3^\theta)$ the solution to \eqref{sys0:hatmu}  for $ x>0$
that satisfies the exponentially decreasing condition (\ref{eq:bd46.2}) with $\nu=0$.
\begin{proof}
Let us consider the function
\begin{equation} \label{eq:bd47.8}
\sigma(\theta)=V_1^\theta(H)W_3^\theta(H)
-
W_1^\theta(H)V_3^\theta(H)
\end{equation}

By definition 
$$
\left(
V_3(H),W_3(H)
\right)
=\left(1-\frac{\theta^2}{\alpha_\infty}  ,
-\frac{\theta \delta_\infty}{\alpha_\infty  }-
\sqrt{-\alpha_\infty+\theta^2+ 
\frac{\delta_\infty^2}{\alpha_\infty}    }\right)e^{-\sqrt{-\det A^{\theta,\nu=0}_\infty}H   }.
$$
This vector is real and always non  zero.
Therefore 
the function $\theta\mapsto f(\theta)$ is well defined. 
This function naturally satisfies two properties
\begin{itemize}
\item $\sigma(0)\neq 0$ since $(V_1^0,W_1^0)$ is exponentially increasing
by virtue of the previous
property. Indeed $\sigma(0)=0$ if and only if 
the functions  $x\mapsto (V_1^0(x),W_1^0(x))$ and 
$x\mapsto (V_3^0(x),W_3^0(x))$ are linearly dependent, which is not true.
\item the function $\sigma$ is continuous since the first basis function
is continuous with respect to $\theta$. 
\end{itemize} 
Therefore there exists an interval around 0 in which 
$\sigma(\theta)$ is non zero,
which in turn yields the fact that 
$\mathbf{U}_1^{\theta}$  is linearly independent
of $\mathbf{U}_3^{\theta}$. Therefore 
$\mathbf{U}_1^{\theta}$ 
is
exponentially increasing.
\end{proof}

\subsubsection{The transversality condition}

 Passing to the limit in the second  basis function
near the origin
is  involved. Indeed we expect that
the limit $U_2^\theta$ is such that
$ U_2^\theta\approx \frac{C}x $ for some local constant $C$.
 Therefore the limit is singular
and special care has to be provided to avoid
any artifacts in the analysis.

Let us define the special Wronskian
between the first and third basis functions
 $$ 
\sigma(\theta,\nu)=
V_1^{\theta,\nu}(H)W_3^{\theta,\nu}(H)- W_1^{\theta,\nu}(H)V_3^{\theta,\nu}(H). 
$$
It is the natural continuous extension with respect
to $\nu$ of the function $\theta\mapsto 
\sigma(\theta)$. 
We rewrite (\ref{eq:bd47.6}) as
$$
\mathbf{U}_2^{\theta,\nu}=
\xi^{\theta,\nu }
\mathbf{U}_3^{\theta,\nu}.
$$
Plugging this relation in the Wronskian
(\ref{eq:bd48}) one gets
that
$
1=\xi^{\theta,\nu} \sigma (\theta,\nu)$. 
This function is continuous with respect to $\nu$. Moreover the function
defined in (\ref{eq:bd47.8}) satisfies 
$\sigma(\theta)=\sigma(\theta,0)$. 
The transversality condition is 
defined as 
the condition
\begin{equation} \label{eq:bd63}
\sigma(\theta)\neq 0.
\end{equation}
 If the transversality
condition  is not satisfied, that is $\sigma(\theta)=0$,
 then by continuity 
$|\xi^{\theta,\nu}|\rightarrow \infty$ for $\nu\rightarrow 0$.
If $\sigma(\theta)=0$, then  the first  basis function and the third function
are   linearly dependent at the limit $\nu=0$.
It is of course possible
to develop the theory in this direction,
but it seems to us less interesting.
Therefore we will always assume the transversality condition\footnote{The 
"transversality condition" is a sufficient condition of linear
 independence.}
from now on.

\begin{prop}
Assume the transversality condition (\ref{eq:bd63}).
Then for all $\epsilon >0$ one has the limit
$$
\left\| \mathbf{U}_2^{\theta,\nu}- 
 \frac1{\sigma(\theta)}\mathbf{U}_3^\theta\right\|_{ 
\left(L^\infty[\epsilon,\infty[\right)^3 } 
 \rightarrow 0.
$$
\end{prop}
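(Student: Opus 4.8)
The plan is to exploit the factorization $\mathbf{U}_2^{\theta,\nu}=\xi^{\theta,\nu}\mathbf{U}_3^{\theta,\nu}$ recorded just above, where $\xi^{\theta,\nu}=1/\sigma(\theta,\nu)$ because $1=\xi^{\theta,\nu}\sigma(\theta,\nu)$, and to reduce the claim to two independent limits: a scalar one, $\xi^{\theta,\nu}\to 1/\sigma(\theta)$, and a vector one, $\mathbf{U}_3^{\theta,\nu}\to\mathbf{U}_3^{\theta}$ in $\left(L^\infty[\epsilon,\infty[\right)^3$. Indeed, writing
$$
\mathbf{U}_2^{\theta,\nu}-\frac{1}{\sigma(\theta)}\mathbf{U}_3^{\theta}
=\xi^{\theta,\nu}\left(\mathbf{U}_3^{\theta,\nu}-\mathbf{U}_3^{\theta}\right)
+\left(\xi^{\theta,\nu}-\frac{1}{\sigma(\theta)}\right)\mathbf{U}_3^{\theta},
$$
the scalar convergence is immediate from the continuity of $\nu\mapsto\sigma(\theta,\nu)$ established above together with the transversality condition $\sigma(\theta)\neq 0$ of (\ref{eq:bd63}); since $\mathbf{U}_3^{\theta}$ is a fixed, bounded function on $[\epsilon,\infty[$, the second term tends to $0$ in $\left(L^\infty[\epsilon,\infty[\right)^3$ while $\xi^{\theta,\nu}$ stays bounded. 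Thus everything reduces to the vector convergence.

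To establish $\mathbf{U}_3^{\theta,\nu}\to\mathbf{U}_3^{\theta}$ I would split $[\epsilon,\infty[$ at the point $H$. On the tail $[H,\infty[$ both functions are explicit, namely $R_-^{\theta,\nu}e^{-\lambda^{\theta,\nu}x}$ and $R_-^{\theta,0}e^{-\lambda^{\theta,0}x}$, and I recall that $R_-^{\theta,\nu}$ and $\lambda^{\theta,\nu}$ are well defined and continuous at $\nu=0$. The eigenvalue proposition of Subsection \ref{subsec:infty} gives $\mathrm{Re}\bigl(\det A^{\theta,\nu}_\infty\bigr)<0$ from the coercivity hypothesis (\ref{H4}); this strict inequality persists at $\nu=0$, so $\mathrm{Re}\,\lambda^{\theta,0}>0$ and, by continuity, $\mathrm{Re}\,\lambda^{\theta,\nu}\ge c>0$ uniformly for $\nu$ small. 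A direct estimate then controls the difference of the two exponentials uniformly in $x$: one bounds $\bigl|e^{-\lambda^{\theta,\nu}x}-e^{-\lambda^{\theta,0}x}\bigr|$ using that $x\,e^{-cx/2}$ is bounded on $[0,\infty[$, so that the whole difference is $O\bigl(|\lambda^{\theta,\nu}-\lambda^{\theta,0}|+|R_-^{\theta,\nu}-R_-^{\theta,0}|\bigr)\to 0$.

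On the compact interval $[\epsilon,H]$ I would invoke continuous dependence of the ODE (\ref{H10}) on the parameter $\nu$. The coefficient matrix $A^{\theta,\nu}(x)$ contains the factor $1/(\alpha(x)+i\nu)$, and hypothesis (\ref{H2}) yields $|\alpha(x)|\ge r|x|\ge r\epsilon>0$ on $[\epsilon,H]$; hence $A^{\theta,\nu}$ is bounded uniformly in $\nu$ and converges uniformly to $A^{\theta,0}$ as $\nu\to 0$. Solving (\ref{H10}) backward from $x=H$ with initial data $(V_3^{\theta,\nu},W_3^{\theta,\nu})(H)\to(V_3^{\theta},W_3^{\theta})(H)$ (the already treated tail), a Gronwall estimate applied to the difference $(V_3^{\theta,\nu}-V_3^{\theta},\,W_3^{\theta,\nu}-W_3^{\theta})$ gives uniform convergence of the second and third components on $[\epsilon,H]$. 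The first component is recovered algebraically from the second equation of (\ref{sys0:hatmu}), namely $U_3^{\theta,\nu}=(i\theta W_3^{\theta,\nu}-i\delta V_3^{\theta,\nu})/(\alpha+i\nu)$, and converges uniformly on all of $[\epsilon,\infty[$ for the same reason that $\alpha$ is bounded away from zero there.

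The hard part will be the uniformity on the unbounded tail $[H,\infty[$: one must ensure the decay rate $\mathrm{Re}\,\lambda^{\theta,\nu}$ does not degenerate as $\nu\to 0$, which is exactly where the strict coercivity (\ref{H4}) enters, and one must compare two exponentials with slightly different \emph{complex} rates uniformly in $x$. I would also emphasize that the restriction $\epsilon>0$ is essential and not merely technical: near $x=0$ the limit component $U_2^{\theta}$ is singular, behaving like $C/x$, so no $L^\infty$ bound can survive on a neighbourhood of the origin, and the bound $|\alpha(x)|\ge r|x|$ from (\ref{H2}) is precisely what collapses there.
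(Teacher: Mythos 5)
Your proof is correct; the paper itself dismisses this proposition with the single word ``Evident'', and your argument supplies precisely the details implicit in the preceding factorization $\mathbf{U}_2^{\theta,\nu}=\xi^{\theta,\nu}\mathbf{U}_3^{\theta,\nu}$ with $\xi^{\theta,\nu}=1/\sigma(\theta,\nu)$, the continuity of $\sigma(\theta,\cdot)$ at $\nu=0$, and the transversality condition (\ref{eq:bd63}). The two genuinely nontrivial points you isolate --- uniform control of the exponential tail on $[H,\infty[$ via $\mathrm{Re}\,\lambda^{\theta,\nu}\geq c>0$ coming from (\ref{H4}) together with the boundedness of $x\,e^{-cx/2}$, and the uniform lower bound $|\alpha(x)+i\nu|\geq r\epsilon$ on $[\epsilon,H]$ from (\ref{H2}) feeding a backward Gronwall estimate --- are exactly the right ones, and your closing remark that the statement must fail on any neighbourhood of the origin is consistent with the singular limit $U_2^{\theta,\pm}$ established later in the paper.
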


\begin{proof}
Evident.
\end{proof}

In order to show that the second basis function
admits a continuous limit for $x<0$, the  strategy
is to solve 
 the integral equation
(\ref{eq:bdeq:bd13}) 
from $G=H$ backward, and to show that fine
estimates on the solution give knowledge of the limit
even for $x<0$.

\begin{figure}[ht]
\begin{center}
\begin{tabular}{cc}
\includegraphics[width=10cm,angle=0]{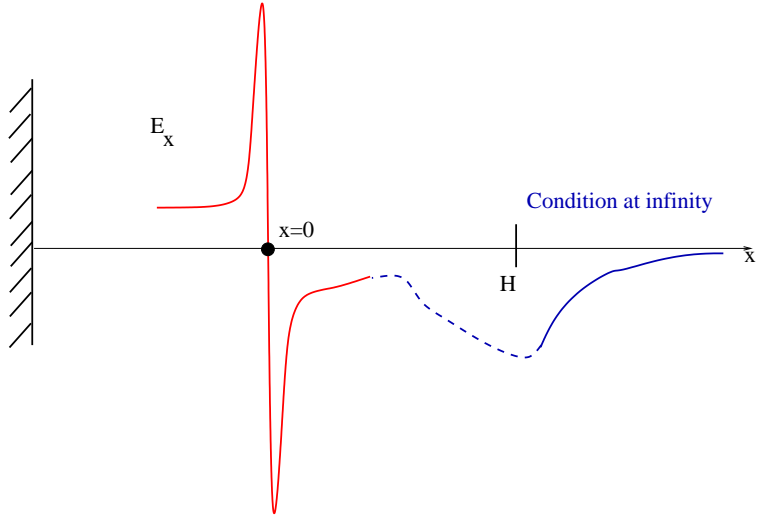}
\end{tabular}
\end{center}
\caption{Schematic representation of the real part of 
the limit electric field of the second basis function
$U_2^{\theta,\nu}$, $\nu>0$. Here the transversality  condition
$\sigma(\theta)\neq 0$
is satisfied, which turns into a singular behavior
at the limit $\nu\rightarrow 0$.}
\label{fig:elecinfini}
\end{figure}


\subsubsection{Continuity estimates}

The integral equation (\ref{eq:bdeq:bd13})  is singular at
the limit.
The whole problem comes form the singularity at $x=0$.
By comparison with the standard literature
\cite{Tricomi,musk,vekua,dautray:lions,picard,bart:warnock,shulaia}
we found no convenient mathematical tool to analyze
its properties.
That is why we develop in the following new continuity estimates
with respect to the parameters of the problem.
On this basis we will manage to pass to to the limit $\nu\rightarrow 0$.

Let us consider a general solution
$\mathbf{U}=(U,V,W)\in \mathbb{X}^{\theta,\nu}$
of the integral equation (\ref{eq:bdeq:bd13}) 
with  prescribed    data in  $H$ under the form
$$
V(H)=a_H
\mbox{ and }
W(H)=b_H.
$$
Let us introduce the compact notation
$$
\| H\| =\left| a_H \right|
+ \left| b_H \right|.
$$
Our goal is to obtain some sharp continuity estimates on the solution
$\mathbf{U}$ with respect to $\|H\|$.
The main point is to bound the constants uniformly with respect
to $0< \nu \leq 1 
$ which is hereafter taken positive for the simplicity of notation.
The reference point can be different from $H$ as well, but
non equal to zero. 
Once these continuity estimates are proved, they will provide enough 
 information
to define the limit $\nu\rightarrow 0$ of  the second basis function.

\begin{prop}\label{prop:boundsU} 
There exists a constant $C_\theta$ 
with continuous dependence with respect to 
$\theta$   such that
\begin{equation} \label{eq:bd71}
\left| U(x) \right|\leq \frac{C_\theta}{\sqrt{r^2 x^2+\nu^2}}
\| H\|,
\qquad 0<x\leq H.
\end{equation}
\end{prop}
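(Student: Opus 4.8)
The plan is to recast the singular integral equation (\ref{eq:bdeq:bd13}) into a non-singular one by absorbing the factor $\alpha(x)+i\nu$ into the unknown. First I would apply the integral representation of Proposition \ref{prop:bd1} with the reference point chosen at $G=H$. The integration constants $(a_G,b_G)$ appearing in (\ref{eq:bdeq:bd14}) are then fixed by the prescribed data $(V(H),W(H))=(a_H,b_H)$ through the linear system (\ref{eq:bd51}); since the Wronskian satisfies $A_\nu B_\nu'-A_\nu'B_\nu=1$ and $A_\nu(H),A_\nu'(H),B_\nu(H),B_\nu'(H)$ are bounded uniformly in $\nu$ by the standard continuity estimates on $A_\nu,B_\nu$, one gets $|a_G|+|b_G|\le C_\theta\|H\|$. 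Consequently $\ninf{F^{\theta,\nu}}\le C_\theta\|H\|$ on $(0,H)$, because $\Dx A_\nu$ and $\Dx B_\nu$ are bounded uniformly in $\nu$.

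The decisive observation is that in (\ref{eq:bdeq:bd13}) the singular factor $\frac{1}{\alpha(x)+i\nu}$ depends on the outer variable $x$ alone. Setting $\beta(x):=\alpha(x)+i\nu$ and introducing the new unknown $\Psi:=\beta U$, multiplication of (\ref{eq:bdeq:bd13}) by $\beta(x)$ gives the equivalent equation
\e{\label{eq:plan1}
\Psi(x)+\int_x^H \frac{\Dx\Dz k^\nu(x,z)}{\beta(z)}\,\Psi(z)\,dz=F^{\theta,\nu}(x),\qquad 0<x\le H,
}
where the singularity has been displaced from the denominator $\beta(x)$ onto $\beta(z)$ with $z\ge x>0$. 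I would then prove that the new kernel $\widetilde K(x,z):=\Dx\Dz k^\nu(x,z)/\beta(z)$ is bounded on the triangle $\{0<x\le z\le H\}$ uniformly in $\nu$. This is exactly where the preliminary identity (\ref{eq:bd16}) is essential: since $(\Dx\Dz k^\nu)(z,z)=0$ and $\partial_x\Dx\Dz k^\nu$ is bounded uniformly in $\nu$ (it only involves $A_\nu,B_\nu$ and their first two derivatives, the latter controlled through the equation (\ref{eq:bd5})), one has $|\Dx\Dz k^\nu(x,z)|\le L\,|x-z|\le L\,z$ on that triangle. Combined with the lower bound $|\beta(z)|=\sqrt{\alpha(z)^2+\nu^2}\ge\sqrt{r^2z^2+\nu^2}\ge r\,z$ furnished by (\ref{H2}), this yields $|\widetilde K(x,z)|\le L/r$, uniformly in $\nu$.

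With a bounded kernel and a bounded right-hand side, (\ref{eq:plan1}) is a backward Volterra equation, so a direct Gronwall argument gives $\ninf{\Psi}\le \ninf{F^{\theta,\nu}}\,e^{LH/r}\le C_\theta\|H\|$ uniformly in $\nu$. Returning to $U=\Psi/\beta$ and using (\ref{H2}) once more in the form $|\beta(x)|\ge\sqrt{r^2x^2+\nu^2}$, I obtain
\enn{
|U(x)|=\frac{|\Psi(x)|}{|\beta(x)|}\le\frac{C_\theta\|H\|}{\sqrt{r^2x^2+\nu^2}},\qquad 0<x\le H,
}
which is precisely (\ref{eq:bd71}); the continuous dependence of $C_\theta$ on $\theta$ is inherited from the polynomial dependence on $\theta$ of $\Dx\Dz k^\nu$ and of $F^{\theta,\nu}$.

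The main obstacle is the step that makes the whole argument work. A naive estimate of the original equation fails, since bounding $\int_x^H|\beta(z)|^{-1}\,dz$ crudely produces a spurious $\log(1/\nu)$ divergence; the cancellation has to be extracted structurally rather than by absolute values. The substitution $\Psi=\beta U$ together with the diagonal vanishing (\ref{eq:bd16}) is what converts the apparent $1/\beta$ singularity into the harmless bounded kernel $\widetilde K$: the numerator vanishes linearly in $z-x\le z$ precisely fast enough to compensate the $r\,z$ in the denominator coming from (\ref{H2}). Identifying and justifying this compensation — not the subsequent routine Volterra/Gronwall estimate — is the crux of the proof.
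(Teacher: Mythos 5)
Your proposal is correct and follows essentially the same route as the paper: both exploit the diagonal vanishing $(\Dx\Dz k^\nu)(x,x)=0$ to get the Lipschitz bound $|\Dx\Dz k^\nu(x,z)|\le C|x-z|\le Cz$, cancel it against the lower bound $|\alpha(z)+i\nu|\ge rz$ from \eqref{H2}, and close with a backward Gronwall argument. Your substitution $\Psi=(\alpha+i\nu)U$ is just a repackaging of the paper's weighted quantity $\sqrt{r^2x^2+\nu^2}\,|U(x)|$ and its Gronwall function $g(x)=\int_x^H z|U(z)|\,dz$, so the two proofs are the same in substance.
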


\begin{proof}
Let us consider
$$
\gamma_\theta=\left(
\sup_{0\leq \nu \leq 1}
\|A_\nu\|_{ W^{1,\infty}(0,H) }+ \sup_{0\leq \nu \leq 1}
\|B_\nu\|_{ W^{1,\infty}(0,H)  }\right) (\|\delta\|_\infty+|\theta|).
$$
The integral equation
 (\ref{eq:bdeq:bd13})  with $G=H$ 
implies that
$$
\left| U(x) \right|
\leq \frac{ \gamma_\theta \| H\|}{
\sqrt{r^2 x^2+\nu^2} }
 + \int_x^H \frac{ |\Dx \Dz k (x,z)| }{ \sqrt{r^2 x^2+\nu^2}  } |U(z)|dz, 
$$
where we used  \eqref{H2}.
Since $\Dx \Dz k (x,x)=0$ for all $x$, there exists a constant 
$\beta_\theta$ such that  
$$
\left\| \Dx \Dz k (x,z) \right\|_{L^\infty]0,H[}
\leq \beta_\theta { |x-z| }
\leq \beta_\theta z \quad \mbox{ for  }0\leq x\leq z.
$$
So
$
 \sqrt{r^2 x^2+\nu^2} |U(x) |
\leq \gamma_\theta \|H\|+ \beta_\theta
\int_x^H z |U(z)|dz
$
and 
$$
r x  |U(x) |
\leq \gamma_\theta \|H\|+ \beta_\theta
\int_x^H z |U(z)|dz, \qquad 0\leq x\leq H.
$$
The Gronwall lemma is useful to study this inequality.
Indeed let us set  $g(x)=\int_{x}^H  |z U(z)|dz$,
 so that the previous inequality is rewritten as
$
-r  g'(x)\leq 
\gamma_\theta \| H\| + \beta_\theta g(x)$. 
Therefore $0\leq 
\gamma_\theta \| H\| 
+ 
r  g'(x)+ 
\beta_\theta g(x)$, that is:
$
0\leq \gamma_\theta \| H\| 
e^{\frac{\beta_\theta}rx   }
+
r
\left(e^{\frac{\beta_\theta}rx   }
g(x)
\right)'$. 
Next we  integrate on the interval $[x,H]$ and use the fact that
$g(H)=0$ by definition. It yields
$
0\leq \gamma_\theta \|H\|
\frac{ e^{\frac{\beta_\theta}rH} - 
e^{\frac{\beta_\theta}r x   } 
   }{\frac{\beta_\theta}r  }
-r e^{\frac{\beta_\theta}r x   } 
g(x)$, 
that is
\begin{equation}
g(x)\leq \frac{
e^{\frac{\beta_\theta}{r} (H-x)   }-1  
}{\beta_\theta}\; 
\gamma_\theta  \|H\|.
\end{equation}
Finally one checks that
$
 \sqrt{r^2 x^2+\nu^2} |U(x) |
\leq 
\gamma_\theta\|H\|+\beta_\theta g(x)
\leq 
e^{\frac{\beta_\theta}r (H-x)   } \gamma_\theta \|H\|
$
which proves (\ref{eq:bd71}).
\end{proof}

Next define
$\| 0\| =\left| V(0) \right|
+ \left| W(0)\right|$.

\begin{prop}\label{prop:maj0}
There exists a constant $C_\theta$ 
with continuous dependence with respect to
$\theta$  such that
\begin{equation}\label{eq:bd74}
\| 0 \|\leq  C_\theta (1+ |\ln \nu | ) \| H\|.
\end{equation}
\end{prop}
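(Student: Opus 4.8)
The plan is to read off $V(0)$ and $W(0)$ from the recovery formulas (\ref{eq:bd11}) and (\ref{eq:bd12}), taking the reference point $G=H$, and then to control the resulting integral terms by the pointwise estimate (\ref{eq:bd71}) on $U$ proved in Proposition \ref{prop:boundsU}. The logarithm in (\ref{eq:bd74}) will arise \emph{only} from integrating the $1/\sqrt{r^2z^2+\nu^2}$ singularity of $U$ near the origin against bounded kernels; the rest is uniform-in-$\nu$ bookkeeping.

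First I would dispose of the contribution of the integration constants. With $G=H$ the constants $(a_G,b_G)$ appearing in (\ref{eq:bd11})--(\ref{eq:bd12}) solve the linear system (\ref{eq:bd51}), whose determinant is the constant Wronskian $A_\nu B_\nu'-A_\nu'B_\nu=1$. Solving by Cramer's rule and using the uniform-in-$\nu$ bounds on $A_\nu,B_\nu,A_\nu',B_\nu'$ on $[0,H]$ gives $|a_G|+|b_G|\le C_\theta\|H\|$. Next, the normalizations (\ref{eq:bd20}), namely $A_\nu(0)=1$, $A_\nu'(0)=0$, $B_\nu(0)=0$, $B_\nu'(0)=1$, make the boundary terms in (\ref{eq:bd11})--(\ref{eq:bd12}) collapse at $x=0$, so that
\begin{equation*}
V(0)=a_G-\int_0^H \Dz k^\nu(0,z)\,U(z)\,dz,\qquad
W(0)=b_G-\int_0^H \partial_x\Dz k^\nu(0,z)\,U(z)\,dz.
\end{equation*}

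The heart of the estimate is then the two integral terms. Since $k^\nu(0,z)=B_\nu(z)$ and $\partial_x k^\nu(0,z)=-A_\nu(z)$, both $\Dz k^\nu(0,z)$ and $\partial_x\Dz k^\nu(0,z)$ are explicit combinations of $A_\nu,B_\nu$ and their first derivatives with the coefficients $\theta$ and $\delta$; hence they are bounded on $[0,H]$ by a constant $C_\theta$ depending continuously on $\theta$, uniformly in $\nu\in(0,1]$. Inserting (\ref{eq:bd71}) reduces both integrals to the single scalar quantity $\int_0^H \frac{dz}{\sqrt{r^2z^2+\nu^2}}$, which I would evaluate explicitly as $\frac1r\ln\!\left(\frac{rH+\sqrt{r^2H^2+\nu^2}}{\nu}\right)$. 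For $0<\nu\le1$ the numerator is bounded, so this quantity is $\le C(1+|\ln\nu|)$. Combining with $|a_G|+|b_G|\le C_\theta\|H\|$ yields $|V(0)|+|W(0)|\le C_\theta(1+|\ln\nu|)\|H\|$, which is exactly (\ref{eq:bd74}).

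The only genuinely delicate point is this last logarithmic integral. The bound (\ref{eq:bd71}) is of size $1/\nu$ at the origin, so one might fear a $1/\nu$ blow-up of $V(0)$ and $W(0)$; the computation shows instead that this borderline singularity integrates to a mere logarithmic divergence, in agreement with the logarithmic behaviour already visible in the Budden solution (\ref{eq:ww}). Everything else --- the Cramer bound on $(a_G,b_G)$, the uniform bound on the two kernels, and the continuous dependence of the constants on $\theta$ --- follows directly from the continuity estimates on $A_\nu,B_\nu$ and requires only routine verification.
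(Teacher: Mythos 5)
Your proof is correct and follows essentially the same route as the paper: read $V(0)$ and $W(0)$ from the integral representations (\ref{eq:bd11})--(\ref{eq:bd12}) with $G=H$, control $(a_G,b_G)$ via the unit Wronskian, and feed the pointwise bound (\ref{eq:bd71}) into the integral terms. The only (harmless) difference is that the paper first splits the kernel as $\bigl(\Dz k^\nu(0,z)-\Dz k^\nu(0,0)\bigr)+i\theta$ so that the logarithm arises solely from $\theta\int_0^H U$, whereas you bound the whole kernel by a constant and extract the same $O(1+|\ln\nu|)$ factor directly from $\int_0^H (r^2z^2+\nu^2)^{-1/2}\,dz$ --- both yield (\ref{eq:bd74}).
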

\begin{proof}
We adopt the same notations as above.
The integral expression of $V$ 
(\ref{eq:bd11}) with $G=H$
 yields the inequality
$$
\left|
V(0) \right|
\leq 
\gamma_\theta \left\| H \right\| 
 +  \left| \int_0^{H} \Dz k(0,z)
U(z)
\right| 
dz
$$
We notice that 
$
\Dz k(0,z)=\left(  \Dz k(0,z)-\Dz k(0,0)\right)
+ \Dz k (0,0)
$.
Since 
$
\Dz k(0,0)=i\theta \partial_z k(0,0)-i\delta k(0,0)=i\theta
$
one gets
$
\left|
\Dz k(0,z) -i \theta 
\right| \leq \eta_\theta  |z| 
$
for some constant $\eta_\theta>0$.
Which gives
$$
\left|
V(0) \right|
\leq 
\underbrace{\gamma_\theta \left\| H \right\| 
 +  \eta_\theta \int_0^{H} z
|U(z)| 
dz}_{Q}
+ |\theta| \underbrace{
\left| \int_0^H U(z) dz \right|}_{R}.
$$
By \eqref{eq:bd71}  $ Q \leq C_{\theta} \| H\|$, and moreover,

$$
R:=
\left| \int_0^H U(z) dz \right|
\leq  C_\theta \|H\|   \left| \int_0^H  \frac{1}{r|x|+\nu} dz \right| \leq  C_\theta \|H\| |\ln \nu|.
$$
 This completes  the proof  for $|V_2(0) |$.
 The  term  $|W(0) |$ is bounded with the same method
starting from the integral (\ref{eq:bd12}) and using the identity
$\partial_x \Dz k_\nu(x,x)= i \delta(z) A_\nu(z) $.
\end{proof}

An interesting question is the following.
Let us consider the integral equation
 (\ref{eq:bdeq:bd13})  with $G=0$.
That is the starting point of the integral is the singularity. 
One may wonder if a direct use of the Gronwall lemma 
may yield valuable estimates, or not.
It appears that a pollution with $\log \nu$ terms
render the result of little interest.

Consider firstly for simplicity $0\leq x$. Then  
 (\ref{eq:bdeq:bd13})
  with $G=0$ turns into
\begin{equation} \label{eq:pol2}
\left|U(x)\right| \leq 
C_\theta \frac{ \|0\| }{\sqrt{r^2 x^2 +\nu^2}}
+ C 
\int_0^x \left| U(z)\right|dz
\end{equation}
where we used (\ref{eq:ddfond}) to bound the kernel. The constant $C_\theta>0$ is 
 chosen large enough. Set $h(x)=\int_0^x \left| U(z)\right|dz$ so that
$
h'(x)\leq C_\theta
\frac{ \|0\| }{\sqrt{r^2 x^2 +\nu^2}}
+C_\theta h(x)$. 
Since $h(0)=0$ the Gronwall lemma yields the inequality
$
h(x)\leq C_\theta'\int_0^x \frac{ \|0\| }{ |z| +|\nu|}
dz
$
that is after integration ($0\leq x \leq H$)
$
\left| h(x)\right| \leq C_\theta'' \|0\| \left( 1+|\ln \nu|\right),
$
for some constant $C_\theta''>0$  with continuous dependence
with respect to $\theta$. Considering the bound (\ref{eq:bd74})
and the symmetry between $0<x$ and $x<0$ in the integral
 (\ref{eq:bdeq:bd13}) (with $G=0$) one obtains
the estimate
\begin{equation} \label{eq:pol}
\left| \int_0 ^x U(z)dz\right|
\leq C_\theta''' \|H\| \left( 1+|\ln\nu|\right)^2
, \quad -L\leq x \leq H.
\end{equation}
 Going back to (\ref{eq:pol2}) which is easily generalized to $x<0$, one gets 
\begin{equation} \label{eq:pol3}
\left|U(x)\right|\leq 
C_\theta\left(
\frac1{\sqrt{r^2x^2+\nu^2}}
+1+|\ln \nu| 
\right) \left( 1+|\ln \nu|   \right) \|H\|, \quad
-L\leq x \leq H.
\end{equation}
By  comparison of (\ref{eq:bd71}) and (\ref{eq:pol3}),
it is clear that this technique generates
 spurious terms of order $\log \nu$ 
for positive $x$. It spoils the possibility 
of having sharp estimates also for negative $x$.
With this respect, the rest of this section is devoted to 
the derivation of various sharp inequalities
which are free of such spurious terms.

Let us define
\begin{equation}\label{eq:bd112}
Q(\mathbf{U})= V_1^{\theta,\nu}(H)W(H)-W_1^{\theta,\nu}(H)V(H). 
\end{equation}
This quantity is  the Wronskian of the current solution
$\mathbf{U}$ against the first basis function.
It is therefore independent of the position $H$ which is used to evaluate
$Q(\mathbf{U})$.

\begin{prop}
There exists a constant $C_\theta$ 
with continuous dependence with respect to
$\theta$ and a continuous function $\nu \mapsto
\varepsilon(\nu)$ with $\varepsilon(0)=0$ 
 such that
\begin{equation} \label{eq:bd87}
\left| \;
|\nu| \left\|
U \right\|^2_{L^2(-L,H)}
- \left| \frac{ \pi Q(\mathbf{U})^2  }{\alpha'(0)}\right|
\; \right|
\leq C_\theta \varepsilon(\nu) \| H\|^2. 
\end{equation}
\end{prop}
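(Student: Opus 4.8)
The plan is to isolate the singular part of $U$ at the resonance $x=0$, whose squared $L^2$-mass is the only contribution surviving multiplication by $|\nu|$, and to read off its coefficient as $Q(\mathbf U)$ itself. First I would identify that coefficient. Since $Q(\mathbf U)$ is the Wronskian of $\mathbf U$ against the first basis function, it is constant in $x$, so I evaluate it at the origin. Using $V_1^{\theta,\nu}(0)=i\theta$ and $W_1^{\theta,\nu}(0)=i\delta(0)$ from (\ref{H17}), together with the second line of (\ref{sys0:hatmu}) at $x=0$ where $\alpha(0)=0$, I get
\begin{equation*}
Q(\mathbf U)=V_1^{\theta,\nu}(0)W(0)-W_1^{\theta,\nu}(0)V(0)=i\theta W(0)-i\delta(0)V(0)=i\nu\,U(0).
\end{equation*}
This is the crucial algebraic fact: the value $U(0)=Q(\mathbf U)/(i\nu)$ forces the leading singular behaviour to be $U(x)\simeq Q(\mathbf U)/(\alpha(x)+i\nu)$, since both sides coincide at $x=0$.

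Accordingly I would set $R:=U-\dfrac{Q(\mathbf U)}{\alpha+i\nu}$ and expand
\begin{equation*}
\|U\|_{L^2(-L,H)}^2=|Q(\mathbf U)|^2\!\int_{-L}^H\!\frac{dx}{\alpha(x)^2+\nu^2}+2\,\mathrm{Re}\Big(Q(\mathbf U)\!\int_{-L}^H\!\frac{\overline R}{\alpha+i\nu}\,dx\Big)+\|R\|_{L^2(-L,H)}^2.
\end{equation*}
The model term is a Cauchy-type integral: changing variables $u=\alpha(x)$ near the origin and using $\alpha'(0)<0$ from (\ref{H1}) gives $|\nu|\int_{-L}^H(\alpha^2+\nu^2)^{-1}dx\to \pi/|\alpha'(0)|$. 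The error is controlled by comparing $\alpha(x)$ with its linearisation $\alpha'(0)x$ (the difference of integrands is $O(|x|^3/(r^2x^2+\nu^2)^2)$) and by the integrable tail away from $0$; after multiplication by $|\nu|$ this costs only $O(\nu\,|\ln\nu|)$. Hence $|\nu|\,|Q(\mathbf U)|^2\int(\alpha^2+\nu^2)^{-1}dx$ converges to $|\pi Q(\mathbf U)^2/\alpha'(0)|$, the claimed main term.

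It remains to show the cross and remainder terms are of lower order, and here the prefactor $|\nu|$ is what rescues us from the logarithmic pollution of (\ref{eq:pol3}). Using the representation (\ref{eq:bd10}) with $G=0$, the forcing is $F^{\theta,\nu}(x)=V(0)\,\Dx A_\nu(x)+W(0)\,\Dx B_\nu(x)$, with $F^{\theta,\nu}(0)=i\theta W(0)-i\delta(0)V(0)=Q(\mathbf U)$, so that
\begin{equation*}
R(x)=\frac{F^{\theta,\nu}(x)-F^{\theta,\nu}(0)}{\alpha(x)+i\nu}+\int_0^x \K(x,z;0)\,\frac{F^{\theta,\nu}(z)}{\alpha(z)+i\nu}\,dz.
\end{equation*}
The first summand is bounded pointwise by $C_\theta\|0\|$, since $F^{\theta,\nu}$ is Lipschitz and $|\alpha(x)+i\nu|\ge r|x|$ by (\ref{H2}); the integral is controlled by the uniformly bounded kernel (\ref{eq:ddfond}) and the $L^1$-in-$z$ estimate $\int_0^{|x|}|\alpha+i\nu|^{-1}\,dz\le C_\theta(1+|\ln\nu|)$. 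Invoking (\ref{eq:bd74}) for $\|0\|\le C_\theta(1+|\ln\nu|)\|H\|$, together with $|Q(\mathbf U)|\le C_\theta\|H\|$ (which holds since $V_1^{\theta,\nu},W_1^{\theta,\nu}$ are bounded), yields $\|R\|_{L^2(-L,H)}\le C_\theta(1+|\ln\nu|)\|H\|$. Consequently $|\nu|\,\|R\|_{L^2}^2\le C_\theta|\nu|(1+|\ln\nu|)^2\|H\|^2$, while Cauchy--Schwarz with $\big\|(\alpha+i\nu)^{-1}\big\|_{L^2}\le C\nu^{-1/2}$ bounds the cross term by $C_\theta\nu^{1/2}(1+|\ln\nu|)\|H\|^2$; both vanish as $\nu\to0$. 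Collecting the three error contributions into a single continuous function $\varepsilon(\nu)$ with $\varepsilon(0)=0$ then gives (\ref{eq:bd87}).

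The main obstacle is the sharp evaluation of the model Cauchy integral, that is, justifying that replacing $\alpha(x)$ by $\alpha'(0)x$ and extending the range to all of $\mathbb R$ costs only $o(1)$ after multiplication by $|\nu|$. By contrast, the control of $R$ is comparatively soft here, precisely because the factor $|\nu|$ absorbs the $|\ln\nu|$ growth that would otherwise be fatal, as already signalled around (\ref{eq:pol3}); so for this proposition the crude bounds (\ref{eq:bd71}) and (\ref{eq:bd74}) suffice and the sharper estimates of the section are not needed.
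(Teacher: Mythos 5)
Your proof is correct and follows essentially the same route as the paper: the same decomposition of $U$ into the Cauchy-kernel term $Q(\mathbf U)/(\alpha+i\nu)$ (whose coefficient is pinned down by evaluating the Wronskian at the origin) plus a remainder controlled by the crude bounds (\ref{eq:bd71}), (\ref{eq:bd74}) and the logarithmic estimate, with the factor $|\nu|$ absorbing the $\ln\nu$ pollution, and the same limit $|\nu|\int(\alpha^2+\nu^2)^{-1}\to\pi/|\alpha'(0)|$ for the main term. The only slip is bookkeeping: your chain actually gives $\|R\|_{L^2}\leq C_\theta(1+|\ln\nu|)^2\|H\|$ rather than $(1+|\ln\nu|)\|H\|$, which is harmless since $\nu(1+|\ln\nu|)^4\to 0$ all the same.
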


\begin{proof} 
We consider positive $\nu$ to simplify the notations.
The proof is easily adapted for negative $\nu$.

Consider the integral equation (\ref{eq:bdeq:bd13})
with $G=0$. One gets
$$
U(x)=\frac{a_0\Dx A(x)+b_0\Dx B(x)}{\alpha(x)+i\nu}
+\int_0^x \frac{\Dx \Dz k(x,z)}{\alpha(x)+i\nu }
   U(z)dz.
$$
Here $(a_0,b_0)$ are a priori different from
$(a_H,b_H)$.
Due to (\ref{eq:bd112}), the normalization of $\mathbf U_1$ and thanks 
to Lemma \ref{lem:indep} one has that 
$
a_0\Dx A(0)+b_0\Dx B(0)=  Q(\mathbf{U})$. 
So the integral equation  can be written as
\begin{equation} \label{eq:bd83}
U(x)=
\underbrace{\frac{ Q(\mathbf{U})  }{\alpha(x)+i\nu}}_{S_1}
\end{equation}
$$
+
\underbrace{
 a_0 \frac{\Dx A(x)-\Dx A (0)}{\alpha(x)+i\nu}
+
b_0 \frac{\Dx B(x)-\Dx B (0)}{\alpha(x)+i\nu}
}_{S_2}
+
\underbrace{
\int_0^x \frac{\Dx \Dz k(x,z)}{\alpha(x)+i\nu }  U(z)dz
}_{S_3}.
$$
\begin{itemize}
\item The $L^2$ norm of the first term $S_1$  depends upon the 
 value of 
$$
D_\nu=\int_{-L}^H
\frac{\nu}{\alpha(x)^2+\nu^2}dx.
$$
Make the change of variable
$x=\nu w$ so that
$
D_\nu=\int_{-\frac{L}\nu}^{\frac{H}\nu}
\frac{1}{ 
b_\nu(w) ^2+1}dw$ and 
$b_\nu(w)=\frac{\alpha(\nu w)}\nu $.
Using the hypothesis 
(\ref{H2}) one has that $| b_\nu( w ) | \geq r w$, $r>0$. Since
$\int_\mathbb{R} \frac{dw}{r^2 w ^2+1  } =\frac{\pi}{r}<\infty$ and
the point-wise limit of $b_\nu(x)$ is $\alpha'(0)x$, the Lebesgue 
dominated convergence theorem states that
$
\lim_{0^+} D_\nu=\frac{\pi}{|\alpha'(0)|}$. Considering
that 
\begin{equation} \label{eq:bdsquare}
\left|  Q(\mathbf{U})   \right|  \leq C_\theta^1 \| H\|
\end{equation}
 using (\ref{eq:bd112}), there exists a
continuous function 
 $\nu \mapsto
\varepsilon^1(\nu)$ with $\varepsilon^1(0)=0$
such that 
\begin{equation} \label{eq:113}
\left|
\nu \| S_1\|_{L^2(-L-,H)  }^2- \left| \frac{\pi  Q(\mathbf{U})  ^2}{\alpha'(0)}\right|
\right|
\leq C_\theta^1 \varepsilon^1(\nu) \| H\|^2.
\end{equation}

\item The functions 
$\frac{\Dx A_\nu(x)-\Dx A_\nu (0)}{\alpha(x)+i\nu}$ and  $\frac{\Dx B_\nu(x)-\Dx B_\nu (0)}{\alpha(x)+i\nu}$ can  bounded in  
$L^\infty$ uniformly with respect
to $\nu$. 
So
$
\int_{-L}^H |S_2(z)|^2 dz \leq c^2_\theta \| 0\|^2$. 
Estimate 
(\ref{eq:bd74}) yields
\begin{equation} \label{eq:s2}
\nu \| S_2\|_{L^2(-L-,H)  }^2 
\leq C_\theta^2 \nu (1
+| \ln \nu |)^2 \| H\|^2, \qquad C_\theta^2>0.
\end{equation} 

\item The last term $S_3$ is 
$$
|S_3(x)|= 
\left| \int_0^x \frac{\Dx \Dz k^\nu(x,z)}{\alpha(x)+i\nu } 
  U(z)dz \right|
\leq c_3^\theta  \left| \int_0^x |U(z)|dz \right|
$$
since 
the kernel is  bounded (\ref{eq:ddfond})
with respect to $\theta$ and uniformly for $\nu\in [0,1]$.
Inequality (\ref{eq:pol})
implies 
that
$
|S_3(x)|\leq 
c_\theta^3 \left( 1+|\ln\nu|  \right)^2 \|H\|$.
Therefore
 this term is bounded like
\begin{equation} \label{eq:s3}
\nu \| S_3\|_{L^2(-L-,H)  }^2 
\leq
c_\theta^4 
\nu \left(1+|\ln\nu|\right)^4
 \| H\|^2
,
\qquad
c_\theta^4>0.
\end{equation} 
\end{itemize}
We complete the proof adding  the three inequalities (\ref{eq:113}-\ref{eq:s3}).
\end{proof}

To pursue the analysis, 
we begin by rewriting 
 the general form of the
integral equation (\ref{eq:bdeq:bd13}),
showing that the various
singularities of the equation can be recombined
  under a more convenient form.
 This intermediate  result is essential
to obtain all following results.
Indeed the integral equation for $U$ (\ref{eq:bdeq:bd13}) choosing $G=0$ writes 
$$
(\alpha(x)+i\nu )U(x)=
a_0 \Dx A_{\nu} (x)+ b_0 \Dx B_{\nu}(x)+
\int_0^x \Dx \Dz k^{\nu}(x,z) U(z)dz.
$$
Since by construction 
$a_0 \Dx A_{\nu} (0)+ b_0 \Dx B_\nu(0)=Q(\mathbf{U})$ one also has
$$
(\alpha(x)+i\nu )U(x)= a_0 
\left( \Dx A_{\nu} (x)- \Dx A_\nu(0)\right)
+ b_0 \left( \Dx B_{\nu}(x) -\Dx B_\nu(0)\right)
$$
$$
+Q(\mathbf{U})
+
\int_0^x \Dx \Dz k^{\nu}(x,z) U(z)dz
.
$$
But one also has
due to the integral equation for $V$ (\ref{eq:bd11}) choosing $G=H$ 
$$
V(0)=a_0=a_H A_{\nu}(0)+b_H B_{\nu}(0)-\int_0^H
\Dz  k^{\nu}(0,z) U(z)dz.
$$
Basic manipulations yield
$$
a_0=a_H -\int_0^H
\left( \Dz  k^{\nu}(0,z) -\Dz  k^{\nu}(0,0) \right)
U(z)dz
-
i\theta \int_0^H   U(z)dz
$$
because $\Dz  k^{\nu}(0,0)=i\theta$. 
Since the function $ \Dz  k^{\nu}$ is continuous, there exists
a constant $C_4^\theta$ independent of $\nu$ such that 
$$
\left| \Dz  k^{\nu}(x,z) -\Dz  k^{\nu}(x,x)  \right| \leq  C_4^\theta  (z-x)
\leq C_4^\theta z 
\quad \mbox{ for }
0 \leq x \leq z \leq H.
$$
Therefore the integral 
$
\int_0^H
\left|  \Dz  k^{\nu}(0,z) -\Dz  k^{\nu}(0,0) \right|
|U(z)|dz\leq
C_4^\theta \int_0^H z |U(z)|dz 
$
is bounded uniformly with respect to $\nu$ thanks to the 
bound given in (\ref{eq:bd71}). 
We summarize this as 
\begin{equation} \label{eq:bda01}
a_0=\tilde a-i\theta \int_0^H U(z)dz
\end{equation}
where $|\tilde a|\leq C_5^\theta \| H\|$ is bounded uniformly with respect to $\nu$.
Similarly 
\e{
b_0 = b_H-\int_0^H \partial_x \Dz  k^{\nu}(0,z) U(z) dz 
}
and since the function $\partial_x \Dz  k^{\nu}$ is continuous and $\partial_x\mathcal D^\theta_z(0,0)= i \delta(0)$
\begin{equation} \label{eq:bdb01}
b_0=\tilde b- i  \int_0^H \delta (0) U(z)dz
\end{equation}
where $\tilde b$ is also bounded uniformly with respect to $\nu$:
$|\tilde b|\leq C_6^\theta \| H\|$.
The integral equation then gives 
$$
(\alpha(x)+i\nu )U(x)= \tilde a
\left( \Dx A_{\nu} (x)- \Dx A _\nu(0)\right)
+ \tilde b \left( \Dx B_{\nu}(x) -\Dx B_\nu(0)\right)
$$
$$
+Q(\mathbf{U})
-\int_0^H Q(x,z) U(z)dz
+
\int_0^x \Dx \Dz k^{\nu}(x,z) U(z)dz
$$
where the new kernel is
$$
Q(x,z)=\left( \Dx A_{\nu} (x)- \Dx A _\nu(0)\right)i\theta
+ \left( \Dx B_{\nu}(x) -\Dx B_\nu(0)\right) i\delta (0)
$$
$$
=
 \Dx A^{\nu} (x) \Dz B_\nu(0)-
 \Dx B^{\nu}(x) \Dz A_\nu(0)=
 \Dx \Dz k^\nu(x,0)
$$
after evident simplifications.
It is convenient to introduce two 
bounded functions $m^{\theta,\nu}=
\frac{\Dx A^{\nu} (x)- \Dx A_\nu(0)}x
$ and $n^{\theta,\nu}=
\frac{\Dx B^{\nu} (x)- \Dx B_\nu(0)}x$
so that (\ref{eq:bdeq:bd13})  is rewritten as 
\begin{equation} \label{eq:bd100}
(\alpha(x)+i\nu )U(x)= \tilde a m^{\theta,\nu} (x) 
x
+\tilde b  n^{\theta,\nu} (x) 
x
+Q(\mathbf{U})
-\int_x ^H \Dx \Dz k^\nu(x,0) U(z)dz
\end{equation}
$$
+ \int_0^x \left(   \Dx \Dz k^\nu(x,z) 
-\Dx \Dz k^\nu(x,0)
\right)
U(z)dz , \quad \forall x\in [-L,\infty[.
$$

A first property which shows that 
(\ref{eq:bd100}) is less singular that its initial form
(\ref{eq:bdeq:bd13})  is the following lemma which uses the 
pointwise estimate  (\ref{eq:bd71}) on $U$
(so an  important restriction is  nevertheless  that $x>0$).

\begin{lem}\label{lem:eqintU}
The first component $U$ of any element $\mathbf U \in \mathbb X^{\theta,\nu}$ satisfies
\begin{equation} \label{eq:bd101}
(\alpha(x)+i\nu )U(x)= 
p^{\theta,\nu} (x) x 
+
Q(\mathbf{U})-\int_x^H \Dx \Dz k^\nu(x,0) U(z)dz
\end{equation}
where 
\begin{equation} \label{eq:ptm}
\|p^{\theta,\nu}
\|_{L^\infty(0,H)  } \leq C ^\theta \|H\|, \qquad
\forall \nu\in [0,1].
\end{equation}
\end{lem}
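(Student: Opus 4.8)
The plan is to start directly from the reorganized integral equation (\ref{eq:bd100}) and to absorb into a single term $p^{\theta,\nu}(x)\,x$ all the contributions that are proportional to $x$ times a function bounded by $C^\theta\|H\|$, leaving $Q(\mathbf U)$ and $-\int_x^H\Dx\Dz k^\nu(x,0)U(z)\,dz$ untouched. Concretely, I would set
$$
p^{\theta,\nu}(x)=\tilde a\, m^{\theta,\nu}(x)+\tilde b\, n^{\theta,\nu}(x)
+\frac1x\int_0^x\left(\Dx\Dz k^\nu(x,z)-\Dx\Dz k^\nu(x,0)\right)U(z)\,dz,
$$
so that (\ref{eq:bd100}) becomes exactly (\ref{eq:bd101}). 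It then suffices to prove the bound $\|p^{\theta,\nu}\|_{L^\infty(0,H)}\le C^\theta\|H\|$ uniformly in $\nu\in[0,1]$.

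The first two pieces are immediate: $m^{\theta,\nu}$ and $n^{\theta,\nu}$ are bounded uniformly in $\nu$ by construction, while $|\tilde a|\le C_5^\theta\|H\|$ and $|\tilde b|\le C_6^\theta\|H\|$ by (\ref{eq:bda01}--\ref{eq:bdb01}). The whole difficulty is therefore concentrated in the integral term. Since $\Dx\Dz k^\nu$ is Lipschitz in $z$ with a constant uniform in $\nu$ — by the same uniform regularity of $A_\nu,B_\nu$ and $\delta$ already invoked for the estimates preceding the statement — a Lipschitz estimate at the base point $z=0$ gives $|\Dx\Dz k^\nu(x,z)-\Dx\Dz k^\nu(x,0)|\le C^\theta z$ for $0\le z\le x\le H$. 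Combining this with the pointwise bound (\ref{eq:bd71}) on $U$ yields
$$
\left|\int_0^x\left(\Dx\Dz k^\nu(x,z)-\Dx\Dz k^\nu(x,0)\right)U(z)\,dz\right|
\le C^\theta\|H\|\int_0^x\frac{z}{\sqrt{r^2z^2+\nu^2}}\,dz.
$$

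The key elementary inequality is then $\dfrac{z}{\sqrt{r^2z^2+\nu^2}}\le\dfrac1r$ for $z>0$, so the last integral is bounded by $x/r$; dividing by $x$ shows that this contribution to $p^{\theta,\nu}$ is $\le C^\theta\|H\|$, which completes the proof. I would stress that this is precisely the step producing a clean factor of $x$ with \emph{no} $\log\nu$ pollution: the extra $z$ furnished by the Lipschitz estimate converts the would-be logarithmically divergent integral $\int_0^x(r^2z^2+\nu^2)^{-1/2}\,dz$ into a uniformly bounded one, in sharp contrast with the crude Gronwall bounds (\ref{eq:pol})--(\ref{eq:pol3}). The only genuine restriction — and the reason the conclusion is stated for $0<x\le H$ — is that the pointwise control (\ref{eq:bd71}) on $U$ is available only for positive $x$; this is the single place where the argument cannot be pushed across the resonance at $x=0$, and it is the main obstacle to keep in mind.
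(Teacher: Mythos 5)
Your proposal is correct and follows essentially the same route as the paper: you define $p^{\theta,\nu}$ exactly as in (\ref{eq:ptm2}), invoke the uniform Lipschitz bound (\ref{eq:bd130}) in $z$ for $\Dx\Dz k^\nu$, and combine it with the pointwise estimate (\ref{eq:bd71}) to get $\int_0^x z|U(z)|\,dz\le C^\theta\|H\|\,x/r$. Your explicit remark that the extra factor of $z$ is what kills the $\log\nu$ divergence, and that the restriction to $x>0$ comes solely from the one-sided validity of (\ref{eq:bd71}), is exactly the point the paper relies on.
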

\begin{proof}
Let us focus on the second integral in (\ref{eq:bd100}).
Continuity properties with respect to the second variable $z$
imply that  there exists a constant $C_7^\theta$
independent of $\nu$ such that
\begin{equation} \label{eq:bd130}
\left|
\Dx \Dz k^\nu(x,z)-\Dx \Dz k^\nu (x,0)
\right| \leq C_7^\theta z.
\end{equation}
So, for $ x\geq 0$, 
$$\left|
\int_0^x \left(   \Dx \Dz k^\nu(x,z) 
-\Dx \Dz k^\nu(x,0)
\right)
U(z)dz
\right|\leq C_7^\theta
\int_0^x z |U(z)|dz 
\leq C_7^\theta C_\theta \| H\|x
$$
using estimate (\ref{eq:bd71}).
Set 
\begin{equation} \label{eq:ptm2}
p^{\theta,\nu}(x)=\tilde a m^{\theta,\nu}(x) 
+\tilde b  n^{\theta,\nu}(x)
+\frac1x 
\int_0^x \left(   \Dx \Dz k^\nu(x,z) 
-\Dx \Dz k^\nu(x,0)
\right)
U(z)dz 
\end{equation}
which satisfies by construction (\ref{eq:ptm}).
\end{proof}

As a consequence one has 
\begin{prop} \label{prop:4.11}
For all $1\leq p < \infty$, there exists a constant $C_p^\theta$
independent of $\nu$ and which depends continuously on $\theta$
such that
\begin{equation} \label{eq:bd102}
\left\|
 U-
\frac{Q(\mathbf{U})}{\alpha(\cdot)+i\nu   }
\right\|_{ L^p(0,H)} \leq C_p^\theta \| H\|.
\end{equation}
\end{prop}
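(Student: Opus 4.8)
The plan is to read the decomposition straight off the recombined integral equation \eqref{eq:bd101} of Lemma \ref{lem:eqintU}, whose whole purpose was to isolate the worst singularity. Dividing \eqref{eq:bd101} by $\alpha(x)+i\nu$ and transferring the term $Q(\mathbf U)/(\alpha(x)+i\nu)$ to the left-hand side, one obtains for $0<x\leq H$
$$U(x)-\frac{Q(\mathbf U)}{\alpha(x)+i\nu}=\frac{x}{\alpha(x)+i\nu}\,p^{\theta,\nu}(x)-\frac{\Dx\Dz k^\nu(x,0)}{\alpha(x)+i\nu}\int_x^H U(z)\,dz,$$
where I have used that $\Dx\Dz k^\nu(x,0)$ does not depend on the integration variable $z$ and hence factors out of the integral. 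The point is that the leading singularity $Q(\mathbf U)/(\alpha+i\nu)$ has now been removed exactly, so it remains only to bound the two terms on the right in $L^p(0,H)$, uniformly in $\nu$.

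For the first term I would invoke hypothesis \eqref{H2}: since $|\alpha(x)+i\nu|\geq|\alpha(x)|\geq r|x|$ on $[-L,H]$, one has $|x/(\alpha(x)+i\nu)|\leq 1/r$, while $\|p^{\theta,\nu}\|_{L^\infty(0,H)}\leq C^\theta\|H\|$ by \eqref{eq:ptm}. Hence this term is bounded in $L^\infty(0,H)$, a fortiori in every $L^p(0,H)$, by $C^\theta\|H\|/r$, uniformly in $\nu$.

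The second term is where the genuine (if mild) work lies. I would bound the prefactor by Remark \ref{rk:kerb}: since $(x,0)\in\mathcal D_0$ for every $x\in(0,H)$, the estimate \eqref{eq:ddfond} gives $|\Dx\Dz k^\nu(x,0)/(\alpha(x)+i\nu)|\leq C^\theta$ uniformly in $\nu$. It then remains to control the tail integral $\int_x^H U$, for which the pointwise bound \eqref{eq:bd71} yields
$$\left|\int_x^H U(z)\,dz\right|\leq C_\theta\|H\|\int_x^H\frac{dz}{\sqrt{r^2z^2+\nu^2}}\leq \frac{C_\theta}{r}\|H\|\,\ln\frac{H}{x},$$
the last inequality using merely $\sqrt{r^2z^2+\nu^2}\geq rz$, so that the $\nu$-dependence disappears entirely and only a logarithmic singularity in $x$ survives. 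This is the crux of the argument: although $U$ itself is only controlled by the non-integrable weight $(r^2x^2+\nu^2)^{-1/2}$, its primitive from $x$ to $H$ grows at most like $\ln(H/x)$, uniformly in $\nu$.

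To conclude I would use the elementary fact that $x\mapsto\ln(H/x)\in L^p(0,H)$ for every $1\leq p<\infty$ — for instance via $u=\ln(H/x)$, giving $\int_0^H|\ln(H/x)|^p\,dx=H\,\Gamma(p+1)<\infty$ — which is precisely where the restriction $p<\infty$ enters, since the bound fails at $p=\infty$. Adding the two contributions gives $\|U-Q(\mathbf U)/(\alpha(\cdot)+i\nu)\|_{L^p(0,H)}\leq C_p^\theta\|H\|$ with a constant independent of $\nu$ and depending continuously on $\theta$, which is \eqref{eq:bd102}. I expect the only real obstacle to be conceptual rather than computational, namely recognizing that after subtracting the exact singular part $Q(\mathbf U)/(\alpha+i\nu)$ the residual blow-up is only logarithmic, hence $L^p$ for all finite $p$ but not uniformly $L^\infty$.
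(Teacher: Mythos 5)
Your argument is correct. Starting from the recombined equation \eqref{eq:bd101} of Lemma \ref{lem:eqintU} and dividing by $\alpha(x)+i\nu$ is exactly the paper's first step, and all the ingredients you invoke are available and used legitimately: $(x,0)\in\mathcal D_0$ for $x>0$ so \eqref{eq:ddfond} bounds the prefactor, \eqref{H2} controls $x/(\alpha(x)+i\nu)$, \eqref{eq:ptm} controls $p^{\theta,\nu}$, and the pointwise bound \eqref{eq:bd71} (valid precisely on $(0,H]$, which is why the proposition is stated there) turns the tail integral $\int_x^H U$ into at most $C_\theta\|H\|\,r^{-1}\ln(H/x)$, uniformly in $\nu$. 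The conclusion via $\ln(H/x)\in L^p(0,H)$ for all finite $p$ is sound.

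Where you diverge from the paper is in the final mechanism. The paper does not estimate $\int_x^H U$ directly; instead it substitutes $U=Z+Q(\mathbf U)/(\alpha+i\nu)$ with $Z=U-Q(\mathbf U)/(\alpha(\cdot)+i\nu)$ inside that integral, producing a non-singular Volterra equation of the second kind for $Z$ with a bounded kernel and a right-hand side controlled in $L^p$ by the same logarithmic bound applied to $\int_x^H(\alpha(z)+i\nu)^{-1}dz$; it then quotes the standard theory (Tricomi) to conclude $\|Z\|_{L^p}\lesssim\|H\|$. Your route is more elementary: since \eqref{eq:bd71} already gives an a priori pointwise bound on $U$ itself for $x>0$, you can close the estimate by a single pointwise inequality without any resolvent argument. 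This buys simplicity and makes transparent that the residual blow-up after subtracting $Q(\mathbf U)/(\alpha+i\nu)$ is only logarithmic. What the paper's formulation buys is a template that is reused afterwards: the same subtract-the-singular-profile manoeuvre, with $U$ replaced by $1/(\alpha(\cdot)+i\nu)$ in the integrals (the function $D^{\theta,\nu}$), is what lets the authors extend the estimate to the whole interval $(-L,H)$ in \eqref{eq:bd110}, where the a priori bound \eqref{eq:bd71} is not initially available. For the statement as given on $(0,H)$, your proof is complete.
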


\begin{proof}
From lemma \ref{lem:eqintU} 
 one has that
$$
U(x)
- \frac{Q(\mathbf{U})}{\alpha(x)+i\nu}=
\frac{x}{\alpha(x)+i\nu}p^{\theta,\nu} (x)
- \frac{
\Dx \Dz k^\nu(x,0) }{\alpha(x)+i\nu}
\int_x^H U(z)dz, 
$$
which turns into 
$$
\left(
U(x)
- \frac{Q(\mathbf{U})}{\alpha(x)+i\nu}\right)
+\frac{\Dx \Dz k^\nu(x,0)}
{\alpha(x)+i\nu}
\int_x^H  \left(
U(z)
- \frac{Q(\mathbf{U})}{\alpha(z)+i\nu}
\right)
dz
$$
\begin{equation}\label{eq:bd105}
=
\frac{x}{\alpha(x)+i\nu}p^{\theta,\nu} (x)
- Q(\mathbf{U})
\frac{\Dx \Dz k^\nu(x,0)}
{\alpha(x)+i\nu}
\int_x^H \frac1{\alpha(z)+i\nu}dz
\end{equation}
By virtue of (\ref{H2}) we notice that
$
\left|
\int_x^H \frac1{\alpha(z)+i\nu}dz
\right|
\leq 
\int_x^H \frac1{|\alpha(z)|}dz
\leq \frac1r  \log(H/x)$. 
Since all powers
of the function $x\mapsto \ln |x| $ are integrable,
 the right-hand side
(\ref{eq:bd105}) 
is naturally bounded in any $L^p$, $1\leq p<\infty$.
Therefore the function
$Z(x)=U(x)
- \frac{Q(\mathbf{U})}{\alpha(x)+i\nu}$ is solution of an integral equation
with a bounded kernel and a  right hand side in $L^p$.
The form of this integral equation is
$$
Z(x)+ \widetilde{K}^{\theta,\nu}(x) \int_x^H Z(z)dz=
b^{\theta,\nu}(x) 
$$ 
with $\left\|\widetilde{K}^{\theta,\nu}(x)\right\|_{L^\infty(0,H)  }
\leq C_8^\theta$ independently of $\nu$. One also uses 
$
\left\| b^{\theta,\nu} \right\|_{L^p(0,H)   }\leq c_p^\theta  \| H\|$
for 
$0\leq \nu \leq 1$:
the key estimate is (\ref{eq:ptm}) which explains why 
the result is restricted to $x>0$ .
Since this is a standard non-singular integral equation, see \cite{Tricomi}, the claim is proved.
\end{proof}

The previous result (\ref{eq:bd102})
shows that some singularities of the 
integral equation can be recombined in a less singular formulation, so that
the dominant part of $U$ is $ \frac1{\alpha(\cdot )+i\nu   }$.
An important restriction of this technique, for the moment,
is that it needs the  a priori estimate (\ref{eq:bd71}) on $U$.
 This explains
why  inequality (\ref{eq:bd102}) is restricted to $x>0$.
By inspection of the structure of the algebra, it appears that 
one has the same kind
of inequalities  on the entire interval by replacing
$U$ directly by the function $ \frac1{\alpha(\cdot )+i\nu   }$
in the integrals.
A preliminary and fundamental 
result in this direction concerns the function
$$
D^{\theta,\nu} (x)=-\frac{ \Dx \Dz k^\nu(x,0) }{\alpha(x) +
i\nu }\int_0^H \frac1{\alpha(z)+i\nu   }dz
+\int_0^x \frac{\Dx \Dz k^\nu(x,z) }{\alpha(x) +i\nu  }
\frac1{\alpha(z)+i\nu   }dz
$$
which is nothing than the integral part
of (\ref{eq:bd100})
where $U$ is replaced by the function $ \frac1{\alpha(\cdot )+i\nu   }$.

\begin{prop}
Let $1\leq p < \infty$. One has
$
\left\|
D^{\theta,\nu}
\right\|_{ L^p(-L,H)} \leq C_p^\theta
$
where the constant depends continuously on $\theta$ and 
does not depend on $\nu$.
\end{prop}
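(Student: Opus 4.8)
The plan is to mirror the recombination already used in the proof of Proposition \ref{prop:4.11}. Inside the second integral defining $D^{\theta,\nu}$ I would add and subtract $\Dx \Dz k^\nu(x,0)$, which lets me write $D^{\theta,\nu}=I_1+I_2$ with
\begin{equation*}
I_1(x)=\int_0^x \frac{\Dx \Dz k^\nu(x,z)-\Dx \Dz k^\nu(x,0)}{\alpha(x)+i\nu}\,\frac{dz}{\alpha(z)+i\nu},\qquad
I_2(x)=-\frac{\Dx \Dz k^\nu(x,0)}{\alpha(x)+i\nu}\int_x^H \frac{dz}{\alpha(z)+i\nu}.
\end{equation*}
The purpose of the splitting is to subtract the part of the kernel that does not vanish at $z=0$, so that $I_1$ becomes genuinely bounded while the entire (logarithmic) singularity is isolated in the scalar factor $\int_x^H dz/(\alpha(z)+i\nu)$ of $I_2$. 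The decisive difference with Proposition \ref{prop:4.11} is that here we no longer rely on the pointwise bound (\ref{eq:bd71}) on $U$, so the argument will be valid on the whole interval $[-L,H]$, in particular for $x<0$.

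First I would bound $I_1$ in $L^\infty(-L,H)$ uniformly in $\nu$. Using the Lipschitz estimate (\ref{eq:bd130}), $\left|\Dx \Dz k^\nu(x,z)-\Dx \Dz k^\nu(x,0)\right|\le C_7^\theta|z|$, together with $|\alpha(z)+i\nu|\ge r|z|$ and $|\alpha(x)+i\nu|\ge r|x|$ coming from (\ref{H2}), the factor $|z|/|\alpha(z)+i\nu|$ is at most $1/r$, while the remaining integral $\int_0^x dz/|\alpha(x)+i\nu|$ equals $|x|/|\alpha(x)+i\nu|\le 1/r$. Hence $|I_1(x)|\le C_7^\theta/r^2$ on the whole interval, uniformly in $\nu$, so $I_1\in L^p(-L,H)$ with the required bound. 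The vanishing of $\Dx\Dz k^\nu$ on the diagonal, (\ref{eq:bd16}), is what makes this estimate symmetric about the origin.

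For $I_2$ the prefactor $\Dx \Dz k^\nu(x,0)/(\alpha(x)+i\nu)$ is bounded uniformly in $\nu$ by Remark \ref{rk:kerb} (evaluated at $z=0$), so everything reduces to controlling $\phi_\nu(x)=\int_x^H dz/(\alpha(z)+i\nu)$. The central estimate I would prove is
\begin{equation*}
\left|\phi_\nu(x)\right|\le C\left(1+\left|\ln|x|\right|\right),\qquad -L\le x\le H,\ 0<\nu\le 1,
\end{equation*}
uniformly in $\nu$; since $x\mapsto 1+|\ln|x||$ lies in $L^p(-L,H)$ for every $p<\infty$, this gives $I_2\in L^p$ with a $\nu$-independent bound. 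The imaginary part is immediate: $\left|\mbox{Im}\,\phi_\nu(x)\right|=\int_x^H \frac{\nu\,dz}{\alpha(z)^2+\nu^2}\le \int_{\mathbb R}\frac{\nu\,dz}{r^2z^2+\nu^2}=\pi/r$ by (\ref{H2}). The real part $\int_x^H \frac{\alpha(z)}{\alpha(z)^2+\nu^2}\,dz$ is where the difficulty lies.

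The hard part will be the real part of $\phi_\nu$ for $x<0$, where $[x,H]$ straddles the zero of $\alpha$ and a crude bound produces a spurious $|\ln\nu|$ — exactly the pollution noted after (\ref{eq:pol3}). To handle it I would work on a fixed neighbourhood $[-\delta_0,\delta_0]$ on which $\alpha$ is a diffeomorphism (possible since $\alpha'(0)<0$) and change variables $s=\alpha(z)$. Writing $1/\alpha'(z(s))=1/\alpha'(0)+g(s)$ with $g$ continuous and $|g(s)|\le C|s|$, the remainder is controlled by $\int \frac{|s|}{s^2+\nu^2}|g(s)|\,ds\le C\int \frac{s^2}{s^2+\nu^2}\,ds$, bounded uniformly in $\nu$; the principal term is $\frac1{2\alpha'(0)}\ln\frac{s_+^2+\nu^2}{s_-^2+\nu^2}$, and here the $\ln\nu$ contributions of the two endpoints cancel, which is precisely the principal-value structure that the limit absorption builds in. What survives is $\frac1{2\alpha'(0)}\ln(\alpha(x)^2+\nu^2)$ up to bounded terms, and since $r^2x^2\le \alpha(x)^2+\nu^2\le C$ one obtains $\left|\ln(\alpha(x)^2+\nu^2)\right|\le C+2|\ln|x||$ uniformly in $\nu$. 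Away from the origin $|\alpha|$ is bounded below and the integrand is uniformly integrable, so the real part satisfies the same bound. Adding the $L^p$ estimates for $I_1$ and $I_2$ yields $\left\|D^{\theta,\nu}\right\|_{L^p(-L,H)}\le C_p^\theta$, the continuous dependence on $\theta$ being inherited from the constants in (\ref{eq:bd130}), Remark \ref{rk:kerb} and the uniform $W^{1,\infty}$ bounds on $A_\nu,B_\nu$.
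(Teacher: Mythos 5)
Your proposal is correct, and your decomposition $D^{\theta,\nu}=I_1+I_2$ (obtained by adding and subtracting $\Dx\Dz k^\nu(x,0)$ and regrouping $\int_0^x-\int_0^H=-\int_x^H$) is exactly the rearrangement the paper uses for $0\le x\le H$; the bound $|I_1|\le C_7^\theta/r^2$ via (\ref{eq:bd130}) and \eqref{H2} also matches. Where you genuinely diverge is the case $x<0$. The paper refuses to estimate the straddling integral $\int_x^H dz/(\alpha(z)+i\nu)$ directly: it splits $\int_0^H=\int_0^{-x}+\int_{-x}^H$ and folds $\int_0^{-x}$ onto $\int_0^x$ using the near-oddness of $\alpha$ at the origin ($\alpha(w)+\alpha(-w)=O(w^2)$, since $\alpha(0)=0$ and $\alpha\in W^{2,\infty}$), so that the would-be $\ln\nu$ terms cancel against the kernel difference and everything reduces to integrals over $[-x,H]$ with $-x>0$, i.e.\ to the already-treated case. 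You instead confront the straddling integral head-on: the imaginary part is $O(1)$ by the arctangent bound, and for the real part the substitution $s=\alpha(z)$ on a fixed neighbourhood where $\alpha$ is a diffeomorphism, together with $1/\alpha'(z(s))=1/\alpha'(0)+O(|s|)$, yields the explicit antiderivative $\tfrac{1}{2\alpha'(0)}\ln(s^2+\nu^2)$ evaluated at endpoints where $s^2+\nu^2\ge r^2x^2$, hence the uniform bound $|\phi_\nu(x)|\le C(1+|\ln|x||)\in L^p$. (One quibble of phrasing: there are no ``$\ln\nu$ contributions at the two endpoints'' to cancel --- the point is rather that the divergence at $s=0$ never appears because the antiderivative is only evaluated at $s_\pm\ne 0$; your subsequent computation is the correct one.) Both arguments ultimately rest on the same ingredients --- $\alpha\in\mathcal C^2$, $\alpha'(0)\ne 0$ and \eqref{H2} --- but yours buys an explicit, reusable pointwise logarithmic bound on the regularized resolvent integral across the resonance, while the paper's buys economy by recycling the positive-$x$ estimate verbatim.
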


\begin{proof}
Two cases occur.
\begin{itemize}
\item {\bf Assume $0\leq x \leq H$}. The analysis is similar to the one
of proposition \ref{prop:4.11}. One has the same
kind of rearrangement  (\ref{eq:bd100}), that is 
$$
D^{\theta,\nu}
(x)=-\frac{ \Dx \Dz k^\nu(x,0) }{\alpha(x) +
i\nu }\int_x^H \frac1{\alpha(z)+i\nu   }dz
$$
$$
+\int_0^x \frac{\Dx \Dz k^\nu(x,z) - \Dx \Dz k^\nu(x,0) }{\alpha(x) +i\nu  }
\frac1{\alpha(z)+i\nu   }dz.
$$
The first term
is bounded like  $C^\theta  \frac{  | \log x |}r$
 which is in all $L^p$, $p<\infty$.
The second term is immediately bounded using (\ref{eq:bd130}):
indeed
$$
\left|
\int_0^x \frac{\Dx \Dz k^\nu(x,z) - \Dx \Dz k^\nu(x,0) }{\alpha(x) +i\nu  }
\frac1{\alpha(z)+i\nu   }dz
\right|
$$
$$ 
\leq C_7^\theta \frac{1}{\sqrt{ \alpha(x)^2+\nu^2 }  }\int_0^x \frac{z
  }{ \sqrt{ \alpha(z)^2+\nu^2 }  }dz
\leq C_7^\theta \frac1{r^2}.
$$

\item {\bf Assume $-L\leq x \leq 0$}.  The decomposition is slightly 
different and uses some cancellations permitted by the  symmetry
properties of the kernels.
One has
$$
D^{\theta,\nu}
(x)=-\frac{ \Dx \Dz k^\nu(x,0) }{\alpha(x) +
i\nu }\int_{-x} ^H \frac1{\alpha(z)+i\nu   }dz
$$
$$
+\int_0^x 
\frac{\Dx \Dz k^\nu(x,z)  }{\alpha(x) +i\nu }
\frac1{\alpha(z)+i\nu   }dz
-
\frac{ \Dx \Dz k^\nu(x,0) }{\alpha(x) +
i\nu }\int_0^{-x}  \frac1{\alpha(z)+i\nu   }dz
,
$$
which emphasizes the importance of some symmetry properties of the kernels.
Indeed
$$
\int_0^{-x}  \frac1{\alpha(z)+i\nu   }dz
=
-\int_0^{x}  \frac1{\alpha(-w)+i\nu   }dw
$$
$$
=\int_0^{x}  \frac1{\alpha(w)+i\nu   }dw+
\int_0^{x}  \left(
\frac1{-\alpha(-w)-i\nu   }
-
\frac1{\alpha(w)+i\nu   } \right)dw.
$$
Notice that
$$
\frac1{-\alpha(-w)-i\nu   }
-
\frac1{\alpha(w)+i\nu   }=
\frac{\alpha(w)+\alpha(-w)+2i\nu   }{  (\alpha(w)+i\nu)(-\alpha(-w)-i\nu ) }.
$$
So, since $\alpha(0)=0$,
$$
\left| \frac1{-\alpha(-w)-i\nu   }
-
\frac1{\alpha(w)+i\nu   } \right|
\leq \frac{ 2\left\| \alpha  \right\|_{W^{2,\infty}(-L,H) } w^2 +2 \nu }{r^2 w ^2 +\nu^2},
$$
since $\alpha\in W^{2,\infty}(-L,H)$. 
  One can bound
$$
\left| \int_0^{x}  \frac1{-\alpha(-w)-i\nu   }dw-
\int_0^{x}  \frac1{\alpha(w)+i\nu   }dw \right|
\leq 
\frac{\left\| \alpha  \right\|_{W^{2,\infty}(-L,H) }}{r^2} |x| +
\int_0^x \frac{2\nu}{r^2z^2 +\nu^2} dz
$$
$$
\leq 
\frac{\left\| \alpha  \right\|_{W^{2,\infty}(-L,H) }}{r^2} \max(H,L)
+ \int_0^\infty  \frac{2\nu}{r^2z^2 +\nu^2} dz
\leq 
\frac{\left\| \alpha  \right\|_{W^{2,\infty}(-L,H) }}{r^2} \max(H,L)
+\frac{\pi}{r}.
$$
As a consequence $D^{\theta,\nu}$ can be expressed as 
$$
D^{\theta,\nu}
(x)=-\frac{ \Dx \Dz k^\nu(x,0) }{\alpha(x) +
i\nu }\int_{-x} ^H \frac1{\alpha(z)+i\nu   }dz
$$
$$
+\int_0^{x} 
\frac{\Dx \Dz k^\nu(x,z) - \Dx \Dz k^\nu(x,0)}{\alpha(x) +i\nu  }
\frac1{\alpha(z)+i\nu   }dz
+R(x)
$$
with $\left\|R\right\|_\infty(-L,H)\leq C_{10}^\theta$.
The two integrals have the same structure
as for the first case, in particular the interval of integration
is $[-x,H]$ with $0\leq -x$.  So the same result holds.
\end{itemize}
\end{proof}

\begin{prop}
For all $1\leq p < \infty$, there exists a constant $C_p^\theta $
independent of $\nu$ such that
\begin{equation} \label{eq:bd110}
\left\|
 U-
\frac{Q(\mathbf{U})}{\alpha(\cdot)+i\nu   }
\right\|_{ L^p(-L,H)} \leq C_p^\theta \| H\|.
\end{equation}
\end{prop}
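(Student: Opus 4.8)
The plan is to upgrade Proposition~\ref{prop:4.11} from $(0,H)$ to the whole interval $(-L,H)$ by working directly with $Z:=U-\frac{Q(\mathbf U)}{\alpha(\cdot)+i\nu}$ and feeding in the $L^p$ bound on $D^{\theta,\nu}$ just established. First I would take the reorganized integral equation (\ref{eq:bd100}), which is valid on all of $[-L,\infty[$, divide by $\alpha(x)+i\nu$, and substitute $U(z)=Z(z)+\frac{Q(\mathbf U)}{\alpha(z)+i\nu}$ in the two integral terms. The decisive algebraic point is that the contribution of the singular part $\frac{Q(\mathbf U)}{\alpha(z)+i\nu}$ to those integrals is \emph{exactly} $Q(\mathbf U)\,D^{\theta,\nu}(x)$: the boundary factor $-\frac{\Dx \Dz k^\nu(x,0)}{\alpha(x)+i\nu}\int_x^H$ and the piece of $\int_0^x$ carrying $\Dx \Dz k^\nu(x,0)$ recombine, since $\int_x^H+\int_0^x=\int_0^H$, into precisely the structure defining $D^{\theta,\nu}$. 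After cancelling the common term $\frac{Q(\mathbf U)}{\alpha(x)+i\nu}$ on both sides, $Z$ solves
\[
Z(x)= b^{\theta,\nu}(x) -\frac{\Dx \Dz k^\nu(x,0)}{\alpha(x)+i\nu}\int_x^H Z(z)\,dz +\int_0^x \frac{\Dx \Dz k^\nu(x,z)-\Dx \Dz k^\nu(x,0)}{\alpha(x)+i\nu}\,Z(z)\,dz,
\]
with source
\[
b^{\theta,\nu}(x)=\frac{\bigl(\tilde a\,m^{\theta,\nu}(x)+\tilde b\,n^{\theta,\nu}(x)\bigr)x}{\alpha(x)+i\nu}+Q(\mathbf U)\,D^{\theta,\nu}(x).
\]

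Next I would verify that $b^{\theta,\nu}$ is bounded in $L^p(-L,H)$ by $C_p^\theta\|H\|$ uniformly in $\nu$: the first summand is controlled in $L^\infty$ because $m^{\theta,\nu},n^{\theta,\nu}$ are bounded, $|\tilde a|,|\tilde b|\leq C_\theta\|H\|$, and $\left|\frac{x}{\alpha(x)+i\nu}\right|\leq 1/r$ by (\ref{H2}); the second summand is bounded in $L^p$ by the estimate on $D^{\theta,\nu}$ just established together with $|Q(\mathbf U)|\leq C_\theta\|H\|$ from (\ref{eq:bdsquare}). Both kernels are bounded in $L^\infty(\mathcal D_0)$ uniformly in $\nu$: $\frac{\Dx \Dz k^\nu(x,0)}{\alpha(x)+i\nu}$ because $\Dx \Dz k^\nu(x,0)$ vanishes linearly at $x=0$ (Remark~\ref{rk:kerb} read at $z=0$), and $\frac{\Dx \Dz k^\nu(x,z)-\Dx \Dz k^\nu(x,0)}{\alpha(x)+i\nu}$ by the cancellation estimate (\ref{eq:bd130}) combined with $|\alpha(x)+i\nu|\geq r|x|$.

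On $(0,H)$ the desired bound is already Proposition~\ref{prop:4.11}. The only genuinely new difficulty is $x\in(-L,0)$, where the pointwise control (\ref{eq:bd71}) on $U$ is unavailable and the equation for $Z$ is not purely Volterra because of the forward integral $\int_x^H$. The device is to split $\int_x^H Z=\int_x^0 Z+\int_0^H Z$ and absorb the constant $\int_0^H Z$ into the source: since $\|Z\|_{L^1(0,H)}\leq C_\theta\|H\|$ by the case already settled, the modified source $\tilde b(x):=b^{\theta,\nu}(x)-\frac{\Dx \Dz k^\nu(x,0)}{\alpha(x)+i\nu}\int_0^H Z$ is again in $L^p(-L,0)$ with the right bound, and the remaining equation on $(-L,0)$ involves $Z$ only through $\int_x^0 Z$, i.e. it is a one-sided (backward) Volterra equation with bounded kernel. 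I would then close with a Gronwall argument run backward from $0$: setting $G(x)=\int_x^0|Z|$, the bounded-kernel estimate gives $|Z(x)|\leq|\tilde b(x)|+C_K G(x)$ while $-G'=|Z|$, whence $G(x)\leq C\|\tilde b\|_{L^1(-L,0)}\leq C_\theta\|H\|$, and therefore $|Z(x)|\leq|\tilde b(x)|+C_\theta\|H\|$ pointwise; integrating the $p$-th power over $(-L,0)$ yields $\|Z\|_{L^p(-L,0)}\leq C_p^\theta\|H\|$, which with the positive-$x$ estimate proves (\ref{eq:bd110}). The main obstacle is exactly this decoupling at the origin: it is what lets one bypass the missing a priori estimate for $U$ on $x<0$ and turn the non-local forward--backward coupling into a one-sided Volterra problem to which Gronwall applies.
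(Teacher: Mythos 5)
Your proposal is correct and follows essentially the same route as the paper: both substitute $Z=U-\frac{Q(\mathbf U)}{\alpha(\cdot)+i\nu}$ into the reorganized equation (\ref{eq:bd100}), observe that the singular part recombines into exactly $Q(\mathbf U)\,D^{\theta,\nu}$ (controlled in $L^p$ by the preceding proposition), absorb the non-local $\int_0^H$ term using the already-proved $(0,H)$ estimate, and close with a second-kind Volterra equation with bounded kernel. Your backward Gronwall on $(-L,0)$ merely makes explicit the final step the paper dispatches by citing the standard theory of non-singular integral equations.
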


\begin{proof}
We start from (\ref{eq:bd100})   written as
$$
U(x)= \frac{Q(\mathbf{U})}{ \alpha(x)+i\nu }
+\frac{x}{\alpha(x)+i\nu} \widetilde{p}^{\theta,\nu}(x)
$$
$$
-\int_0^H \frac{ \Dx \Dz k^\nu(x,0) }{\alpha(x)+i\nu} U(z)dz
+ \int_0^x \frac{ \Dx \Dz k^\nu(x,z) }{\alpha(x)+i\nu} U(z)dz .
$$
Here
$\widetilde{p}^{\theta,\nu}(x)=\tilde a m^{\theta,\nu}(x) 
+\tilde b  n^{\theta,\nu}(x)$, so that 
 $\|\widetilde{p}^{\theta,\nu}\|_{L^\infty(-L,H)}\leq C^\theta \|H \|$
over the whole interval $(-L,H)$.
Notice that $\widetilde{p}^{\theta,\nu}$ is the first
part of ${p}^{\theta,\nu}$ defined in (\ref{eq:ptm2}). 
Setting $u(x)=U(x)- \frac{Q(\mathbf{U})}{ \alpha(x)+i\nu }  $
one gets 
$$
u(x)
- \int_0^x \frac{ \Dx \Dz k^\nu(x,z) }{\alpha(x)+i\nu} u(z)dz 
$$
$$
=\frac{x}{\alpha(x)+i\nu} \widetilde{p}^{\theta,\nu}(x)
-
Q(\mathbf{U}) D^{\theta,\nu}(x)
-
\int_0^H \frac{ \Dx \Dz k^\nu(x,0) }{\alpha(x)+i\nu}
u(z)dz
.
$$
The left-hand side  is an non singular integral operator of the second
 kind 
with a bounded kernel thanks to the fundamental property
(\ref{eq:ddfond}).
The right-hand side is bounded in $L^p$ with a  continuous
dependence with respect to $\|H\|$, see Lemma \ref{lem:eqintU},
estimation (\ref{eq:bdsquare}) and estimation (\ref{eq:bd102}).

\end{proof}

\subsubsection{The second basis function}

We  apply the above material
to 
the second basis
function for which $Q(\mathbf{U}_2^{\theta,\nu})=1$.
The inequality 
(\ref{eq:bd110}) writes
\begin{equation} \label{newa}
\left\|
 U_2^{\theta,\nu}-
\frac{1}{\alpha(\cdot)+i\nu}
\right\|_{ L^p(-L,H)} \leq C_p^\theta 
\left(
\left|  V_2^{\theta,\nu}(H)  \right|+ \left| 
 W_2^{\theta,\nu}(H)  \right| \right), 
\end{equation}
for $ 1 \leq p < \infty$.

\begin{prop}
Assume the transversality condition (\ref{eq:bd63}). There exists a constant
$C^\theta$ independent of $\nu$ and continuous with
respect to $\theta$ such that 
\begin{equation} \label{eq:bd133}
\left|  V_2^{\theta,\nu}(H)  \right|+ \left| 
 W_2^{\theta,\nu}(H)  \right| \leq
C^\theta.
\end{equation}
\end{prop}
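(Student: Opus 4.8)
The plan is to exploit the factorization $\mathbf{U}_2^{\theta,\nu}=\xi^{\theta,\nu}\mathbf{U}_3^{\theta,\nu}$ already established, together with the relation $1=\xi^{\theta,\nu}\sigma(\theta,\nu)$, so that $\xi^{\theta,\nu}=1/\sigma(\theta,\nu)$. Evaluating at $x=H$ and using $\mathbf{U}_3^{\theta,\nu}(H)=R_-e^{-\lambda^{\theta,\nu}H}$ gives
$$
\left(V_2^{\theta,\nu}(H),W_2^{\theta,\nu}(H)\right)=\frac1{\sigma(\theta,\nu)}\left(V_3^{\theta,\nu}(H),W_3^{\theta,\nu}(H)\right).
$$
Hence the estimate \eqref{eq:bd133} reduces to two independent tasks: a uniform upper bound on $|V_3^{\theta,\nu}(H)|+|W_3^{\theta,\nu}(H)|$, and a uniform lower bound $|\sigma(\theta,\nu)|\geq m_\theta>0$ for $\nu\in[0,1]$.

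For the first task I would use the explicit formulas for $R_-=(r_-^1,r_-^2,r_-^3)$ from Section~\ref{subsec:infty}. Since $\alpha(H)=\alpha_\infty$ and hypothesis \eqref{H4} forces $\alpha_\infty\neq 0$, the denominators $\alpha(H)+i\nu=\alpha_\infty+i\nu$ are bounded away from zero uniformly in $\nu\in[0,1]$; consequently $r_-^2=1-\theta^2/(\alpha_\infty+i\nu)$ and $r_-^3=-\sqrt{-\det A^{\theta,\nu}_\infty}-\theta\delta(H)/(\alpha_\infty+i\nu)$ are bounded uniformly in $\nu$, with continuous dependence on $\theta$. Moreover $\mathrm{Re}\,\lambda^{\theta,\nu}>0$ gives $|e^{-\lambda^{\theta,\nu}H}|\leq 1$, so that $|V_3^{\theta,\nu}(H)|+|W_3^{\theta,\nu}(H)|\leq C^\theta$ uniformly in $\nu$.

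The delicate point, and the main obstacle, is the lower bound on $|\sigma(\theta,\nu)|$, which I would obtain by continuity and compactness. The function $\nu\mapsto\sigma(\theta,\nu)$ is continuous on the compact interval $[0,1]$: for $\nu>0$ this follows from the continuous dependence of solutions of \eqref{H10} on parameters, and at $\nu=0$ from the convergence of $(V_1^{\theta,\nu},W_1^{\theta,\nu})$ proved in Lemma~\ref{lem:umu_u} together with the explicit continuity of $(V_3^{\theta,\nu}(H),W_3^{\theta,\nu}(H))$ at $\nu=0$. It then remains to check that $\sigma(\theta,\nu)$ never vanishes on $[0,1]$. At $\nu=0$ this is precisely the transversality hypothesis \eqref{eq:bd63}, since $\sigma(\theta,0)=\sigma(\theta)\neq 0$. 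For $\nu\in(0,1]$, recall that $\sigma(\theta,\nu)$ is the Wronskian of the solutions $(V_1^{\theta,\nu},W_1^{\theta,\nu})$ and $(V_3^{\theta,\nu},W_3^{\theta,\nu})$ of \eqref{H10}, which is constant in $x$ by Lemma~\ref{lem:indep} and vanishes only if these two solutions are linearly dependent; but the first basis function is exponentially increasing at infinity for $\nu\neq 0$ whereas $\mathbf{U}_3^{\theta,\nu}$ is by construction exponentially decreasing, so they are linearly independent and $\sigma(\theta,\nu)\neq 0$. A nonvanishing continuous function on a compact set has strictly positive minimum modulus, so $m_\theta:=\min_{\nu\in[0,1]}|\sigma(\theta,\nu)|>0$.

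Combining the two bounds yields $|V_2^{\theta,\nu}(H)|+|W_2^{\theta,\nu}(H)|\leq C^\theta/m_\theta$, uniformly in $\nu\in(0,1]$, which is the desired estimate after renaming the constant. The continuous dependence on $\theta$ is inherited from that of all the ingredients, namely the eigenvalue $\lambda^{\theta,\nu}$, the vector $R_-$, and the basis functions $\mathbf{U}_1^{\theta,\nu}$ and $\mathbf{U}_3^{\theta,\nu}$: these make $\sigma$ jointly continuous in $(\theta,\nu)$, so that $m_\theta=\min_\nu|\sigma(\theta,\nu)|$ is a continuous and (on any $\theta$-interval where transversality holds) strictly positive function of $\theta$, whence $C^\theta/m_\theta$ depends continuously on $\theta$ as claimed.
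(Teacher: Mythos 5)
Your proof is correct and follows essentially the same route as the paper: both express $\left(V_2^{\theta,\nu}(H),W_2^{\theta,\nu}(H)\right)$ as $\left(V_3^{\theta,\nu}(H),W_3^{\theta,\nu}(H)\right)/\sigma(\theta,\nu)$ (the paper phrases this as solving the $2\times 2$ linear system that combines the Wronskian normalization with proportionality to $\mathbf{U}_3^{\theta,\nu}$) and then bound the numerator uniformly while bounding $\sigma$ away from zero. In fact you supply more detail than the paper on the delicate point --- the uniform lower bound $|\sigma(\theta,\nu)|\geq m_\theta>0$ on $\nu\in[0,1]$, obtained from continuity in $\nu$, transversality at $\nu=0$, and the exponential growth/decay dichotomy for $\nu\neq 0$ --- which the paper's proof leaves implicit.
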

\begin{proof}
Indeed, regarding relation \eqref{eq:bd47.6}, \eqref{eq:bd48} 
the pair $(v,w)=(V_2^{\theta,\nu}(H),
 W_2^{\theta,\nu}(H)   )$ is solution of the linear
system
\begin{equation} \label{eq:bd140}
\left\{
\begin{array}{ll}
-v W_1^{\theta,\nu}(H)+ w  V_1^{\theta,\nu}(H)=1, \\
v W_3^{\theta,\nu}(H)- w  V_3^{\theta,\nu}(H)=0.
\end{array}
\right.
\end{equation}
The 
determinant of this linear system is equal to the value of the
function $- \sigma(\theta,\nu)$. So the
transversality condition establishes that
$$
\mbox{det }
\left(
\begin{array}{ll}
 -W_1^{\theta,\nu}(H) &   V_1^{\theta,\nu}(H) \\
 W_3^{\theta,\nu}(H) & - V_3^{\theta,\nu}(H)
\end{array}
\right)
=-\sigma(\theta,\nu)
\neq 0.
$$
Therefore the solution of the linear system 
\begin{equation} \label{eq:fu6}
v=- \frac{ V_3^{\theta,\nu}(H) } { \sigma(\theta,\nu)  }, \quad 
w= - \frac{ W_3^{\theta,\nu}(H) } {\sigma(\theta,\nu)  }
\end{equation}
is bounded uniformly with respect to $\nu$.
\end{proof}

\begin{thm}\label{thm5.1}
Assume  the same transversality condition (\ref{eq:bd63}).
 The second basis function satisfies the following
estimates 
for some $C_p^\theta$ and $C^\theta$ which are continuous with respect
to $\theta$
\begin{equation} \label{eq:bd137}
\left\| U_2^{\theta,\nu}- \frac{1}{\alpha(\cdot)+i\nu   } \right\|_{ L^p(-L,H)} \leq C_p^\theta , \quad 1\leq p < \infty,
\end{equation}
\e{
 \label{eq:bd137.4}
\left\| \mathbf{U}^{\theta,\nu}_2  \right\|_
{ H^1_{\rm loc} ([-L,0)\cup (0,H])}
\leq C^\theta.
}
\end{thm}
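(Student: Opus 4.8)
The plan is to derive the two estimates in sequence. The $L^p$ bound \eqref{eq:bd137} is essentially already in hand: applying the general estimate \eqref{eq:bd110} to $\mathbf U=\mathbf U_2^{\theta,\nu}$, for which $Q(\mathbf U_2^{\theta,\nu})=1$, gives precisely \eqref{newa}, and combining it with the uniform bound \eqref{eq:bd133} on $|V_2^{\theta,\nu}(H)|+|W_2^{\theta,\nu}(H)|$ --- the place where the transversality condition \eqref{eq:bd63} enters, through the explicit solution \eqref{eq:fu6} of the $2\times2$ system --- yields \eqref{eq:bd137} with a $\nu$-independent constant. No further work is needed for the first estimate.

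For \eqref{eq:bd137.4} the first task is to bound $V_2^{\theta,\nu}$ and $W_2^{\theta,\nu}$ in $L^\infty$ on an arbitrary compact interval $[a,b]\subset[-L,0)\cup(0,H]$, uniformly in $\nu$. I would use the representations \eqref{eq:bd11}--\eqref{eq:bd12} anchored at $G=H$, the constants $(a_H,b_H)$ being recovered from \eqref{eq:bd51}; these are uniformly bounded thanks to \eqref{eq:bd133} and the uniform $W^{1,\infty}$ bounds on $A_\nu,B_\nu$, so the terms $a_HA_\nu+b_HB_\nu$ cause no difficulty. Writing $U_2^{\theta,\nu}=\frac1{\alpha+i\nu}+\bigl(U_2^{\theta,\nu}-\frac1{\alpha+i\nu}\bigr)$, the remainder integrates against the uniformly bounded kernel $\Dz k^\nu$ to give a uniformly bounded contribution, since that remainder is uniformly bounded in $L^1$ by \eqref{eq:bd137}. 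The delicate term is $\int_H^x \Dz k^\nu(x,z)\frac1{\alpha(z)+i\nu}\,dz$: splitting $\Dz k^\nu(x,z)=\Dz k^\nu(x,0)+\bigl(\Dz k^\nu(x,z)-\Dz k^\nu(x,0)\bigr)$ and using $|\Dz k^\nu(x,z)-\Dz k^\nu(x,0)|\le C|z|$ together with $\tfrac{|z|}{\sqrt{\alpha(z)^2+\nu^2}}\le 1/r$ reduces everything to controlling the scalar integral $\int_x^H \frac{dz}{\alpha(z)+i\nu}$.

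This scalar integral is the main obstacle, since a crude estimate gives only $O(|\ln\nu|)$, which would be useless. The key point is that it is in fact bounded uniformly in $\nu$ as long as $x$ stays away from the origin: its imaginary part $\int_x^H \frac{\nu}{\alpha^2+\nu^2}\,dz$ is dominated by $\int_{\mathbb R}\frac{\nu}{r^2z^2+\nu^2}\,dz=\pi/r$, while in the real part $\int_x^H\frac{\alpha}{\alpha^2+\nu^2}\,dz$ the two logarithmic contributions produced on the two sides of the zero of $\alpha$ at the origin cancel, so that no $\ln\nu$ survives. This is exactly the principal-value cancellation already exploited in the proof that $D^{\theta,\nu}$ is bounded in $L^p$, and I would reuse that computation: for $x>0$ the interval does not cross the origin and the bound is immediate, whereas for $x<0$ the symmetrisation of $\int_0^{-x}$ into $\int_0^{x}$ leaves only a bounded correction. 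Hence $V_2^{\theta,\nu},W_2^{\theta,\nu}\in L^\infty[a,b]$ uniformly in $\nu$; the bound deteriorates like $\ln|\alpha(x)|$ as $x\to0$, consistently with the logarithmic divergence announced in the introduction and with the restriction to $[-L,0)\cup(0,H]$.

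Finally I would bootstrap to $H^1$ directly from the system \eqref{sys0:hatmu}. On $[a,b]$ one has $|\alpha+i\nu|\ge r\,\mathrm{dist}(0,[a,b])>0$ by \eqref{H2}, so $\frac1{\alpha+i\nu}\in W^{1,\infty}[a,b]$ uniformly and $U_2^{\theta,\nu}\in L^2[a,b]$ uniformly by \eqref{eq:bd137}. The first and third equations give $(V_2^{\theta,\nu})'=W_2^{\theta,\nu}+i\theta U_2^{\theta,\nu}$ and $(W_2^{\theta,\nu})'=i\delta U_2^{\theta,\nu}-(\alpha+i\nu)V_2^{\theta,\nu}$, both in $L^2[a,b]$ uniformly, whence $V_2^{\theta,\nu},W_2^{\theta,\nu}\in H^1[a,b]$. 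Differentiating the algebraic relation $U_2^{\theta,\nu}=\frac{i\theta W_2^{\theta,\nu}-i\delta V_2^{\theta,\nu}}{\alpha+i\nu}$ furnished by the second equation, and using that the numerator is now in $H^1[a,b]$ while $\frac1{\alpha+i\nu}$ and $\delta$ lie in $W^{1,\infty}[a,b]$, gives $U_2^{\theta,\nu}\in H^1[a,b]$ uniformly. Since $[a,b]$ was an arbitrary compact subset of $[-L,0)\cup(0,H]$, this establishes \eqref{eq:bd137.4}.
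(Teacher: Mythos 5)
Your proof is correct and follows essentially the same route as the paper's: \eqref{newa} combined with \eqref{eq:bd133} for the $L^p$ estimate, then the integral representations \eqref{eq:bd11}--\eqref{eq:bd12} for uniform local bounds on $V_2^{\theta,\nu},W_2^{\theta,\nu}$, and the system \eqref{sys0:hatmu} to bootstrap $U_2^{\theta,\nu}$ and the derivatives. The paper's own argument is terse at the step you treat most carefully --- the uniform control of $\int_x^H \frac{dz}{\alpha(z)+i\nu}$ via the principal-value cancellation for $x<0$ --- and your filling in of that detail is consistent with the symmetrisation already used for $D^{\theta,\nu}$ and with the logarithmic terms in \eqref{eq:vvtilde}--\eqref{eq:wwtilde}.
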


\begin{proof}
The first estimate is a straightforward consequence of \eqref{newa}, \eqref{eq:bd133} . 
The use of the integral representations 
(\ref{eq:bd11}-\ref{eq:bd12}) shows that,
\begin{equation} \label{eq:bd137.2}
\left\| V_2^{\theta,\nu}\right\|_
{ L^\infty_{\rm loc} ([-L,0)\cup (0,H])}
+
\left\| W_2^{\theta,\nu}\right\|_
{ L^\infty_{\rm loc} ([-L,0)\cup (0,H])}
 \leq C^\theta 
\end{equation}
for some 
$C^\theta$.
Then the second equation of (\ref{sys0:hatmu}) shows that
one has the same bound for $U_2^{\theta,\nu}$
\begin{equation} \label{eq:bd137.3}
\left\| U_2^{\theta,\nu}\right\|_
{ L^\infty_{\rm loc} ([-L,0)\cup (0,H])}
 \leq C^\theta .
\end{equation}
The bound on the derivatives follows from (\ref{sys0:hatmu}) 
\end{proof}

\begin{rmk} 
Let us set $H'=-L$.
From (\ref{eq:bd137.2}) one gets that $\|H'\|$ is bounded uniformly also,
therefore (\ref{eq:bd71}) can be generalized for $x<0$ (resp.  $H'$)
instead of $x>0$  (resp. $H$).
In summary one has 
for a constant $K^\theta$ that can be further specified: 
$
\left| U_2^{\theta,\nu}(x)\right| \leq \frac{K^\theta}{\sqrt{r^2 x ^2 + \nu^2}}$
for $x\in (-L,H)$.
\end{rmk}
We now pass to the limit $\nu\rightarrow 0^\pm$.

\begin{prop}
Assume the same transversality condition (\ref{eq:bd63}).
 The second basis function 
admits a limit in the sense of distribution
for $\nu=0^\pm$ as follows:
$$
\mathbf{U}_2^{\theta,\nu} \rightarrow 
\mathbf{U}_2^{\theta,\pm}
=
\left(
P.V. \frac1{\alpha(x)} \pm \frac{i \pi}{\alpha'(0)} \delta_D
+ u_2^{\theta,\pm}, \;  
v_2^{\theta,\pm}, \; w_2^{\theta,\pm}
\right)
$$
where $u_2^{\theta,\pm},
v_2^{\theta,\pm}, w_2^{\theta,\pm}\in L^2(-L,\infty)$
and $\delta_D$ is the Dirac mass at the origin.
\end{prop}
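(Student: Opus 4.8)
The plan is to isolate the explicit singular part of the first component, $\tfrac{1}{\alpha(\cdot)+i\nu}$, pass to its distributional limit by a Sokhotski--Plemelj computation, and then treat both the remainder $U_2^{\theta,\nu}-\tfrac1{\alpha(\cdot)+i\nu}$ and the two smoother components $V_2^{\theta,\nu},W_2^{\theta,\nu}$ by combining the uniform estimates of Theorem~\ref{thm5.1} with an almost-everywhere convergence argument. The starting data are the bound $\|U_2^{\theta,\nu}-\tfrac1{\alpha(\cdot)+i\nu}\|_{L^p(-L,H)}\le C_p^\theta$ of \eqref{eq:bd137}, valid for every $1\le p<\infty$ uniformly in $\nu$, the $H^1_{\mathrm{loc}}([-L,0)\cup(0,H])$ bound \eqref{eq:bd137.4}, the pointwise estimate $|U_2^{\theta,\nu}(x)|\le K^\theta(r^2x^2+\nu^2)^{-1/2}$, and the explicit exponential decay \eqref{eq:bd46} valid on $[H,\infty)$.

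\emph{Singular part.} By hypothesis \eqref{H1} the coefficient $\alpha$ has a single simple zero at the origin with $\alpha'(0)<0$, so near $x=0$ one writes $\alpha(x)=\alpha'(0)x+O(x^2)$. For a test function $\phi$, splitting $\int \phi(x)(\alpha(x)+i\nu)^{-1}dx$ into a contribution near the zero and a remainder, and using the rescaling $x=\nu w$ together with $\int_{\mathbb R}(w^2+1)^{-1}dw=\pi$ exactly as in the evaluation of $D_\nu$ in the proof of \eqref{eq:bd87}, I obtain the classical Plemelj identity
$$
\frac{1}{\alpha(\cdot)+i\nu}\ \xrightarrow[\nu\to0^{\pm}]{\mathcal D'}\ P.V.\frac{1}{\alpha}\ \pm\ \frac{i\pi}{\alpha'(0)}\,\delta_D ,
$$
the sign of the Dirac coefficient being fixed by $\mathrm{sgn}(\nu)$ and by $\mathrm{sgn}(\alpha'(0))=-1$, while the factor $|\alpha'(0)|^{-1}$ is precisely the Jacobian already met in $\lim_{0^+}D_\nu=\pi/|\alpha'(0)|$. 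This produces the announced principal value and Dirac mass, whose coefficient is finite because $\alpha'(0)\neq0$.

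\emph{Regular remainder and the components $V,W$.} Set $r^{\theta,\nu}=U_2^{\theta,\nu}-\tfrac1{\alpha(\cdot)+i\nu}$. The bound \eqref{eq:bd137.4} together with the convergence away from the origin (the preceding proposition gives $\mathbf U_2^{\theta,\nu}\to\sigma(\theta)^{-1}\mathbf U_3^{\theta}$ in $L^\infty[\epsilon,\infty[$, and the backward integral equation from $G=H$ covers $(-L,0)$) yields, by compact Sobolev embedding, pointwise convergence of $U_2^{\theta,\nu}$ on $(-L,\infty)\setminus\{0\}$; since $\tfrac1{\alpha(x)+i\nu}\to\tfrac1{\alpha(x)}$ for $x\neq0$, the remainder $r^{\theta,\nu}$ converges almost everywhere to a function $u_2^{\theta,\pm}$, bounded near the origin because the two leading $\tfrac1{\alpha}$ singularities cancel. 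Fatou's lemma applied to the uniform $L^p(-L,H)$ bound gives $u_2^{\theta,\pm}\in L^2$, and weak-$L^2$ compactness combined with the a.e.\ limit upgrades this to $r^{\theta,\nu}\rightharpoonup u_2^{\theta,\pm}$, hence distributional convergence. For $V_2^{\theta,\nu}$ and $W_2^{\theta,\nu}$ I would pass to the limit directly in the integral representations \eqref{eq:bd11}--\eqref{eq:bd12}: the kernels $\Dz k^\nu$ and $\partial_x\Dz k^\nu$ are bounded and convergent (Remark~\ref{rk:kerb}), and they integrate the singular $U_2^{\theta,\nu}\sim\tfrac1{\alpha+i\nu}$ against a smooth weight, producing at worst a logarithmic divergence at the origin; dominated convergence then delivers the $L^2$ limits $v_2^{\theta,\pm},w_2^{\theta,\pm}$, the logarithm being square integrable.

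\emph{Main obstacle.} The delicate point is entirely at the origin: one must ensure that the a.e.\ limit of $r^{\theta,\nu}$ captures \emph{all} of the non-$L^2$ behaviour, that is, that no residual Dirac mass or unbounded contribution survives in the remainder beyond the explicit $\pm\tfrac{i\pi}{\alpha'(0)}\delta_D$. This is exactly where the full strength of Theorem~\ref{thm5.1} is needed: the estimate holds for \emph{every} finite $p$, not merely $p=2$, which rules out concentration of mass at $x=0$ and promotes the a.e.\ limit to a genuine $L^2$ function. A secondary bookkeeping issue is the sign of the regularization, the limits $\nu\to0^+$ and $\nu\to0^-$ differing only through the sign of the Dirac coefficient; here the transversality condition \eqref{eq:bd63} must be invoked so that the normalising Wronskian $\sigma(\theta,\nu)$ stays bounded away from zero and the limits $u_2^{\theta,\pm},v_2^{\theta,\pm},w_2^{\theta,\pm}$ exist and are finite. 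Finally, on $[H,\infty)$ the coefficient $\alpha$ is constant, so $\alpha^{-1}$ carries no singularity there and $U_2^{\theta,\nu}$ converges to the exponentially decaying, square-integrable tail of \eqref{eq:bd46}; the bounded-interval analysis and this decaying tail together yield the global statement.
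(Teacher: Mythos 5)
Your treatment of the explicit singular part is exactly the paper's: the Sokhotski--Plemelj limit of $\frac1{\alpha(\cdot)+i\nu}$, with the sign fixed by $\sgn(\nu)$ and $\alpha'(0)<0$, is how the Dirac mass and the principal value arise there too. The genuine gap is in the remainder on $(-L,0)$. Your compactness argument (uniform $L^2$ and $H^1_{\rm loc}$ bounds plus a.e.\ convergence) only yields \emph{subsequential} limits on the left of the resonance; since the solution space of the X-mode system on $(-L,0)$ is two-dimensional, two subsequences could a priori differ by a multiple of the regular solution $\mathbf{U}_1^{\theta}$, and nothing in your proposal identifies the limit or transports information across $x=0$. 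This identification is the heart of the paper's proof: it writes the weak formulation (\ref{eq:bd145}) satisfied by any limit point, shows that the difference of two limit points solves the homogeneous system, vanishes for $x>0$ (where the convergence is strong), hence equals $\lambda\,\mathbf{U}_1^{\theta}$ for $x<0$, and then an integration by parts against a test function with $\varphi_3(0)\neq0$ forces $\lambda W_1^{\theta}(0)=\lambda\, i\delta(0)=0$, i.e.\ $\lambda=0$. Without this (or an equivalent connection argument across the singularity) you have precompactness, not convergence, and the limit for $x<0$ --- which ultimately determines the transfer coefficient and the heating --- is not pinned down. Your remark that the all-$p$ estimate ``rules out concentration'' does not substitute for this: a residual multiple of $\mathbf{U}_1^{\theta}$ is perfectly bounded and would not be detected by any $L^p$ bound.

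A secondary but real flaw is the invocation of dominated convergence for $V_2^{\theta,\nu}$ and $W_2^{\theta,\nu}$ when $x<0$. The integrand in (\ref{eq:bd11}) behaves like $\Dz k^{\nu}(x,z)\,(\alpha(z)+i\nu)^{-1}$ near $z=0$, and $|\alpha(z)|^{-1}$ is not integrable across the origin, so there is no uniform $L^1$ dominating function. The limit exists only because of a cancellation: one must split off $\Dz k^{0}(x,0)$, use that $\Dz k^{0}(x,z)-\Dz k^{0}(x,0)=O(z)$ to tame one piece, and evaluate $\Dz k^{0}(x,0)\int(\alpha(z)+i\nu)^{-1}dz$ explicitly as a principal value plus a logarithm --- this is precisely what produces the formulas (\ref{eq:vvtilde})--(\ref{eq:wwtilde}). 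Your intuition that only a square-integrable logarithmic singularity survives is correct, but the mechanism you cite would not deliver it.
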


\begin{rmk}
The limits $\mathbf{U}_2^{\theta,\pm}$
are solutions of (\ref{sys0:hatmu}) in the sense of distribution.
they will be called
the singular solutions.
\end{rmk}

\begin{proof}

We consider firstly the case $\nu\downarrow 0$.
Some parts of the proof are already evident,
essentially
for quantities which  are regular enough ($V_2^{\theta,\nu}$ and
$W_2^{\theta,\nu}$) or for regions where all functions are regular 
(typically $x>0$).
Therefore the whole point is to pass to the limit in the singular
part of the solution $U_2^{\theta,\nu}$.
We will make wide use of the equivalence between the integral formulation
of proposition
\ref{prop:bd1} and the differential formulation
(\ref{sys0:hatmu}).

$\bullet$ {\bf Passing to the weak limit:}
By continuity of the first 
basis function with respect to $\nu$, one can pass to the limit
concerning  $(V_2^{\theta,\nu}(H), W_2^{\theta,\nu}(H)   )$.
One gets
that  $(v,w)=(V_2^{\theta,0^+}(H),
 W_2^{\theta,0^+}(H)   )$
is the unique solution of the linear system
\begin{equation} \label{eq:bd141}
\left\{
\begin{array}{ll}
-v W_1^{\theta}(H) +w  V_1^{\theta}(H)=1, \\
v W_3^{\theta}(H)- w  V_3^{\theta}(H)=0,
\end{array}
\right.
\end{equation}
where the coefficients are defined in terms of the first basis
function for $\nu=0$.
By continuity away from the singularity at $x=0$, one has that
$
\mathbf{U}_2^{\theta,\nu}
\rightarrow
\mathbf{U}_2^{\theta}
$
in 
$L^\infty(\epsilon, H)$ for all $\epsilon>0$.
Using (\ref{eq:bd137}) it is clear
that $U^{\theta,\nu}-\frac1{\alpha(\cdot)+i\nu}
$ is bounded in $L^2(-L,H)$ 
uniformly   with respect to $\nu$. Therefore
there exists a limit function denoted as
$u_2^{\theta,0^+}$ such that for a subsequence:
$
U_2^{\theta,\nu}-\frac1{\alpha(\cdot)+i\nu}
\rightarrow_{\rm weak} 
u_2^{\theta,0^+}
$ in $L^2(-L,H)$. 
Moreover the first derivative of 
$U_2^{\theta,\nu}$ is bounded in 
$L^2(-L,-\epsilon)$ by virtue of (\ref{eq:bd137.4}). 
Therefore  
$
U_2^{\theta,\nu}
\rightarrow_{\rm strong}
\frac1{\alpha(\cdot)}+ 
u_2^{\theta,0^+}
$ in $ L^2(-L,-\epsilon)$
at least for a subsequence.
Considering the integral relations (\ref{eq:bd11}-\ref{eq:bd12}),
these subsequences are such that 
\begin{equation} \label{new1}
V_2^{\theta,\nu}(x)
\rightarrow
v_2^{\theta,0^+}(x),
\end{equation}
and
\begin{equation}\label{new2}
W_2^{\theta,\nu}(x)
\rightarrow
w_2^{\theta,0^+}(x),
\end{equation}
with  the convergence  uniform in compact sets of $(-L, H)\setminus\{0\} $.
The limits in  (\ref{new1}), (\ref{new2}) also hold in the strong topology of  $L^2(-L,H)$.
To be more complete we detail hereafter some formulas
which can be derived for these functions.
Let us consider  $0 < \epsilon$ a real number, a priori small, so that
   $\alpha(x)$ is invertible on the interval
$[-\epsilon,\epsilon]$.
 We define
$\beta(z):=
1/ \alpha'(\alpha^{-1}(z))$ with $\alpha^{-1}$ the inverse function of $ \alpha$. 
Let us consider the principal branch of the complex logarithm.
One can check that
$$
v_2^{\theta, 0^+}:= a_H \, A_0(x) +b_H\, B_0(x)+\int_H^x
 {\mathcal D}^{\theta}_x (k^0(x,z)- k^0(x,0)) \left( \frac{1}{\alpha(z)}
 +u_2^{\theta,+}(z)\right)+
 $$
 $$
\int_H^x\,
 {\mathcal D}^{\theta}_x k^0(x,0)  u_2^{\theta,+}(z)+ \tilde{v}(x),
 $$
 where the function $\tilde{v}$ is 
 $$
\tilde{v}(x):=
\int_H^x \,{\mathcal D}^{\theta}_x k^0(x,0) \,\frac{1}{\alpha(z)}\, d z,\quad  \mbox{for } x  > 0,
$$
and on the other side of the singularity 
\begin{equation} \label{eq:vvtilde}
\tilde{v}(x):= \mathcal D^{\theta}_x k^0(x,0)\left[ \int_x^{ -\epsilon}\, \frac{1}{\alpha(z)}\, dz + \ln \alpha(\epsilon)  \,\beta(\epsilon)- \right.
\end{equation}
$$\left.
\ln\alpha(-\epsilon) \,\beta(-\epsilon)+
\int_{\alpha(\epsilon)}^{\alpha(-\epsilon)}\, \ln(z)\, \beta'(z)\, dz\right]+  \int_\epsilon^H \,{\mathcal D}^{\theta}_x k^0(x,0) \,\frac{1}{\alpha(z)}\,
 d z,\quad   \mbox{for }x  < 0.
$$
Similarly one has 
$$
w_2^{\theta, 0^+}:= a_H \, A'_0(x) +b_H\, B'_0(x)+\int_H^x
\partial_x {\mathcal D}^{\theta}_x (k^0(x,z)- k^0(x,0)) \left( \frac{1}{\alpha(z)} +u_2^{\theta,+}(z)\right)
 $$
 $$
+
\int_H^x\,
 \partial_x {\mathcal D}^{\theta}_x k^0(x,0)  u_2^{\theta,+}(z)+ \tilde{w}(x),
 $$
 with on one side 
 $$
\tilde{w}(x):=
\int_H^x \,\partial_x{\mathcal D}^{\theta}_x k^0(x,0) \,\frac{1}{\alpha(z)}\, d z,\quad  \mbox{for } x  > 0,
$$
and on the other side of the singularity
\begin{equation} \label{eq:wwtilde}
\tilde{w}(x):= \partial_x{\mathcal D}^{\theta}_x k^0(x,0) \left[ \int_x^{ -\epsilon}\, \frac{1}{\alpha(z)}\, dz + \ln \alpha(\epsilon)  \,\beta(\epsilon)- \right.
\end{equation}
$$\left.
\ln\alpha(-\epsilon) \,\beta(-\epsilon) +
\int_{\alpha(\epsilon)}^{\alpha(-\epsilon)}\, \ln(z)\, \beta'(z)\, dz\right]
+  \int_\epsilon^H \,\partial_x{\mathcal D}^{\theta}_x k^0(x,0) \,\frac{1}{\alpha(z)}\, d z,\quad  \mbox{for } x  < 0.
$$
These weak or strong limits are naturally weak solutions
of the initial system (\ref{sys0:hatmu}): 
denoting for simplicity
$
(u_2,v_2,w_2)=
(u_2^{\theta,0+}, v_2^{\theta,0+},    w_2^{\theta,0+} )$, 
these functions are solutions of 
\begin{equation} \label{eq:bd145}
\left\{
\begin{array}{lll}
\displaystyle 
\int  w_2 \varphi_1 dx 
+ i\theta 
\; P.V.  \int \left( \frac1{\alpha} +u_2 \right)  \varphi_1  dx-
 \frac{\theta \pi}{\alpha'(0)  } \varphi_1(0)
+
\int v_2  \varphi_1'  dx =0 , \\
\displaystyle 
i\theta \int w_2 \varphi_2 dx  -\int (\alpha  u_2 +1)\varphi_2
dx
 -i\int \delta   v_2 \varphi_2 dx =0,\\ 
\displaystyle
\int w_2 \varphi_3 ' dx
 + i \; P.V. \int \delta  \left( \frac1{\alpha}+
u_2 \right)\varphi_3 dx -
  \frac{ \delta(0) \pi}{\alpha'(0)  } \varphi_3(0)  
\\
~ \hspace{7.cm}  - \int \alpha v_2 \varphi_3 dx=0,
\end{array}
\right.
\end{equation}
for any sufficiently smooth test
functions with compact support, for example
 $(\varphi_1,\varphi_2,\varphi_3)\in {\cal C}^1_0(-L,H)$.
To pass to the limit we have used 
that in distribution sense, $\lim_{\nu \rightarrow 0^+} \frac{1}{\alpha(x)+i \nu}= P.V \frac{1}{\alpha(x)}+i \pi  \frac{1}{\alpha'(0)} \delta_D$.
 The signs of 
$-
 \frac{\theta \pi}{\alpha'(0)  } \varphi_1(0)$
and $-
  \frac{ \delta(0) \pi}{\alpha'(0)  } \varphi_3(0)  $
are compatible with the fact the limit is for positive $\nu$. The principal value  is defined as:
 $$ 
 P.V. \int \frac{1}{\alpha(x)}\, \varphi (x)\, dx:= \lim
_{\epsilon \downarrow 0} \left( \int_{-L}^{\rho(-\epsilon)}       \frac{1}{\alpha(x)}\, \varphi (x)     +\int_{\rho(\epsilon)}^H       \frac{1}{\alpha(x)}\, \varphi (x)\right)\, dx,
$$
where $\alpha(\rho(\mp \epsilon))= \pm \epsilon.$

$\bullet$ {\bf Uniqueness of the weak limit:} If there is another triplet
$(\widetilde{u_2},\widetilde{v_2},\widetilde{w_2})$
solution of the same weak formulation (\ref{eq:bd145}),  
then the difference 
$
(\widehat{u_2},\widehat{v_2},\widehat{w_2})=
(\widetilde{u_2}-u_2,\widetilde{v_2}-v_2,\widetilde{w_2}-w_2)
$
satisfies
\begin{equation} \label{eq:bd146}
\left\{
\begin{array}{lll}
\displaystyle 
\int  \widehat{ w_2 }\varphi_1 dx 
+ i\theta  \int \widehat{ u_2}
  \varphi_1  dx
+
\int \widehat{ v_2 } \varphi_1'  dx =0 , \\
\displaystyle 
i\theta \int \widehat{ w_2 }
\varphi_2 dx  -\int \alpha  \widehat{ u_2 }\varphi_2
dx
 -i\int \delta(x)  \widehat{  v_2 }\varphi_2 dx =0,\\ 
\displaystyle
\int \widehat{ w_2 }\varphi_3 ' dx
 + i \int \delta  
\widehat{ u_2 } \varphi_3 dx - \int \alpha(x) \widehat{ v_2 } \varphi_3\,dx=0,
\end{array}
\right.
\end{equation}
Because the limit is strong in $L^\infty_loc(]0,H[)$, 
$(\widehat{u_2},\widehat{v_2},\widehat{w_2})=(0,0,0)$ for $x>0$.
For $x<0$, we deduce from
(\ref{eq:bd146}) that  
 $(\widehat{u_2},\widehat{v_2},\widehat{w_2})$
is a solution of the  X-mode equations.
Therefore
these functions can be expressed as a linear combination
of the first and second basis functions for $x<0$.
Since $\widehat{u_2}\in L^2(-L,0)$ is non singular,  
only the first basis function is involved that is 
$$
(\widehat{u_2},\widehat{v_2},\widehat{w_2})
= \lambda
\left(U_1^{\theta}, V_1^{\theta}, W_1^{\theta}   \right)
\qquad x<0.
$$
From (\ref{eq:bd146}) we get
for example
$$
\displaystyle
\int_{-L}^0 \widehat{ w_2 }\varphi_3 ' dx
 + i \int_{-L}^0 \delta  
\widehat{ u_2 } \varphi_3 dx - \int_{-L}^0 \alpha(x) \widehat{ v_2 } \varphi_3\,dx=0
$$
where $\varphi_3(-L)=0$ and  
 $\varphi_3(0)$ is arbitrary.
We integrate by parts
$$
\displaystyle
\int_{-L}^0 \left( - \widehat{ w_2 }'+
 i   \delta  
\widehat{ u_2 } -\alpha \widehat{v_2} \right) \varphi_3 dx +
\widehat{\omega_2}(0)\varphi_3(0)=0.
$$
Since $
(\widehat{u_2},\widehat{v_2},\widehat{w_2})$ is a non singular
solution of the X-mode equations, one has that
$ - \widehat{ w_2 }'+
 i   \delta  
\widehat{ u_2 } -\alpha \widehat{v_2} =0$.
Finally 
$
\widehat{\omega_2}(0)\varphi_3(0)= 0$. 
Since we can take $\varphi_3(0)\neq 0$, it follows that $0=\widehat{\omega_2}(0)=\lambda
W_1^{\theta}(0)$. Considering the normalization
(\ref{H17}) one gets that $\lambda=0$. 
Therefore
 $
(\widehat{u_2},\widehat{v_2},\widehat{w_2})=(0,0,0)$.
It means that the weak limit is unique: all the sequence
tends to the same weak limit.

$\bullet$ {\bf Regularity:} By Theorem \ref{thm5.1}
the limit  belongs to
$H^1\left( \left[-L, -\epsilon\right]\cup
\left[\epsilon, \infty\right) \right)^3$
for all $\epsilon>0$. 

$\bullet$ {\bf Limit $\nu\uparrow 0$:}
The  sign
 of the Dirac mass is changed in the final result
of the proposition since 
$\lim_{\nu \rightarrow 0^-} \frac{1}{\alpha(x)+i \nu}= P.V \frac{1}{\alpha(x)}
-i \pi  \frac{1}{\alpha'(0)} \delta_D$.

\end{proof}

\section{The limit spaces $\mathbb{X}^{\theta,\pm}$} \label{sec:6}

We can now define the limit spaces in which the limit basis functions
live.

\subsection{The space $\mathbb{X}^{\theta,+}$  }
Passing to the limit $\nu\rightarrow 0^+$,
the limit space 
$
\mathbb{X}^{\theta,+}
$
is
\begin{equation} \label{eq:bd150}
\mathbb{X}^{\theta,+}
=\mbox{Span}\left\{ \mathbf{U}_1^{\theta},
 \mathbf{U}_2^{\theta,+}
\right\}\subset
 H^1_{\rm loc}  \left(\left(-L,\infty\right)\setminus\{0\}   \right).
\end{equation}

\subsection{The space $\mathbb{X}^{\theta,-}$  } \label{sec:x-}
 
It is of course possible  do all the analysis
with negative $\nu<0$ and to study the limit
$\nu\rightarrow 0^-$.
The first basis function is exactly the same.
The second basis function  is chosen
exponentially decreasing at infinity and such that
$$
i\nu U_2^{\theta,\nu}=1 \qquad \nu<0.
$$ 
The generalization of the preliminary result (\ref{eq:bd137}) is 
straightforward
\begin{equation} \label{eq:bd157}
\left\|
 U_2^{\theta,\nu}-
\frac{1}{\alpha(\cdot)+i\nu   }
\right\|_{ L^p(-L,H)} \leq C_p^\theta, 
\qquad -1\leq \nu < 0. 
\end{equation}
Passing to the limit $\nu\rightarrow 0^-$, it defines  the
limit  space
$\mathbb{X}^{\theta,-}$
\begin{equation} \label{eq:bd151}
\mathbb{X}^{\theta,-} 
=
\mbox{Span}\left\{ \mathbf{U}_1^{\theta},
 \mathbf{U}_2^{\theta,-}
\right\}
\subset   H^1_{\rm loc}  \left( \left(-L,\infty\right)\setminus\{0\}   \right).
\end{equation}

\subsection{Comparison of the limits}

The first basis function  $ U_1^{\theta}$ 
is independent of the sign
and belongs
to $\mathbb{X}^{\theta,+}
\cap \mathbb{X}^{\theta,-}$.
Since the limit equation and the normalization at $x=H$ are the same, we readily observe that
the limits of the second basis functions are identical 
for $0<x$ 
\begin{equation} \label{eq:bd172}
 U_2^{\theta,+}(x)=  U_2^{\theta,-}(x) 
 \qquad 0<x.
\end{equation}

So the main point  is to determine  the difference
between  the limit of the two singular  functions for $x<0$.
A first remark is that 
$ U_2^{\theta,+}$,   $U_2^{\theta,-}$ and $ U_1^{\theta}$ are three
solutions of the same problem for $x<0$. Since we know
that the dimension of the space   of solution is two,
these functions are necessarily  linearly dependent.

\begin{prop}
One has 
\begin{equation} \label{eq:bd170}
 U_2^{\theta,+}(x)- U_2^{\theta,-}(x)= \frac{-2 i \pi}{\alpha'(0)}   U_1^{\theta}(x)
 \qquad x<0.
\end{equation}
\end{prop}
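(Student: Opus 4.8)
The plan is to exploit the observation made just before the statement: $\mathbf{U}_2^{\theta,+}$, $\mathbf{U}_2^{\theta,-}$ and $\mathbf{U}_1^{\theta}$ are three solutions of the same two-dimensional X-mode system on $x<0$, hence linearly dependent. First I would introduce the difference $\mathbf{D}=(\widehat u,\widehat v,\widehat w):=\mathbf{U}_2^{\theta,+}-\mathbf{U}_2^{\theta,-}$. By \eqref{eq:bd172} this triplet vanishes identically for $x>0$. For $x<0$ the two principal values $P.V.\frac1\alpha$ cancel and the Dirac mass $\delta_D$ is supported only at the origin, so $\widehat u\in L^2(-L,0)$ is non-singular; by the previous proposition $\mathbf{D}$ is then a genuine (strong) solution of the X-mode equations on $(-L,0)$. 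Since the only non-singular solution there, up to scaling, is the regular one $\mathbf{U}_1^{\theta}$ (the $\frac1\alpha$ part of the singular solution is not $L^2$ near $0$), there is a constant $\lambda$ with
\[
\mathbf{D}(x)=\lambda\,\mathbf{U}_1^{\theta}(x),\qquad x<0 .
\]
The whole problem reduces to computing $\lambda$.

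To determine $\lambda$ I would compare the two weak formulations \eqref{eq:bd145}: the one for $\nu\downarrow 0$ carries the Dirac terms $-\frac{\theta\pi}{\alpha'(0)}\varphi_1(0)$ and $-\frac{\delta(0)\pi}{\alpha'(0)}\varphi_3(0)$, while the one for $\nu\uparrow 0$ carries the opposite signs. Subtracting, the principal-value and regular contributions recombine into ordinary integrals of $\widehat u,\widehat v,\widehat w$, whereas the Dirac terms do \emph{not} cancel and leave $-\frac{2\theta\pi}{\alpha'(0)}\varphi_1(0)$ and $-\frac{2\delta(0)\pi}{\alpha'(0)}\varphi_3(0)$. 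It is cleanest to use the third of these identities: splitting each integral at the origin, the contributions from $x>0$ vanish since $\mathbf{D}\equiv0$ there, and I integrate $\int_{-L}^0\widehat w\,\varphi_3'$ by parts (using $\varphi_3(-L)=0$) to produce the boundary term $\widehat w(0^-)\varphi_3(0)$. Because $\mathbf{D}$ solves the strong X-mode equation on $(-L,0)$, the remaining volume integral $\int_{-L}^0(-\widehat w'+i\delta\widehat u-\alpha\widehat v)\varphi_3$ vanishes, and what survives is
\[
\Big(\widehat w(0^-)-\tfrac{2\delta(0)\pi}{\alpha'(0)}\Big)\varphi_3(0)=0 .
\]

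Since $\varphi_3(0)$ is arbitrary this forces $\widehat w(0^-)=\frac{2\delta(0)\pi}{\alpha'(0)}$. On the other hand $\widehat w=\lambda W_1^{\theta}$ for $x<0$, and the normalization \eqref{H17} gives $W_1^{\theta}(0)=i\delta(0)$, so $\widehat w(0^-)=\lambda\,i\delta(0)$. Equating and dividing by $\delta(0)$ — legitimate precisely because $\delta(0)\neq0$, one of the standing hypotheses — yields $\lambda\,i\delta(0)=\frac{2\delta(0)\pi}{\alpha'(0)}$, i.e. $\lambda=\frac{2\pi}{i\alpha'(0)}=\frac{-2i\pi}{\alpha'(0)}$. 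Reading off the first component of $\mathbf{D}=\lambda\mathbf{U}_1^{\theta}$ gives exactly \eqref{eq:bd170}. As a consistency check the same value comes out of the first weak identity, whose boundary term is $\widehat v(0^-)=\lambda V_1^{\theta}(0)=\lambda i\theta$ matched against $\frac{2\theta\pi}{\alpha'(0)}$ (for $\theta\neq0$). The only delicate point, and the place where care is genuinely required, is the sign bookkeeping of the two Dirac masses upon subtracting the weak formulations together with the correct integration-by-parts boundary term at the origin; the surrounding algebra is routine.
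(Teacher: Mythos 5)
Your proposal is correct and follows essentially the same route as the paper: the key step — determining the proportionality constant by subtracting the two weak formulations \eqref{eq:bd145}, keeping only the third identity, integrating by parts on $(-L,0)$ to isolate the boundary term $\widehat w(0^-)\varphi_3(0)$ against the residual Dirac contribution $\tfrac{2\delta(0)\pi}{\alpha'(0)}\varphi_3(0)$, and then using $W_1^\theta(0)=i\delta(0)$ — is exactly the paper's computation. The only (immaterial) difference is that you justify $\mathbf{U}_2^{\theta,+}-\mathbf{U}_2^{\theta,-}\propto\mathbf{U}_1^\theta$ by the non-singularity of the difference in the two-dimensional solution space, whereas the paper subtracts the two Wronskian relations \eqref{eq:bd48}; both arguments are valid and both appear elsewhere in the paper.
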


\begin{proof}
We notice that the Wronskian relations 
(\ref{eq:bd48}) are  the same
at the limit $\nu=0^\pm$.
By subtraction
$$
V_1^{\theta}(x)\left( W_2^{\theta,+}(x)- W_2^{\theta,-}(x)\right)
  -
W_1^{\theta}(x)\left(   V_2^{\theta,\nu}(x)
-
V_2^{\theta,-}(x)
 \right) =0.
$$
It show that the difference is proportional
to the first basis function
\begin{equation} \label{eq:diff}
U_2^{\theta,+}(x)- U_2^{\theta,-}(x)= \gamma   U_1^{\theta}(x)
 \qquad x<0.
\end{equation}
 It remains to determine $\gamma$.
We already now that the limit $\nu\rightarrow 0^+$ can be characterized
by (\ref{eq:bd145}).
The third equation writes
$$
\int w_2^+ \varphi_3 ' dx
 + i \; P.V. \int \delta  \left( \frac1{\alpha(x)}+
u_2^+ \right)\varphi_3 dx -
  \frac{ \delta(0) \pi}{\alpha'(0)  } \varphi_3(0)  
  - \int \alpha(x) v_2^+  \varphi_3 dx=0
$$
where $(u_2^+,v_2^+,w_2^+)$
refers to the non singular part of the limit $\nu\rightarrow 0^+$.
The equivalent equation for 
the non singular part $(u_2^-,v_2^-,w_2^-)$
of the limit $\nu\rightarrow 0^-$
is
$$
\int w_2^- \varphi_3 ' dx
 + i \; P.V. \int \delta  \left( \frac1{\alpha(x)}+
u_2^- \right)\varphi_3 dx +
  \frac{ \delta(0) \pi}{\alpha'(0)  } \varphi_3(0)  
  - \int \alpha(x) v_2^-  \varphi_3 dx=0.
$$
By subtraction, one gets
$$
\int (w_2^+-w_2^-) \varphi_3 ' dx
 + i  \int \delta 
(u_2^+-u_2^-)\varphi_3 dx 
-
  \frac{ 2 \delta(0) \pi}{\alpha'(0)  } \varphi_3(0)  
  - \int \alpha(x) (v_2^+-v_2^-) \varphi_3 dx=0.
$$
Due to (\ref{eq:bd172}) these
 differences vanish for $x>0$.
We get
$$
\int_{-L}^0 (w_2^+-w_2^-) \varphi_3 ' dx
 + i  \int_{-L}^0 \delta 
(u_2^+-u_2^-)\varphi_3 dx 
-
  \frac{ 2 \delta(0) \pi}{\alpha'(0)  } \varphi_3(0)  
  - \int_{-L}^0\alpha(x) (v_2^+-v_2^-) \varphi_3 dx=0
$$
where $\varphi_3$ is a smooth 
test function that vanishes at $-L$.
Integration by part yields
$$
\int_{-L}^0 
\left(
- (w_2^+-w_2^-)'+ i \delta  (u_2^+-u_2^-)
-\alpha(x) (v_2^+-v_2^-)
\right)
\varphi_3  dx
$$
$$
-
  \frac{ 2 \delta(0) \pi}{\alpha'(0)  } \varphi_3(0)  
+(w_2^+-w_2^-) (0)\varphi_3(0)=0.
$$
Due to (\ref{eq:diff}) 
one has that
$
- (w_2^+-w_2^-)'+ i \delta  (u_2^+-u_2^-)
-\alpha(x) (v_2^+-v_2^-)=0 $ for 
$x<0$. 
Since $\varphi_3(0)$ is arbitrary, it means that
$
w_2^+(0)-w_2^-(0)= \frac{ 2 \delta(0) \pi}{\alpha'(0)  }$.
We obtain 
$
\gamma  W_1^\theta(0)=  \frac{ 2 \delta(0) \pi}{\alpha'(0)  }$,
that is
$
i \delta(0) \gamma = \frac{ 2 \delta(0) \pi}{\alpha'(0)  }$. 
Therefore $\gamma=\frac{-2 i \pi}{\alpha'(0) }$.
The claim is proved.
\end{proof}

\section{Proof of the main theorem
\label{sec:main}
}

All the information about the first and second basis
functions is now used to construct the
solution of the system (\ref{sys0:hatmu}) with the boundary
condition (\ref{eq:bc}). 
The function $g$ depends only of the vertical variable $y$.
Under convenient condition $g$ admits the Fourier representation
\begin{equation} \label{eq:fu}
g(y)=\frac1{2\pi}
\int_{\mathbb{R}} \widehat{g}(\theta)e^{i\theta y   }d\theta,
\end{equation}
see $(7.1.4)$ in \cite{horm} for this convention.
We first consider a small but non zero regularization parameter
 $\nu>0$.
For the sake of simplicity we will assume that the transversality
condition is satisfied for all $\theta$ in the support of 
$\widehat{g}$
\begin{equation}  \tag{H6} \label{H6}
\left| \sigma(\theta) \right| \geq c> 0 \quad \forall \theta \in \mbox{supp}
\left(\widehat{g}\right).
\end{equation}
It is just a convenient 
uniform version of the point-wise transversality condition (\ref{eq:bd63}).

\subsection{One Fourier mode} \label{ss1}

For one Fourier mode, one needs to consider
the solution of (\ref{sys0:hatmu}) with
boundary condition 
$$
\widehat{W}^\nu(-L)+i  \sgn(\nu)\lambda \widehat{V}^\nu(-L)= \widehat{g}.
$$
Since we add of course that the solution must decrease (exponentially)
at $x\approx \infty$  to guarantee that no energy comes from infinity,
the solution is proportional to the second basis
function. That is there is a coefficient
$\gamma^{\theta,\nu}$ such that 
$
\widehat{\mathbf{U}}^\nu
=   
\gamma^{\theta,\nu}
\mathbf{U}_2^{\theta,\nu}
$. 
The coefficient satisfies the equation
$$
\gamma^{\theta,\nu} \left( W_2^{\theta,\nu}(-L)+i \sgn(\nu)\lambda V_2^{\theta,\nu}(-L) \right)=\widehat{g}(\theta)
$$
that is
$
 \gamma^{\theta,\nu} = \frac{ \widehat{g}(\theta)}{   \tau^{\theta,\nu}}  
$
from which it is clear that we must study the coefficient/function
\begin{equation} \label{eq:fu2}
\tau^{\theta,\nu}=W_2^{\theta,\nu}(-L)+i \sgn(\nu)\lambda V_2^{\theta,\nu}(-L).
\end{equation}

\begin{prop} \label{prop:pp1}
Assume (\ref{H6}).
For every compact 
set $S \subset \mathbb R$, there exists $\epsilon>0$, $\tau^+$   and $\tau_->0$ such that 
$ \tau^-\leq \left| \tau^{\theta,\nu} \right| \leq  \tau^+$
for $  0 <\nu\leq \epsilon$ and $\theta\in S$. 
\end{prop}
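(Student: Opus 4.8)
The plan is to bound $\tau^{\theta,\nu}$ above and below separately, both uniformly in $\theta\in S$ and for small $\nu>0$. Throughout I write $V_2=V_2^{\theta,\nu}(-L)$ and $W_2=W_2^{\theta,\nu}(-L)$, so that by \eqref{eq:fu2}, with $\sgn(\nu)=+1$, one has $\tau^{\theta,\nu}=W_2+i\lambda V_2$. The upper bound is immediate: since $-L$ lies in $[-L,0)$, away from the singularity, estimate \eqref{eq:bd137.2} gives $|V_2|+|W_2|\leq C^\theta$ with $C^\theta$ continuous in $\theta$, so taking the supremum over the compact set $S$ yields a constant $C_S$ and hence $|\tau^{\theta,\nu}|\leq (1+\lambda)C_S=:\tau^+$.

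For the lower bound I would exploit the energy identity \eqref{eq:bd45}. Applied to $\mathbf{U}_2^{\theta,\nu}\in\mathbb{X}^{\theta,\nu}$ with $M=-L$ and $N\to\infty$ — the boundary term at $N$ vanishing because this function decays exponentially at infinity — it reads
$$\nu\int_{-L}^{\infty}\left(|U_2^{\theta,\nu}|^2+|V_2^{\theta,\nu}|^2\right)dx=\text{Im}\left(W_2\,\overline{V_2}\right).$$
Restricting the integral to $(-L,H)$ and invoking \eqref{eq:bd87} with $Q(\mathbf{U}_2^{\theta,\nu})=1$ gives $\nu\int_{-L}^{H}|U_2^{\theta,\nu}|^2\,dx\geq \frac{\pi}{|\alpha'(0)|}-C_\theta\varepsilon(\nu)\|H\|^2$. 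Under the uniform transversality \eqref{H6}, estimate \eqref{eq:bd133} bounds $\|H\|=|V_2^{\theta,\nu}(H)|+|W_2^{\theta,\nu}(H)|$ uniformly on $S$; since $C_\theta$ is continuous in $\theta$ and $\varepsilon(\nu)\to 0$ with $\varepsilon$ independent of $\theta$, there is $\epsilon>0$ such that for all $0<\nu\leq\epsilon$ and all $\theta\in S$,
$$\text{Im}\left(W_2\,\overline{V_2}\right)\geq \frac{\pi}{2|\alpha'(0)|}>0.$$

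The final step converts this positivity into a lower bound on $|\tau^{\theta,\nu}|$. Writing $W_2=\tau^{\theta,\nu}-i\lambda V_2$ gives the algebraic identity $\text{Im}(\tau^{\theta,\nu}\overline{V_2})=\text{Im}(W_2\overline{V_2})+\lambda|V_2|^2$, so combining with the previous inequality and the elementary bound $\text{Im}(\tau^{\theta,\nu}\overline{V_2})\leq|\tau^{\theta,\nu}|\,|V_2|$ yields
$$|\tau^{\theta,\nu}|\,|V_2|\geq \frac{\pi}{2|\alpha'(0)|}+\lambda|V_2|^2.$$
In particular $V_2\neq 0$, and dividing by $|V_2|$ and minimizing the right-hand side over $|V_2|>0$ by the arithmetic--geometric mean inequality produces the $\theta$- and $\nu$-independent bound $|\tau^{\theta,\nu}|\geq \sqrt{2\pi\lambda/|\alpha'(0)|}=:\tau^->0$, which completes the proof.

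The main obstacle is securing uniformity in $\theta$ at every stage: it rests precisely on the fact that in \eqref{eq:bd87} and \eqref{eq:bd133} the constants depend \emph{continuously} on $\theta$, hence are bounded on the compact $S$, while $\varepsilon(\nu)$ is independent of $\theta$ with $\varepsilon(0)=0$. The conceptual point worth stressing is that the lower bound is the mathematical incarnation of the limit absorption principle: the strictly positive heating $\nu\|U_2^{\theta,\nu}\|_{L^2(-L,H)}^2\to \pi/|\alpha'(0)|$ forces the transfer coefficient away from zero, which is exactly what makes the representation formula \eqref{eq:fu11} well defined. Notably, this route never requires establishing the convergence $\tau^{\theta,\nu}\to\tau^{\theta,+}$ nor its uniformity, since the positivity is extracted directly at fixed $\nu>0$.
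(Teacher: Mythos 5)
Your proof is correct and follows essentially the same route as the paper: the upper bound comes from the $L^\infty$ estimate (\ref{eq:bd137.2}), and the lower bound from the energy identity (\ref{eq:bd45}) combined with (\ref{eq:bd87}), the normalization $Q(\mathbf{U}_2^{\theta,\nu})=1$, and the uniform bound (\ref{eq:bd133}) supplied by (\ref{H6}). The only (harmless) deviation is the final step, where you extract $|\tau^{\theta,\nu}|\geq\sqrt{2\pi\lambda/|\alpha'(0)|}$ via the arithmetic--geometric mean inequality rather than dividing by the uniform bound on $|V_2^{\theta,\nu}(-L)|$ as the paper does.
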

\begin{proof}
The upper bound is a direct consequence of 
(\ref{eq:bd137.2}). 
To prove the lower bound, a useful result 
is the formula which comes from (\ref{eq:bd45})
$$ 
\mbox{Im}\left( W_2^{\theta,\nu}(-L) \overline{V^{\theta,\nu}_2(-L)} \right)
 \geq \nu \int_{-L}^\infty \left|U^{\theta,\nu}_2(x)   \right|^2 dx
$$
Combining with (\ref{eq:bd87}) and 
$Q\left( \mathbf{U}^{\theta,\nu}  \right)=1$ (by construction),
  it yields 
$
\mbox{Im}\left( W_2^{\theta,\nu}(-L) \overline{V^{\theta,\nu}_2(-L)} \right)
\geq \tau_->0$. 
Plugging the definition of $\tau^{\theta,\nu}$ inside this inequality, one gets
$$
\mbox{Im}\left( \tau^{\theta,\nu  } \overline{V^{\theta,\nu}_2(-L)} \right)
\geq \tau_- + \sgn(\nu)\lambda \left|V^{\theta,\nu}(-L)  \right|^2\geq \tau_->0.
$$
Therefore
$
\left|
V_2^{\theta,\nu}(-L)
\right|
\times 
\left|\tau(\theta,\nu)   \right|
\geq  \tau_-  $. 
The  $L^\infty$ bounds (\ref{eq:bd137.2}) shows that  there exists $C>0$ such that
$
C \left|\tau(\theta,\nu)   \right|
\geq  \tau_-$. 
\end{proof}

By \eqref{new1}, \eqref{new2},
\begin{equation} \label{eq:fu2b}
\tau^{\theta, +} := W_2^{\theta,0^+}(-L)+i  \sgn(\nu)\lambda V_2^{\theta,0^+}(-L)= \lim_{\nu  \rightarrow 0^+}
\tau^{\theta,\nu}.
\end{equation}

\begin{prop}
Assume (\ref{H6}).
For every compact set $S \subset \mathbb R$, there exists $\epsilon>0$, $\tau^+$   and $\tau_->0$ such that 
$ \tau^-\leq \left| \tau^{\theta,+} \right| \leq  \tau^+$
for $  0 <  \epsilon$ and $\theta\in S$. 
\end{prop}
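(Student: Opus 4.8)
The plan is to recognize that this proposition is nothing but the $\nu\downarrow 0$ limiting form of Proposition \ref{prop:pp1}, and hence to avoid any new singular analysis entirely. The strategy is to transfer the uniform two-sided bounds already established for the family $\tau^{\theta,\nu}$ to their limit $\tau^{\theta,+}$, using the identity (\ref{eq:fu2b}), which exhibits $\tau^{\theta,+}$ as $\lim_{\nu\downarrow 0}\tau^{\theta,\nu}$. All the work has in effect already been done; what remains is a clean passage to the limit.

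Concretely, I would first fix a compact set $S$ and invoke Proposition \ref{prop:pp1} to produce $\epsilon>0$ together with constants $0<\tau^-\le\tau^+$, both independent of $\nu\in(0,\epsilon]$ and of $\theta\in S$, such that $\tau^-\le\left|\tau^{\theta,\nu}\right|\le\tau^+$. Next, for each fixed $\theta\in S$, I would appeal to (\ref{eq:fu2b}) --- which itself rests on the convergences (\ref{new1}) and (\ref{new2}) evaluated at the point $x=-L$, safely away from the singularity at the origin --- to get $\tau^{\theta,\nu}\to\tau^{\theta,+}$ as $\nu\downarrow 0$, and therefore $\left|\tau^{\theta,\nu}\right|\to\left|\tau^{\theta,+}\right|$ by continuity of the modulus. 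Finally I would let $\nu\downarrow 0$ inside the two-sided inequality: since the bounding constants do not depend on $\nu$, the limit inherits $\tau^-\le\left|\tau^{\theta,+}\right|\le\tau^+$, and since they do not depend on $\theta$ either, the very same constants serve simultaneously for every $\theta\in S$.

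The only point deserving a moment's care is the survival of the lower bound: passing to the limit in a strict inequality ordinarily yields only a non-strict one, but here the bound $\tau^-$ is a fixed positive number independent of $\nu$, so the limiting inequality $\left|\tau^{\theta,+}\right|\ge\tau^->0$ is already the desired strict positivity. I expect this to be essentially the whole argument; the genuinely delicate inputs --- the uniform-in-$\nu$ positivity of $\mathrm{Im}\!\left(W_2^{\theta,\nu}(-L)\overline{V_2^{\theta,\nu}(-L)}\right)$ drawn from (\ref{eq:bd45}) and (\ref{eq:bd87}), and the uniform $L^\infty$ control (\ref{eq:bd137.2}) at $x=-L$ --- were all discharged in Proposition \ref{prop:pp1} and need no repetition. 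I would also note that only pointwise-in-$\theta$ convergence of $\tau^{\theta,\nu}$ is required, since the constants furnished by Proposition \ref{prop:pp1} are already uniform in $\theta\in S$.
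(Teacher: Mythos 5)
Your proposal is correct and is exactly the paper's argument: the paper's own proof reads ``This is immediate from Proposition \ref{prop:pp1} and \eqref{eq:fu2b},'' i.e.\ it passes the $\nu$-uniform, $\theta$-uniform two-sided bounds of Proposition \ref{prop:pp1} to the limit $\tau^{\theta,+}=\lim_{\nu\downarrow 0}\tau^{\theta,\nu}$. You have merely written out the details that the paper leaves implicit.
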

\begin{proof} This is immediate from Proposition  \ref{prop:pp1} and \eqref{eq:fu2b}.
\end{proof}

\subsection{Fourier representation of the solution}

The solution of (\ref{sysmu}) with the boundary condition
(\ref{eq:bc}) is given by the inverse Fourier formula
\begin{equation} \label{eq:fu10}
\left(
\begin{array}{c}
E_x^\nu \\
E_y^\nu \\
W^\nu
\end{array}
\right)
(x,y)=\frac1{2\pi}
\int_{\mathbb{R}}
\frac{ \widehat{g}(\theta)}{   \tau^{\theta,\nu}}
\mathbf{U}_2^{\theta,\nu}(x) e^{i\theta y  }d\theta
\end{equation}
 where we assume that $ g \in L^2(\mathbb{R})$ and that $ \widehat{g}$ has compact support.
Since by Theorem  \ref{thm5.1} 
 $\left\| u^{\theta,+} \right\| \leq  C^\theta_2$, $\left\| v^{\theta,+} \right\| \leq  C^\theta_2,$$\left\| w^{\theta,+} \right\| \leq  C^\theta_2$ with $C^\theta_2$ a continuous function of $\theta$,  and  considering that 
$ \tau^{\theta,\nu}$  converges to  $\tau^{\theta,+}$
there is sufficient regularity to pass to the limit in (\ref{eq:fu10}). 
One gets
(\ref{eq:fu11}). The value of the resonant heating
(\ref{eq:limheat}) is obtained by passing to the limit
in the quadratic energy 
$$
{\cal Q}^+=
\lim_{\nu\rightarrow 0^+}
{\cal Q}(\nu)=
\omega\varepsilon_0
\lim_{\nu\rightarrow 0^+}
\mbox{ Im}\left(
\int_\Omega  
\left(\mathbf{E}, \underline{\underline{\varepsilon}}(\nu) \mathbf{E}   \right)
\right),
$$
that is with (\ref{eq:Qphys})-(\ref{eq:espsimp}) 
$$
{\cal Q}^+=
\omega\varepsilon_0
\lim_{\nu\rightarrow 0^+}
\nu \int |E_x^\nu(x,y)|^2 dx dy
=\frac{\omega\varepsilon_0}{2}
\int_{\mathbb{R}}  \frac{  \left|
\widehat{g}(\theta) \right|^2 }{ \left|
\alpha'(0)
\right|
 \left|  \tau^{\theta,+} \right|^2 }
d\theta>0.
$$
We obtain the result with the simplification
 $\omega=\varepsilon_0=1$.

\begin{rmk} 
Observe that  the singular solutions $ {\bf U}^{\theta, \pm}_2$ are the unique solutions of the following initial value problem:
Find a triplet 
$(u^{\theta,\pm}_2, v^{\theta,\pm}_2, w^{\theta,\pm}_2) \in L^{2}(-L, \infty)^3$
which satisfies  the constraints
$v^{\theta,\pm}_2(H)= V^{\theta,0}_3(H)$, $w^{\theta,\pm}_2(H)= W^{\theta,0}_3(H)$, and
$$
\left\{
\begin{array}{clc}
w_2^{\theta,\pm}-\frac{d}{dx}v^{\theta,\pm}_2   +i \theta u_2^{\theta,\pm}
  &=&-i \theta P.V.  \frac{1}{\alpha(x)} \pm \frac{\theta \pi}{\alpha'(0)}
 \delta_D,
  \\
i\theta  w_2^{\theta,\pm} -\alpha(x)  u^{\theta,\pm}_2 
 -i \delta(x)   v^{\theta,\pm}_2  &=
&1,
 \\
-\frac{d}{dx} w^{\theta,\pm}_2 +i \delta(x)
u^{\theta,\pm}_2    - \alpha(x) v^{\theta,\pm}_2 
&=&- iP.V.  \frac{\delta(x)}{\alpha(x)}\pm  \frac{\delta(0) \pi}{\alpha'(0)} \delta_D.
\end{array}
\right.
$$
We prove that this problem has an unique solution by the argument given to prove the uniqueness of the weak limits. For this purpose
observe that 
$$
 \left(\frac{1}{\alpha(x)}+ u_2^{\theta,\pm}, v_2^{\theta,\pm}, w_2^{\theta,\pm} \right)(x)= \left( U_3^{\theta,0}, V_3^{\theta,0}, W_3^{\theta,0}\right)(x)
\mbox{ for }x>0.
$$
We observe the similarity with the standard limiting absorption principle in
scattering theory. In scattering theory the solutions obtained by the
 limiting absorption principle are characterized as the unique solutions that
satisfy the radiation condition, i.e., they are uniquely determined by the
behavior at infinity.
Here, the singular solutions are uniquely determined 
by their behavior at $+\infty$ and by their singular part
$
P.V. \frac1{\alpha(x)} \pm \frac{i \pi}{\alpha'(0)} \delta_D
$
Note that it is natural that we have to specify
the singularity at
$ x = 0$ because our equations are degenerate at $x = 0$.
We think this principle could be used
for  practical computations.  It is however a little
more subtle since a boundary condition 
at finite distance $x=-L$ must be prescribed.
That is the singular part is itself
dependent on the boundary condition where the energy comes
in the system. Mathematically it corresponds to the 
coefficient $\tau^{\theta,+}$ in the representation formula
(\ref{eq:fu11}).
\end{rmk}

\end{document}